\documentclass[letterpaper,onecolumn,draftclsnofoot]{IEEEtran}
\usepackage{amsmath,amsfonts,amssymb,amsthm}
\usepackage{ascmac}
\usepackage{bm}
\usepackage{algorithmic}
\usepackage{algorithm}
\usepackage{array}
\usepackage{hhline}
\usepackage[caption=false,font=normalsize,labelfont=sf,textfont=sf]{subfig}
\usepackage{tabularx}
\usepackage{here}
\usepackage{multicol}
\usepackage{multirow}
\usepackage{longtable}
\usepackage{textcomp}
\usepackage{stfloats}
\usepackage{url}
\usepackage{verbatim}
\usepackage{graphicx}
\usepackage[table]{xcolor}
\usepackage{colortbl}
\usepackage{cite}
\newtheorem{theorem}{Theorem}[section]
\newtheorem{lemma}[theorem]{Lemma}
\newtheorem{corollary}[theorem]{Corollary}
\newtheorem{example}[theorem]{Example}
\newtheorem{definition}[theorem]{Definition}
\newtheorem{proposition}[theorem]{Proposition}

\newtheorem{construction}[theorem]{Construction}

\counterwithin{table}{section}

\allowdisplaybreaks

\def\bF{\mathbb{F}}
\def\bN{\mathbb{N}}

\def\bZ{\mathbb{Z}}

\def\A{\mathcal{A}}
\def\D{\mathcal{D}}

\DeclareMathOperator{\Enc}{Enc}
\DeclareMathOperator{\Dec}{Dec}

\newcommand{\rhoEE}[1]{\rho_{{\scriptscriptstyle \mathrm{E}}, #1}}

\newcommand{\rhoEstr}{\rho_{{\scriptscriptstyle \mathrm{E}}}^{\mathrm{str}}}
\newcommand{\rhoEwk}{\rho_{{\scriptscriptstyle \mathrm{E}}}^{\mathrm{wk}}}

\newcommand{\rhoEstrS}{\rho_{{\scriptscriptstyle \mathrm{E}}, s}^{\mathrm{str}}}

\begin{document}

\title{Cyclotomy, External Difference Families, and Algebraic Manipulation Detection Codes}

\author{Minfeng Shao,
Miwako Mishima\textsuperscript{*}
\thanks{\textsuperscript{*}Corresponding author.}
\thanks{M. Shao is with the School of Computer and Software Engineering, Xihua University, Chengdu, China. E-mail: shaomf@mail.xhu.edu.cn.}
\thanks{M. Mishima is with the Department of Electrical, Electronic and Computer Engineering, Gifu University, Gifu, Japan. E-mail: mishima.miwako.n0@f.gifu-u.ac.jp.}
\thanks{This work was supported by JSPS KAKENHI Grant Numbers
JP24K14819, JP24K06834, and JP22K11936 (Miwako Mishima).}}



\maketitle

\begin{abstract}
External difference families (EDFs) are closely related to
algebraic manipulation detection (AMD) codes, a cryptographic
primitive that protects messages against additive tampering by an
adversary who cannot observe the transmitted codeword.
We first study a cyclotomic construction of external partial
difference families over finite fields and derive a necessary and
sufficient condition under which the resulting families are EDFs.
For even block sizes up to $14$, 
we obtain explicit criteria in
terms of quadratic and biquadratic residues, yielding new
$R$-optimal weak AMD codes.
We then construct bounded generalized strong external difference
families using cyclotomic classes over finite fields, direct
products of finite fields, and generalized cyclotomic classes over
integer residue rings.
These constructions give infinite families of systematic
$G$-optimal strong AMD codes with flexible parameters.
\end{abstract}

\begin{IEEEkeywords}
Algebraic manipulation detection codes,
bounded generalized strong external difference families,
cyclotomy,
external difference families,
systematic AMD codes.
\end{IEEEkeywords}

\section{Introduction}
\IEEEPARstart{T}{o} convert linear secret sharing schemes into robust secret sharing schemes, 
Cramer et al.~\cite{cramer2008detection} introduced algebraic manipulation
detection (AMD) codes, which have also found applications in the
construction of robust fuzzy extractors.
As powerful tools against algebraic manipulation
attacks, AMD codes have attracted significant attention in recent years.
Among the two standard variants of AMD codes, namely weak AMD codes and strong
AMD codes, this paper investigates constructions of both types.

The literature on AMD codes has mainly developed in two directions.
One direction concerns combinatorial characterizations of AMD codes.
Cramer et al.~\cite{cramer2008detection} showed that AMD codes are closely
related to certain differential structures. 
Later, Paterson and Stinson~\cite{paterson2016combinatorial} proved that R-optimal AMD codes for the strong model are equivalent to generalized strong external difference families. 
They also introduced bounded generalized strong external difference families, which are closely related to G-optimal AMD codes for the
strong model. 
For the weak model, Shao and Miao~\cite{shao2020optimal} showed that R-optimal AMD codes are equivalent to strong weighted external difference families.

The second direction concerns constructions of AMD codes. Cramer
et al.~\cite{cramer2008detection} introduced a flexible construction based on
Reed--Solomon codes. Subsequent constructions employed various algebraic and
combinatorial techniques, including cyclic codes such as BCH
codes~\cite{cramer2015optimal}, Reed--Muller
codes~\cite{karpovsky2013design}, and differential
structures~\cite{cramer2013algebraic}. Since AMD codes are closely related to
various types of external difference families, combinatorial constructions of
these families also play an important role in constructing AMD codes.
Paterson and Stinson~\cite{paterson2016combinatorial} introduced strong external difference families to obtain AMD codes in various settings, while for the weak attack model, 
Shao and Miao~\cite{shao2020optimal} employed weighted strong
external difference families to construct AMD codes with new parameters.

Existing constructions of AMD codes mainly focus on achieving optimality with respect to the R-bound. 
In contrast, much less is known about G-optimal AMD codes, which attain the G-bound.
To the best of our knowledge, explicit constructions of G-optimal AMD codes have not been previously available.

This paper investigates constructions of both R-optimal and G-optimal AMD codes via external difference families and bounded generalized strong external difference families. Our first construction is based on a class of EPDFs introduced by Huczynska and Johnson~\cite{huczynska2023internal}. 
They gave necessary and sufficient conditions under which an EPDF, obtained by partitioning quadratic residues into cyclotomic classes, forms an EDF.
However, these conditions are intricate and difficult to verify in practice.
Focusing on EPDFs with even block sizes up to $14$ over finite fields $\bF_p$, where $p \equiv 1 \pmod{4}$ is prime, we derive simplified criteria by analyzing quadratic and biquadratic residues. The resulting conditions are substantially easier to verify and make it practical to determine whether a given EPDF forms an EDF. For reference, we also list, for each block size,
the first several primes $p$ for which the corresponding EPDF satisfies these criteria.

In contrast, we investigate G-optimal AMD codes through bounded generalized strong external difference families.
We first present constructions based on cyclotomy over finite fields.
Although these constructions yield optimal AMD codes, their parameters are restricted by the sizes of cyclotomic classes, which must divide $q-1$ for a prime power $q$.
We also extend the finite-field construction to direct products of finite fields, obtaining additional infinite families of bounded generalized strong external difference families and G-optimal AMD codes.
Finally, we introduce a construction based on the Zeng--Cai--Tang--Yang generalized cyclotomy over residue class rings~\cite{zeng2013optimal}, leading to bounded generalized strong external difference families and optimal AMD codes with more flexible parameters.

The remainder of this paper is organized as follows.
Sections~\ref{sec:preliminary} and~\ref{sec-BSWEDF} provide preliminaries on AMD codes and related external difference families.
Section~\ref{miwakoEPDF} presents constructions of AMD codes from EPDFs, along with criteria under which these EPDFs yield EDFs. We also apply these results to small parameters to obtain explicit constructions and examples.
Section~\ref{sec_cons_cosets} presents constructions of bounded generalized strong external difference families and the corresponding systematic AMD codes based on finite fields, direct products of finite fields, and residue class rings of integers. Section \ref{sec-conclusion} concludes the paper with some remarks.

\section{Algebraic manipulation detection codes}\label{sec:preliminary}

This section introduces notation and preliminaries on algebraic manipulation detection (AMD) codes.

\begin{itemize}
\item $[n] := \{1,2,\ldots,n\}$ for a positive integer $n$;

\item $\bZ_n$: the residue class ring modulo $n$;

\item $\bF_q$: the finite field of order $q$, and
$\bF_q^*:=\bF_q\setminus\{0\}$;

\item $G^*:=G\setminus\{0\}$ for an Abelian group $(G,+)$
with identity element $0$;

\item $c\cdot B$: the multiset in which each element of $B$
appears with multiplicity $c$;

\item $aB:=\{ab:b\in B\}$ for $a\in\bZ_n$ and
$B\subseteq\bZ_n$;

\item $AB:=\{ab:a\in A,\ b\in B\}$ for subsets
$A,B\subseteq\bF_q$, with multiplicities counted;

\item $D(B):=\{b-b':b,b'\in B,\ b\neq b'\}$ for
$B\subseteq G$;

\item $D(B_1,B_2):=\{b_1-b_2:b_1\in B_1,\ b_2\in B_2\}$
for distinct subsets $B_1,B_2\subseteq G$;

\item $s\in_R S$: uniformly random choice of $s$ from $S$.
\end{itemize}

Let $S$ be a source space with $|S|=m$, and let $G$ be an Abelian group of
order $n$ with identity element $0$. For each source $s \in S$, let
$A_s \subseteq G$ denote the set of valid encodings of $s$. 
A probabilistic encoding function $\Enc$ (which may be denoted by $E$ in subscripts for simplicity) maps each source $s \in S$ to an element of $A_s$ according to a prescribed probability distribution. 
The sets $A_s$ are assumed
to be pairwise disjoint, so that decoding is uniquely determined by a decoding
function $\Dec : G \to S \cup \{\perp\}$. Denote
\[
\A := \{A_s : s \in S\}.
\]
Then $(S,G,\A,\Enc)$ is called an \emph{AMD code}.

An adversary is assumed to have complete knowledge of the AMD code and chooses
a nonzero value $\Delta \in G^*$ according to a strategy
$\sigma \in \Sigma$, where $\Delta$ represents the difference between the
tampered and original messages. Given $s \in S$, tampering is successful if
and only if $\Dec(\Enc(s)+\Delta)=s' \in S$ for some $s'\neq s$. The success
probability of tampering is
\begin{align*}
&\Pr[\Dec(\Enc(s)+\Delta)\notin\{s,\bot\}\mid s\in S,\Delta\in G^*] \\
&=\sum_{g\in A_s}\Pr[\Enc(s)=g]
\Pr[\Dec(g+\Delta)\notin\{s,\bot\}\mid s\in S, \Delta\in G^*],
\end{align*}
which is denoted by $\rhoEE{s,\sigma}$ for the strategy
$\sigma \in \Sigma$ and source $s\in S$. For a fixed source $s\in S$, define
\[
\rhoEE{s}:=\max_{\sigma\in\Sigma}\rhoEE{s,\sigma}.
\]
Let $\rhoEstr$, where ``str'' indicates the strong security model, denote the
maximum success probability when the adversary knows the source $s$, i.e.,
\[
\rhoEstr
=\max_{s\in S,\ \sigma\in\Sigma}\rhoEE{s,\sigma}
=\max_{s\in S}\rhoEE{s}.
\]

If the adversary chooses $\Delta$ before observing $s$, then
\[
\rhoEE{\sigma}
:=\sum_{s\in S}\Pr[s]\,\rhoEE{s,\sigma}
\]
is the corresponding success probability for the strategy $\sigma$. 
Let $\rhoEwk$, where ``wk'' indicates the weak security model, denote the
maximum success probability when the adversary chooses $\Delta$ before
observing $s$, i.e.,
\[
\rhoEwk
=\max_{\sigma\in\Sigma}\rhoEE{\sigma}
=\max_{\sigma\in\Sigma}\sum_{s\in S}\Pr[s]\,\rhoEE{s,\sigma}.
\]
It is easy to verify that $\rhoEwk\le\rhoEstr$ for any AMD code.

\begin{definition}\label{def_AMD}
Suppose $(S,G,\A,\Enc)$ is an AMD code with $|S|=m$ and $|G|=n$.
Let $K:=\{|A_s|:s\in S\}$. 

\begin{itemize}
\item[(1)]
$(S,G,\A,\Enc)$ is a strong
$(m,n,K,\rhoEstr)$-AMD code if
\[
\Pr[\Dec(\Enc(s)+\Delta)\notin\{s,\perp\}\mid s\in S, \Delta\in G^*]
\le \rhoEstr <1
\]
for all $s\in S$ and $\Delta\in G^*$.

\item[(2)]
$(S,G,\A,\Enc)$ is a weak
$(m,n,K,\rhoEwk)$-AMD code if
\[
\sum_{s\in S}\Pr[s]\,
\Pr[\Dec(\Enc(s)+\Delta)\notin\{s,\perp\}\mid s\in S, \Delta\in G^*]
\le \rhoEwk <1
\]
for all $\Delta\in G^*$.

\item[(3)]
$(S,G,\A,\Enc)$ is called \emph{systematic}
if there exist Abelian groups $G_1$ and $G_2$ such that
$G=S\times G_1\times G_2$, and for some function
$f:S\times G_1\to G_2$, the encoding has the form
\begin{equation}
\Enc(s)=(s,x,f(s,x)),
\qquad x\in_R G_1.
\end{equation}
\end{itemize}
\end{definition}

We represent the multiset $K$ using exponential notation as
$k_1^{u_1}k_2^{u_2} \cdots k_t^{u_t}$, 
where $k_i$ has multiplicity $u_i$.
An AMD code is called \emph{$k$-uniform} if $|A_s|=k$ for every $s\in S$.
A $k$-uniform AMD code with equiprobable sources and equiprobable
encoding is said to be \emph{$k$-regular}.
Unless otherwise stated, we assume equiprobable sources and equiprobable
encoding, namely, $\Pr[s]=1/m$ and
$\Pr[\Enc(s)=g]=1/|A_s|$ for every $s\in S$ and $g\in A_s$.

In general, for a given source space $S$, one seeks AMD codes with small
length $n$ and low tampering probability. However, these parameters are not
independent. For weak AMD codes, the following bounds are known.

\begin{theorem}[\cite{paterson2016combinatorial}]
\label{thm_weak_R_bound}
(Weak model $R$-bound)
For any weak $(m,n,K,\rhoEwk)$-AMD code
$(S,G,\A,\Enc)$ with equiprobable sources,
the tampering probability $\rhoEwk$ under the optimal random strategy
satisfies
\begin{equation*}
\rhoEwk \ge \frac{a(m-1)}{m(n-1)},
\end{equation*}
where $a:=\sum_{s\in S}|A_s|$.
Furthermore, if the AMD code is $k$-uniform, then
\begin{equation*}
\rhoEwk \ge \frac{k(m-1)}{n-1}.
\end{equation*}
\end{theorem}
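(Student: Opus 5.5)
The plan is an averaging argument: the adversary can always do at least as well as a uniformly random $\Delta \in G^*$, so it suffices to compute the success probability of that particular strategy exactly. Let $\sigma_0$ denote the strategy that picks $\Delta\in_R G^*$. Since $\rhoEwk=\max_{\sigma}\rhoEE{\sigma}\ge \rhoEE{\sigma_0}$, it is enough to show
\[
\rhoEE{\sigma_0} = \frac{a(m-1)}{m(n-1)}
\]
under equiprobable sources. I would not need equiprobable encoding for this; only the sum structure over $g\in A_s$ matters.

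Fix a source $s$ and an encoding $g\in A_s$. Tampering with $\Delta$ succeeds exactly when $g+\Delta\in A_{s'}$ for some $s'\neq s$. The sets $A_{s'}$ are pairwise disjoint and $g\notin A_{s'}$ for $s'\ne s$. Hence the number of $\Delta\in G^*$ that succeed is exactly
\[
\Bigl|\bigcup_{s'\neq s}A_{s'}\Bigr| = a-|A_s|,
\]
since $\Delta\mapsto g+\Delta$ is a bijection from $G^*$ onto $G\setminus\{g\}$. So for every $g\in A_s$ the conditional success probability under $\sigma_0$ is $(a-|A_s|)/(n-1)$. Averaging over $g$ with any encoding distribution on $A_s$ therefore gives
\[
\rhoEE{s,\sigma_0}=\frac{a-|A_s|}{n-1}.
\]

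Averaging over $s$ with $\Pr[s]=1/m$ gives
\[
\rhoEE{\sigma_0}=\frac{1}{m(n-1)}\sum_{s\in S}\bigl(a-|A_s|\bigr)=\frac{ma-a}{m(n-1)}=\frac{a(m-1)}{m(n-1)},
\]
which is the first inequality. For the $k$-uniform case, substituting $a=mk$ yields $\rhoEwk\ge k(m-1)/(n-1)$.

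There is no serious obstacle. The only point needing care is the justification that the maximum over strategies $\Sigma$ dominates the randomized strategy $\sigma_0$, i.e., that $\Sigma$ contains (or the maximum is at least the average over) the deterministic choices of $\Delta$. This holds because $\rhoEE{\sigma_0}$ is the average of $\rhoEE{\sigma_\Delta}$ over the constant strategies $\sigma_\Delta$, so some $\Delta$ achieves at least this value.
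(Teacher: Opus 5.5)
Your argument is correct. The paper gives no proof of this bound and simply cites Paterson and Stinson; your uniform-random-$\Delta$ count (each $g\in A_s$ has exactly $a-|A_s|$ successful shifts in $G^*$, by disjointness of the $A_{s'}$) is the standard ``random strategy'' argument behind the name $R$-bound. Your closing observation that some fixed $\Delta$ does at least as well as the average is exactly what is needed to match Definition~\ref{def_AMD}(2), where $\rhoEwk$ must bound the success probability for every $\Delta\in G^*$.
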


\begin{theorem}[\cite{paterson2016combinatorial}]
\label{thm_weak_G_bound}
(Weak model $G$-bound)
For any weak $(m,n,K,\rhoEwk)$-AMD code
$(S,G,\A,\Enc)$, the tampering probability
$\rhoEwk$ under the optimal guessing strategy satisfies
\begin{equation*}
\rhoEwk \ge \frac{1}{a},
\end{equation*}
where $a:=\sum_{s\in S}|A_s|$.
\end{theorem}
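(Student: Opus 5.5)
We prove the weak-model $G$-bound of Theorem~\ref{thm_weak_G_bound} by exhibiting one explicit adversarial strategy, the \emph{guessing strategy}, whose success probability is already at least $1/a$. Then
\[
\rhoEwk=\max_{\sigma\in\Sigma}\rhoEE{\sigma}\ \ge\ \rhoEE{\sigma_0}\ \ge\ \frac1a .
\]
For this we may assume equiprobable sources and equiprobable encodings, as the paper does by default.

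The strategy $\sigma_0$ is as follows. The adversary picks an ordered pair $(g,g')$ of valid codewords belonging to two different sources: $g\in A_s$, $g'\in A_{s'}$ with $s\neq s'$. It then submits $\Delta=g'-g$, which is nonzero because the $A_s$ are pairwise disjoint. Tampering certainly succeeds whenever the actual source is $s$ and the actual encoding is $\Enc(s)=g$. This event has probability $\Pr[s]\Pr[\Enc(s)=g]=\frac1m\cdot\frac1{|A_s|}$, since then $\Dec(g+\Delta)=\Dec(g')=s'\notin\{s,\perp\}$.

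Next we average over the choice of the pair. Let the adversary choose $g$ uniformly at random from the $a$ codewords in $\bigcup_s A_s$, and $g'$ uniformly from the codewords outside the class of $g$. Under equiprobable sources and encodings, the codeword actually transmitted is then a specific $g$ with probability $\frac{1}{m|A_s|}$. A uniform guess of $g$ therefore hits the transmitted codeword with probability
\[
\sum_{s}\sum_{g\in A_s}\frac1a\cdot\frac{1}{m|A_s|}=\frac1a\sum_{s}\frac1m=\frac1a .
\]
On a hit, tampering succeeds regardless of which $g'$ was chosen. Hence $\rhoEE{\sigma_0}\ge 1/a$. (Random strategies are allowed in $\Sigma$, since the $R$-bound itself uses an optimal random strategy.)

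The only step needing care is this averaging: the success event must be exactly ``the guessed $g$ equals the true encoding''. Any extra successes, arising when $g+\Delta$ lands in some other source's set, only increase the probability. For general, non-equiprobable distributions the same argument works by choosing the $g$ maximizing $\Pr[s]\Pr[\Enc(s)=g]$. Since these $a$ numbers sum to $1$, the maximum is at least $1/a$, so the bound holds in that case too. Finally, the proof needs $m\ge2$ so that a $g'$ from another class exists, which is implicit in $\rhoEwk<1$ being meaningful.
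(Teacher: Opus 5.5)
Your argument is correct. It is the standard guessing-strategy proof of the $G$-bound due to Paterson and Stinson, which the paper only cites without proof. Note that your uniform-guess computation $\sum_{g}\frac{1}{a}\Pr[\text{transmitted}=g]=\frac{1}{a}$ in fact needs no equiprobability assumption.

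One small correction: $m\ge 2$ does not follow from $\rhoEwk<1$. For $m=1$ one has $\rhoEwk=0<1$ and the bound fails, so $m\ge 2$ is a tacit standing hypothesis of the theorem.
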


Weak AMD codes attaining equality in these bounds are of particular interest.

\begin{definition}
\label{def_R_op_PS}
A weak AMD code is said to be
\emph{$R$-optimal} (resp.\ \emph{$G$-optimal})
if it attains the bound of
Theorem~\ref{thm_weak_R_bound}
(resp.\ Theorem~\ref{thm_weak_G_bound})
with equality.
\end{definition}

Analogous bounds hold for strong AMD codes.

\begin{theorem}[\cite{paterson2016combinatorial}]
\label{thm_strong_R_bound}
(Strong model $R$-bound)
For any strong $(m,n,K,\rhoEstr)$-AMD code
$(S,G,\A,\Enc)$, the tampering probability
$\rhoEstrS$ under the optimal random strategy satisfies
\begin{equation*}
\rhoEstrS \ge \frac{a-|A_s|}{n-1},
\end{equation*}
where $a:=\sum_{s'\in S}|A_{s'}|$.
Furthermore, if the AMD code is $k$-uniform, then
\begin{equation*}
\rhoEstr \ge \frac{k(m-1)}{n-1}.
\end{equation*}
\end{theorem}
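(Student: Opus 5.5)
Fix a source $s\in S$ and consider the \emph{random strategy}, in which the adversary picks $\Delta$ uniformly from $G^*$. Since $\rhoEstrS=\max_{\sigma\in\Sigma}\rhoEE{s,\sigma}$, it suffices to show that this one strategy already has success probability at least $(a-|A_s|)/(n-1)$. The whole argument is a double count of the pairs $(g,\Delta)$ with $g\in A_s$ and $g+\Delta$ landing in another encoding set.

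First, for $g\in A_s$ and $\Delta\in G^*$, tampering succeeds exactly when $g+\Delta\in\bigcup_{s'\neq s}A_{s'}$. This uses that the $A_{s'}$ are pairwise disjoint and that $\Dec$ maps each $A_{s'}$ to $s'$. For fixed $g$, the map $\Delta\mapsto g+\Delta$ is a bijection from $G^*$ onto $G\setminus\{g\}$. Because $g\in A_s$, the union $\bigcup_{s'\neq s}A_{s'}$ lies entirely inside $G\setminus\{g\}$. So exactly
\[
\sum_{s'\neq s}|A_{s'}|=a-|A_s|
\]
values of $\Delta$ succeed.

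Second, averaging over $\Delta\in_R G^*$, the conditional success probability given $\Enc(s)=g$ is $(a-|A_s|)/(n-1)$. This value does not depend on $g$. Summing against any encoding distribution $\Pr[\Enc(s)=g]$ on $A_s$ therefore gives $\rhoEE{s,\sigma}=(a-|A_s|)/(n-1)$ for the uniform strategy $\sigma$. Taking the maximum over strategies yields $\rhoEstrS\ge (a-|A_s|)/(n-1)$. Notably, the encoding distribution plays no role.

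For the $k$-uniform case, $a=mk$ and $|A_s|=k$, so the bound becomes $k(m-1)/(n-1)$ for every $s$. Then $\rhoEstr=\max_s\rhoEstrS\ge k(m-1)/(n-1)$.

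The only delicate point is the first step: the target set must exclude $g$ itself (which holds since $g\notin A_{s'}$ for $s'\ne s$), and a $\Delta$ that lands in $A_s$ or on an invalid word must not be counted as a success. Once that is in place, the rest is a direct count.
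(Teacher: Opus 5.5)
Your proof is correct. For each $g\in A_s$, a uniformly random $\Delta\in G^*$ sends $g$ to a uniform element of $G\setminus\{g\}$, a set that contains all $a-|A_s|$ elements of $\bigcup_{s'\ne s}A_{s'}$, so the uniform strategy alone achieves $(a-|A_s|)/(n-1)$ whatever the encoding distribution. The paper gives no proof of its own and only cites Paterson and Stinson; your count for the uniform strategy is the standard argument behind the name ``$R$-bound.''
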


\begin{theorem}[\cite{paterson2016combinatorial}]
\label{thm_strong_G_bound}
(Strong model $G$-bound)
For any strong $(m,n,K,\rhoEstr)$-AMD code
$(S,G,\A,\Enc)$, the tampering probability
$\rhoEstrS$ under the optimal guessing strategy satisfies
\begin{equation*}
\rhoEstrS \ge \frac{1}{|A_s|}.
\end{equation*}
Furthermore, if the AMD code is $k$-uniform, then
\begin{equation*}
\rhoEstr \ge \frac{1}{k}.
\end{equation*}
\end{theorem}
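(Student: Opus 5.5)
The plan is to exhibit, for a fixed source $s$, one explicit adversarial strategy whose success probability is at least $1/|A_s|$. Since $\rhoEstrS$ is the maximum over all strategies, this gives the bound. The natural choice is a guessing strategy. The adversary knows $s$ and the full code, so it guesses which codeword was sent and then shifts it onto a valid encoding of another source.

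Concretely, fix $s\in S$ and an element $g_0\in A_s$. Since $m\ge 2$, there is some $s'\neq s$ with $A_{s'}\neq\emptyset$; pick $h\in A_{s'}$. Set $\Delta:=h-g_0$. This is nonzero because the sets $A_s$ are pairwise disjoint, so $h\neq g_0$. The adversary plays this deterministic $\Delta$, which is one particular $\sigma\in\Sigma$. By the displayed formula for the success probability,
\[
\rhoEE{s,\sigma}=\sum_{g\in A_s}\Pr[\Enc(s)=g]\,\Pr[\Dec(g+\Delta)\notin\{s,\perp\}].
\]
The term $g=g_0$ alone contributes $\Pr[\Enc(s)=g_0]\cdot 1$, since $\Dec(h)=s'\neq s$. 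All other terms are nonnegative. Hence $\rhoEE{s,\sigma}\ge \Pr[\Enc(s)=g_0]$.

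To finish, choose $g_0$ to maximize $\Pr[\Enc(s)=g]$ over $A_s$. Because these probabilities sum to $1$ over $|A_s|$ elements, the maximum is at least $1/|A_s|$; under the standing equiprobable-encoding assumption it equals exactly $1/|A_s|$. Thus $\rhoEstrS\ge\rhoEE{s,\sigma}\ge 1/|A_s|$. For the $k$-uniform statement, note that $|A_s|=k$ for every $s$, so $\rhoEstr=\max_s\rhoEstrS\ge 1/k$.

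There is no real obstacle here. The only points to handle carefully are that $\Delta$ is nonzero, which follows from disjointness and $m\ge2$, and that an arbitrary encoding distribution is handled by choosing the most likely codeword rather than relying on uniformity.
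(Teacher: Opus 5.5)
The paper gives no proof of this statement (it is quoted from Paterson and Stinson), and your argument is correct. It is the standard guessing strategy behind the name ``G-bound'': disjointness of the $A_s$ gives $\Delta\neq 0$, and choosing the most likely codeword covers non-equiprobable encodings by pigeonhole. You are also right to make $m\ge 2$ explicit. It is a genuine implicit hypothesis: for $m=1$ no tampering can succeed, so the bound would fail.
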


For systematic strong AMD codes, the following stronger bound is known.

\begin{theorem}[\cite{cramer2013algebraic,paterson2016combinatorial}]
\label{thm_sys_bound}
Let $(S,G,\A,\Enc)$ be a systematic strong
$(m,n=mn_1n_2,K,\rhoEstr)$-AMD code, where
$G=S\times G_1\times G_2$ with $|G_1|=n_1$ and
$|G_2|=n_2$. Then the tampering probability
$\rhoEstr$ under the optimal guessing strategy satisfies
\begin{equation}
\label{eqn_bound_sys_G}
\rhoEstr \ge \frac{1}{n_1}.
\end{equation}
\end{theorem}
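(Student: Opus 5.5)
We will prove the bound with an explicit guessing strategy for the adversary. Fix a source $s\in S$. The idea is to keep the first and third components of the codeword aligned with a second valid codeword, and to guess only how the middle component $x$ should be shifted.

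Since the code is systematic, $A_s\subseteq\{(s,x,f(s,x)):x\in G_1\}$, and $x$ is uniform over $G_1$. Choose any $s'\in S$ with $s'\neq s$; this requires $m\ge 2$, and when $m=1$ no successful tampering is possible, so that case is excluded or handled separately. Now fix any $x_0\in G_1$ and consider the codeword $(s,x_0,f(s,x_0))$.

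The adversary picks a target $x_1\in G_1$ and sets
\[
\Delta=\bigl(s'-s,\;x_1-x_0,\;f(s',x_1)-f(s,x_0)\bigr).
\]
The first component $s'-s$ is nonzero, so $\Delta\in G^*$. If the encoder happens to choose $x=x_0$, then
\[
\Enc(s)+\Delta=(s',x_1,f(s',x_1))\in A_{s'},
\]
so decoding returns $s'\neq s$ and the tampering succeeds. This event has probability $\Pr[x=x_0]=1/n_1$ under the equiprobable encoding, which gives $\rhoEE{s,\sigma}\ge 1/n_1$ for this deterministic strategy $\sigma$.

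Taking the maximum over strategies and sources yields $\rhoEstr\ge 1/n_1$, which is \eqref{eqn_bound_sys_G}. Equivalently, this is the strong $G$-bound of Theorem~\ref{thm_strong_G_bound} applied with $|A_s|=n_1$.

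The only delicate point is the encoding distribution. If encoding is not assumed uniform, we instead choose $x_0$ to be the most likely value of $x$, which gives success probability $\max_{x}\Pr[x]\ge 1/n_1$. This covers the definition's "$x\in_R G_1$" in either reading. We must also check that $(s,x_0,f(s,x_0))$ lies in the support of $\Enc(s)$, which holds by the form of the encoding.
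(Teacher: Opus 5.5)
Your proof is correct. It is the standard guessing-strategy argument: the strong $G$-bound $\rho_{s}\ge 1/|A_s|$ specialized to systematic codes, where $|A_s|=n_1$. The paper itself gives no proof to compare against; it only quotes the theorem from Cramer et al.\ and Paterson--Stinson.

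Two small tidy-ups:
\begin{enumerate}
\item For $m=1$ the inequality is actually false, since then $\rho=0$. That case therefore has to be excluded by hypothesis; it cannot be ``handled separately.''
\item The membership your argument really needs is $(s',x_1,f(s',x_1))\in A_{s'}$, not $(s,x_0,f(s,x_0))\in A_s$. It is automatic when $x$ is uniform on $G_1$; otherwise choose $x_1$ in the support of $\Enc(s')$.
\end{enumerate}
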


The notions of $R$-optimality and $G$-optimality for strong AMD codes are defined analogously.

\begin{definition}
A strong AMD code is said to be
\emph{$R$-optimal} (resp.\ \emph{$G$-optimal})
if it attains the bound of
Theorem~\ref{thm_strong_R_bound}
(resp.\ Theorem~\ref{thm_strong_G_bound}
or Theorem~\ref{thm_sys_bound})
with equality for every source $s\in S$.
\end{definition}

\section{External difference families and combinatorial tools}
\label{sec-BSWEDF}

The construction of optimal AMD codes is closely related to
external difference families through the set system
$\A=\{A_s:s\in S\}$. In particular, several classes of optimal
AMD codes can be characterized in terms of suitable difference
properties among the sets $A_s$.

To develop the constructions used in later sections, we introduce
the necessary notions on external difference families and cyclotomy.
Cyclotomic classes provide a convenient way to partition finite fields into structured subsets, which will play a central role in our constructions.

We begin with notation for cyclotomy.
Let $e,u\in\bN$ such that $q=eu+1$ is a prime power.
Let $\bF_q^*$ be the multiplicative group of the finite field
$\bF_q$, and let $\gamma$ be a primitive element of $\bF_q$.
The \emph{cyclotomic classes} of index $u$ over $\bF_q$ are defined by
\begin{equation*}
C_0^u
\triangleq
\{\gamma^{ju}:0\le j\le e-1\}
\end{equation*}
and
\begin{equation*}
C_i^u
=
\gamma^i C_0^u
\triangleq
\{\gamma^{ju+i}:0\le j\le e-1\}
\text{ for }0\le i\le u-1.
\end{equation*}
It is well known that
$\{C_i^u:0\le i\le u-1\}$
forms a partition of $\bF_q^*$ into $u$ subsets of size $e$.

Throughout this paper, the union symbol $\cup$ denotes multiset
union unless otherwise specified. 

\begin{definition}[Difference Set] 
\label{def_DS}
Let $G$ be an additive Abelian group of order $n$. 
A subset $B \subseteq G$ of size $k$ is called an \emph{$(n, k, \lambda)$-difference set} (DS) if its internal difference list satisfies
\[
    D(B) = \lambda \cdot G^*,
\]
where $G^* = G \setminus \{0\}$.
\end{definition}

If an $(n,k,\lambda)$-DS exists, then its parameters satisfy
\[
\lambda(n-1)=k(k-1).
\]

More generally, one may consider the following relaxation.

\begin{definition}[Partial Difference Set] 
\label{def_PDS}
Let $G$ be an additive Abelian group of order $n$. 
A subset $B \subseteq G$ of size $k$ is called an \emph{$(n, k, \lambda, \mu)$-partial difference set} (PDS) if its internal difference list satisfies
\[
    D(B) = (\lambda \cdot B^*) \cup (\mu \cdot (G^* \setminus B)),
\]
where $B^* = B \setminus \{0\}$.
\end{definition}

The following is a well-known family of PDSs, called \emph{Paley-type}, which was first studied by Paley~\cite{paley1933orthogonal} in connection with Hadamard matrices.

\begin{theorem}[\cite{paley1933orthogonal}]
\label{thm_payley}
Let $q \equiv 1 \pmod{4}$ be a prime power. Then the subgroup $C_0^2$ of quadratic residues modulo $q$ 
forms a $(q, (q-1)/2, (q-5)/4, (q-1)/4)$-PDS.
\end{theorem}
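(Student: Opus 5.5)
The plan is to compute the difference multiset $D(C_0^2)$ by counting, for each nonzero $d\in\bF_q$, the number $N(d)$ of ordered pairs $(x,y)$ of distinct quadratic residues with $x-y=d$. Write $B=C_0^2$ and $\chi$ for the quadratic character of $\bF_q$.

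First we show that $N(d)$ depends only on the cyclotomic class of $d$. Since $q\equiv 1\pmod 4$, we have $-1=\gamma^{(q-1)/2}\in B$, so $B=-B$. For $a\in B$, multiplication by $a$ is a bijection of $B$ onto itself, and it maps the pairs with difference $d$ onto the pairs with difference $ad$. Hence $N(d)$ takes a constant value $\lambda$ on $B$ and a constant value $\mu$ on $\gamma B=\bF_q^*\setminus B$. Counting all ordered pairs of distinct elements of $B$ gives
\[
\frac{q-1}{2}\Bigl(\frac{q-1}{2}-1\Bigr)=\frac{q-1}{2}(\lambda+\mu).
\]
Therefore $\lambda+\mu=(q-3)/2$.

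Second, we need one more relation, which we get from a character sum. For $d\neq 0$,
\[
N(d)=\sum_{x\neq 0,\,d}\tfrac14\bigl(1+\chi(x)\bigr)\bigl(1+\chi(x-d)\bigr).
\]
We expand this using $\sum_{x}\chi(x)=0$ and the standard Jacobsthal identity $\sum_{x}\chi(x)\chi(x-d)=-1$ for $d\neq0$. The boundary terms $x=0$ and $x=d$ must be subtracted, and they contribute $\chi(-d)=\chi(d)$ and $\chi(d)$. This gives
\[
N(d)=\tfrac14\bigl(q-2-1-2\chi(d)\bigr)=\tfrac14\bigl(q-3-2\chi(d)\bigr).
\]
Thus $\lambda=(q-5)/4$ for $d\in B$, and $\mu=(q-1)/4$ for $d\notin B$. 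These agree with $\lambda+\mu=(q-3)/2$.

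The main care point is this boundary bookkeeping, specifically that $\chi(-1)=1$ is used so that $\chi(-d)=\chi(d)$. Alternatively, we could avoid character sums by using the cyclotomic numbers of order $2$: the known values $(0,0)=(q-5)/4$ and $(0,1)=(q-1)/4$ for $q\equiv1\pmod 4$ give the result directly. Finally, $0\notin B$, so $B^*=B$, and the multiset identity in Definition~\ref{def_PDS} holds with the stated parameters.
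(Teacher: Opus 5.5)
The paper does not prove this statement. It is quoted from Paley~\cite{paley1933orthogonal} and used only as a black box in the proof of Theorem~\ref{thm_kurimoto4.4}, so there is no in-paper argument to compare against.

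Your proof is correct and self-contained. Excluding $x=0$ and $x=d$ is exactly what makes $\tfrac12(1+\chi(\cdot))$ the indicator of $C_0^2$. The two linear sums each contribute $-\chi(d)$; the second does so only after using $\chi(-1)=1$, which is where $q\equiv1\pmod 4$ is needed. The Jacobsthal sum contributes $-1$, so $N(d)=\tfrac14(q-3-2\chi(d))$ for every $d\neq0$.

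Your first step is logically redundant, because the character-sum computation already determines $N(d)$ pointwise. The multiplier argument and the relation $\lambda+\mu=(q-3)/2$ serve only as a consistency check. Also, the observation $C_0^2=-C_0^2$ is not used in that step, only in the second.

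The cyclotomic-number alternative also works. For $q\equiv1\pmod4$ one has $(0,1)=(1,0)=(1,1)=(q-1)/4$, so whichever of these your normalization of $x-y=d$ produces gives $\mu$.
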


We now turn to external difference families, which provide the main combinatorial framework for the AMD code constructions considered in this paper.

\begin{definition}[External Difference Family]
\label{def_EDF}
Let $G$ be an additive Abelian group of order $n$. A set system $\D = \{B_i : 1 \le i \le m\}$ of disjoint subsets of $G$ is called an \emph{$(n, K, \lambda)$-external difference family} (EDF) if its external difference list satisfies
\begin{equation}
\label{eq_EDF}
\bigcup_{1 \le i \neq j \le m} D(B_i, B_j) = \lambda \cdot G^*,
\end{equation}
where $K = \{|B_1|, |B_2|, \dots, |B_m|\}$ is the multiset of block sizes. Following the notation in Section II, we write $K=k_1^{u_1}\cdots k_t^{u_t}$
when convenient. 
An EDF is called \emph{regular} if
$|B_1|=\cdots=|B_m|=k$,
in which case it is denoted by an
$(n,k^m,\lambda)$-EDF.
\end{definition}

If an $(n,k^m,\lambda)$-EDF exists, then
\[
\lambda(n-1)=k^2m(m-1).
\]
Moreover, since the blocks are pairwise disjoint, we necessarily have $n\ge mk$.

Paterson and Stinson~\cite{paterson2016combinatorial}
generalized the notion of EDF as follows.

\begin{definition}[Bounded Generalized External Difference Family]
\label{def_BGEDF}
Let $G$ be an additive Abelian group of order $n$, and let
$\D=\{B_i:1\le i\le m\}$ be a set system of disjoint subsets
of $G$.
Let $K=\{|B_1|,|B_2|,\dots,|B_m|\}$ be the multiset of block
sizes.
If the external difference list of $\D$ satisfies
\begin{equation}
\label{eq_bgedf}
\bigcup_{1\le i\neq j\le m} D(B_i,B_j)
\subseteq
\lambda\cdot G^*,
\end{equation}
then $\D$ is called an
\emph{$(n,m,K,\lambda)$-bounded generalized external
difference family} (BGEDF).
If equality holds in~\eqref{eq_bgedf}, then $\D$ is called an
\emph{$(n,m,K,\lambda)$-generalized external difference
family} (GEDF).
In the uniform case where
$|B_1|=\cdots=|B_m|=k$,
a BGEDF is called an
\emph{$(n,k^m,\lambda)$-bounded external difference family}
(BEDF). If equality holds in~\eqref{eq_bgedf}, then it reduces
to an $(n,k^m,\lambda)$-EDF.
\end{definition}

As a special case of an $(n, k^m, \lambda)$-BEDF, Huczynska and Johnson \cite{huczynska2023internal} introduced the notion of an \emph{external partial difference family}.

\begin{definition}[External Partial Difference Family]
\label{def_EPDF}
Let $G$ be an additive Abelian group of order $n$, and let $\D = \{B_i : 1 \le i \le m\}$ be a set system of disjoint $k$-subsets of $G^*$. Define $H = \bigcup_{i=1}^m B_i$. 
If the external difference list of $\D$ satisfies 
\begin{equation}
\label{eqn_Delta_D}
\bigcup_{1 \le i \neq j \le m} D(B_i, B_j) = (\lambda \cdot H) \cup (\mu \cdot (G^* \setminus H)),
\end{equation}
then $\D$ is called an $(n, k^m, \lambda, \mu)$-\emph{external partial difference family} (EPDF).
\end{definition}

The following generalization of EDF was also introduced by
Paterson and Stinson~\cite{paterson2016combinatorial}.

\begin{definition}[Generalized Strong External Difference Family]
\label{def_GSEDF}
Let $G$ be an additive Abelian group of order $n$, and let
$\D=\{B_i:1\le i\le m\}$ be a set system of disjoint subsets
of $G$ with
$K=\{|B_i|:1\le i\le m\}$.
If
\begin{equation}
\label{eqn_GSEDF}
\bigcup_{j\in[m]\setminus\{i\}} D(B_i,B_j)
=
\lambda_i\cdot G^*
\end{equation}
holds for all $1\le i\le m$, then $\D$ is called an
$(n,K,\Lambda)$-\emph{generalized strong external difference
family} (GSEDF), where
$\Lambda=\{\lambda_i:1\le i\le m\}$.
\end{definition}

A bounded version of GSEDF is defined similarly~\cite{paterson2016combinatorial}.

\begin{definition}[Bounded Generalized Strong External Difference Family]
\label{def_BGSEDF}
Let $G$ be an additive Abelian group of order $n$, and let
$\D=\{B_i:1\le i\le m\}$ be a set system of disjoint subsets
of $G$ with
$K=\{|B_i|:1\le i\le m\}$.
If
\begin{equation}
\label{eqn_BGSEDF}
\bigcup_{j\in[m]\setminus\{i\}} D(B_i,B_j)
\subseteq
\lambda_i\cdot G^*
\end{equation}
holds for all $1\le i\le m$, then $\D$ is called an
$(n,K,\Lambda)$-\emph{bounded generalized strong external
difference family} (BGSEDF), where
$\Lambda=\{\lambda_i:1\le i\le m\}$.
In the uniform case where
$|B_1|=\cdots=|B_m|=k$,
it is denoted by an
$(n,k^m,\lambda)$-BGSEDF, where
$\lambda=\max\{\lambda_i:1\le i\le m\}$.
\end{definition}

The combinatorial structures introduced in this section form
the basis for the AMD code constructions developed in the
sequel. Section~IV focuses on the weak security model via
EPDFs and EDFs, while Section~V studies the strong security
model through (B)GSEDFs.

\section{External partial difference families and R-optimal weak AMD codes}
\label{miwakoEPDF}

In this section, we use the cyclotomic tools introduced in Section~III to construct external partial difference families (EPDFs) and the corresponding weak AMD codes.
We further investigate conditions under which these EPDFs become external difference families (EDFs).
Applying these criteria, we obtain new families of
$R$-optimal weak AMD codes attaining the bound in
Theorem~\ref{thm_weak_R_bound} with equality.

The following is a method to construct weak AMD codes from external partial difference families. 

\begin{construction}\label{con_weak_AMD}
Let $G$ be an additive Abelian group of order $n$.
Let $\A=\{A_1,\ldots,A_m\}$ be an
$(n,k^m,\lambda,\mu)$-EPDF, where
$A_i=\{a_{i,1},\ldots,a_{i,k}\}\subseteq G^*$.
Define an equiprobable encoding function $\Enc$
from the equiprobable source set
$S=\{s_1,\ldots,s_m\}$ to $G$ by
\begin{equation}\label{eqn_en_f_cons2}
\Enc(s_i)\triangleq a_{i,j}\in A_i,
\end{equation}
where $j\in_R[k]$.
The decoding function $\Dec:G\to S\cup\{\bot\}$
is defined by
\begin{equation}\label{eqn_de_f_cons2}
\Dec(g)\triangleq
\begin{cases}
s_i & \text{if } g\in A_i,\\
\bot & \text{otherwise}.
\end{cases}
\end{equation}
\end{construction}

The following theorem analyzes the weak security parameter of the AMD code obtained from Construction~\ref{con_weak_AMD}.

\begin{theorem}\label{lem_weak_AMD}
The AMD code generated by Construction~\ref{con_weak_AMD}
is a $k$-regular weak
$(m,n,k^m,\rhoEwk)$-AMD code, where
\[
\rhoEwk
=
\max\left\{
\frac{\lambda}{mk},
\frac{\mu}{mk}
\right\}.
\]
\end{theorem}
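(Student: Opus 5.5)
The plan is to compute, for a fixed nonzero $\Delta\in G^*$, the averaged success probability of the adversary who adds $\Delta$, and then to maximise over $\Delta$. Randomised strategies $\sigma$ are mixtures of deterministic choices of $\Delta$, and $\rhoEE{\sigma}$ is linear in $\sigma$. So it suffices to consider deterministic $\Delta$, and $\rhoEwk=\max_{\Delta\in G^*}\rho(\Delta)$.

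Fix $\Delta$. The sources are equiprobable ($\Pr[s_i]=1/m$) and the encoding is uniform on $A_i$ (probability $1/k$ each). Tampering succeeds for source $s_i$ and encoding $a\in A_i$ exactly when $a+\Delta\in A_j$ for some $j\neq i$. This follows from the decoding rule \eqref{eqn_de_f_cons2} and the pairwise disjointness of the blocks. Hence
\[
\rho(\Delta)=\frac{1}{mk}\,\bigl|\{(i,j,a,b): i\neq j,\ a\in A_i,\ b\in A_j,\ b-a=\Delta\}\bigr|.
\]
The count on the right is the multiplicity of $\Delta$ in the external difference list $\bigcup_{i\neq j}D(A_j,A_i)$. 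As $(i,j)$ ranges over all ordered pairs with $i\neq j$, this list equals the list in \eqref{eqn_Delta_D}.

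By Definition~\ref{def_EPDF}, that multiplicity is $\lambda$ if $\Delta\in H$ and $\mu$ if $\Delta\in G^*\setminus H$. Therefore $\rho(\Delta)\in\{\lambda/(mk),\mu/(mk)\}$. Taking the maximum over $\Delta\in G^*$ gives $\max\{\lambda/(mk),\mu/(mk)\}$, provided both $H$ and $G^*\setminus H$ are nonempty. The set $H$ is nonempty since $m,k\ge1$. If $G^*\setminus H$ is empty, the $\mu$-term never occurs, and I would note that the formula should then be read with only the attained value; equivalently, $\mu$ is vacuous in that case. The code is $k$-regular by construction: every block has size $k$, and sources and encodings are uniform. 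I would also check $\rhoEwk<1$, which holds whenever $\lambda,\mu<mk$; this is implicit in calling the result an AMD code.

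The only delicate point is the bookkeeping of the ordered-pair orientation $D(A_j,A_i)$ versus $D(A_i,A_j)$. This is harmless, because the union over all ordered pairs with $i\neq j$ is symmetric. The other point is the reduction from randomised strategies to deterministic $\Delta$, which I will justify by linearity as above. Neither is a real obstacle; the proof is a direct counting argument.
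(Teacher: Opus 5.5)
Your proposal is correct and follows the same route as the paper. Both fix $\Delta\in G^*$, write the averaged success probability as $\frac{1}{mk}$ times the multiplicity of $\Delta$ in the external difference list, and read off $\lambda$ or $\mu$ from \eqref{eqn_Delta_D}. Your extra remarks are sound and are points the paper leaves implicit: the linearity reduction from randomized strategies to a fixed $\Delta$, the vacuous-$\mu$ case when $H=G^*$, and the requirement $\lambda,\mu<mk$ for $\rhoEwk<1$.
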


\begin{IEEEproof}
For a fixed $\Delta\in G^*$, the success probability in the
weak security model is
\begin{equation}\label{eqn_rho_delta}
\begin{split}
\rhoEE{\sigma}
&=
\sum_{s\in S}\Pr[s]
\sum_{g\in A_s}\Pr[\Enc(s)=g]
\sum_{s'\in S\setminus\{s\}}
\Pr[g+\Delta\in A_{s'}\mid s\in S,\Delta\in G^*]
\\
&=
\sum_{s\in S}\frac{1}{mk}
\sum_{s'\in S\setminus\{s\}}
\sum_{g\in A_s}
\Pr[g+\Delta\in A_{s'}\mid s\in S,\Delta\in G^*]
\\
&=
\begin{cases}
\dfrac{\lambda}{mk}
& \text{if }\Delta\in \bigcup_{i=1}^m A_i,\\[3mm]
\dfrac{\mu}{mk}
& \text{if }\Delta\in G^*\setminus \bigcup_{i=1}^m A_i. 
\end{cases}
\end{split}
\end{equation}
The last equality follows from~\eqref{eqn_Delta_D}.
Since the sources and encodings are equiprobable and
$|A_i|=k$ for all $1\le i\le m$, the resulting AMD code
is $k$-regular.
\end{IEEEproof}

\medskip
By Definitions~\ref{def_EDF} and~\ref{def_EPDF},
if $\lambda=\mu$, then an
$(n,k^m,\lambda,\mu)$-EPDF reduces to an
$(n,k^m,\lambda)$-EDF.
In this case,
\[
\lambda=\frac{k^2m(m-1)}{n-1},
\]
and, since the blocks are pairwise disjoint,
necessarily $n\ge km$.
Consequently, when an EPDF becomes an EDF,
Theorem~\ref{lem_weak_AMD} yields the following characterization
of $R$-optimal weak AMD codes.

\begin{theorem}[\cite{paterson2016combinatorial}]
\label{thm_weak_AMD}
An $(n,k^m,\lambda)$-EDF is equivalent to an
$R$-optimal $k$-regular weak
$(m,n,k^m,k(m-1)/(n-1))$-AMD code.
\end{theorem}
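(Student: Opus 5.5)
We prove the two directions separately. Both rest on the identity from the proof of Theorem~\ref{lem_weak_AMD}: for a $k$-regular AMD code with equiprobable sources and encodings, and a fixed $\Delta\in G^*$,
\[
\rhoEE{\Delta}=\frac{1}{mk}\,N(\Delta),
\qquad
N(\Delta):=\Bigl|\bigl\{(g,g')\in A_s\times A_{s'}:s\neq s',\ g'-g=\Delta\bigr\}\Bigr|.
\]
Here $N(\Delta)$ is the multiplicity of $\Delta$ in the external difference list $\bigcup_{s\neq s'}D(A_{s'},A_s)$. This identity does not need the EPDF hypothesis; it follows directly from the definition of $\rhoEE{\sigma}$ for the deterministic strategy choosing $\Delta$. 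For a randomized strategy $\sigma$, the success probability is a convex combination of the values $\rhoEE{\Delta}$. Hence
\[
\rhoEwk=\max_{\Delta\in G^*}\rhoEE{\Delta}.
\]

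\emph{EDF $\Rightarrow$ AMD code.} Let $\A=\{A_1,\dots,A_m\}$ be an $(n,k^m,\lambda)$-EDF. Build the code as in Construction~\ref{con_weak_AMD}; this does not require $A_i\subseteq G^*$. Then $N(\Delta)=\lambda$ for every $\Delta\in G^*$, so $\rhoEwk=\lambda/(mk)$. Substituting the counting relation $\lambda(n-1)=k^2m(m-1)$ gives
\[
\rhoEwk=\frac{k(m-1)}{n-1}.
\]
This is exactly the uniform bound of Theorem~\ref{thm_weak_R_bound}, so the code is $R$-optimal and $k$-regular.

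\emph{AMD code $\Rightarrow$ EDF.} Take an $R$-optimal $k$-regular weak $(m,n,k^m,k(m-1)/(n-1))$-AMD code, and set $\A=\{A_s\}$. The $A_s$ are pairwise disjoint $k$-subsets by definition. Count all ordered external differences:
\[
\sum_{\Delta\in G^*}N(\Delta)=k^2m(m-1),
\]
since every pair $(g,g')$ with $g\in A_s$, $g'\in A_{s'}$, $s\neq s'$ has $g'\neq g$ by disjointness. So the average of $N(\Delta)$ over $G^*$ is $k^2m(m-1)/(n-1)$. On the other hand, $\max_\Delta N(\Delta)=mk\,\rhoEwk$, which equals this same average by hypothesis. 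A maximum equal to the average forces $N(\Delta)$ to be constant on $G^*$. Therefore $\A$ is an $(n,k^m,\lambda)$-EDF with $\lambda=k^2m(m-1)/(n-1)$.

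The only delicate point is the reduction $\rhoEwk=\max_\Delta\rhoEE{\Delta}$, together with the fact that the $R$-bound in Theorem~\ref{thm_weak_R_bound} is precisely this averaging argument. Once $\rhoEwk$ is written as $N(\Delta)/(mk)$, the "max equals average" step is routine.
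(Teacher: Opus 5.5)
Your proof is correct. The forward direction follows the paper's own reasoning. The paper gets it from Theorem~\ref{lem_weak_AMD} with $\lambda=\mu$, which is the same per-$\Delta$ count $\rhoEE{\Delta}=N(\Delta)/(mk)$, combined with $\lambda(n-1)=k^2m(m-1)$. You observe that this count needs neither the EPDF structure nor $A_i\subseteq G^*$, so you can apply it directly to an arbitrary EDF. The paper's route formally covers only EDFs that arise as EPDFs with blocks in $G^*$. Nothing is lost in nondegenerate cases: for $k,m\ge 2$, integrality of $k^2m(m-1)/(km-1)$ rules out $n=km$, so an EDF can always be translated to avoid $0$.

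The converse is not proved in the paper at all; it is simply imported from Paterson and Stinson. Your ``maximum equals average'' argument is the natural self-contained proof. It is exactly the equality case of the averaging argument (choose $\Delta$ uniformly from $G^*$) behind Theorem~\ref{thm_weak_R_bound}. Since the maximum is always at least the average, it also shows that the upper bound $\rhoEwk\le k(m-1)/(n-1)$ alone already forces $N(\Delta)$ to be constant.

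One small omission: Definition~\ref{def_AMD} requires $\rhoEwk<1$, and you should check this in the forward direction. It holds for $k\ge 2$, because $n\ge km$ gives $k(m-1)<km-1\le n-1$. It fails only in the trivial case $k=1$, $n=m$, where every block is a singleton and $\lambda=n$; the statement tacitly excludes that case.
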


In what follows, we construct EPDFs using cyclotomy over
finite fields.
An element $\alpha\in\bF_q$ is called a
\emph{primitive $t$-th root of unity}
if its multiplicative order is $t$.
Equivalently,
$\alpha^t=1$ and $\alpha^i\ne1$ for
$1\le i\le t-1$.
If $\gamma$ is a primitive element of $\bF_q$, then the
primitive $t$-th roots of unity are precisely the elements
$\gamma^{h(q-1)/t}$,
where $1\le h\le t-1$ and $\gcd(h,t)=1$. 
Hence there are exactly $\varphi(t)$ primitive
$t$-th roots of unity in $\bF_q$, where
$\varphi$ denotes Euler's totient function.

\begin{lemma}
\label{lem_kawaguchi4.2}
For $k\in\bN$ and 
an odd prime power $q$, let
$\varepsilon_{2k}$ be a primitive $2k$-th root of unity
in $\bF_q$. Then
\[
\prod_{j=1}^{k-1}(\varepsilon_{2k}^j-1)^2
=
(-1)^{k-1}k\varepsilon_{2k}^l,
\]
where
$l\equiv \frac{k(k-1)}{2}\pmod{2k}$.
\end{lemma}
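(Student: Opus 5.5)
Write $\varepsilon=\varepsilon_{2k}$ and $\zeta=\varepsilon^2$, which is a primitive $k$-th root of unity in $\bF_q$. Since $q$ is odd and $\varepsilon$ has order $2k$, the characteristic $p$ does not divide $2k$. The plan is to pair each factor $\varepsilon^j-1$ with its "conjugate" $\varepsilon^{-j}-1$, which turns the product of squares into a product of the form $\prod(\zeta^j-1)$. That product is then evaluated by the standard cyclotomic identity.

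First, for each $j$ I factor
\[
(\varepsilon^j-1)=\varepsilon^{j}(1-\varepsilon^{-j}) = -\varepsilon^{j}(\varepsilon^{-j}-1).
\]
This gives
\[
\prod_{j=1}^{k-1}(\varepsilon^j-1)^2=\prod_{j=1}^{k-1}(\varepsilon^j-1)\cdot(-1)^{k-1}\varepsilon^{k(k-1)/2}\prod_{j=1}^{k-1}(\varepsilon^{-j}-1).
\]
Next I combine the two remaining products factorwise, using $(\varepsilon^j-1)(\varepsilon^{-j}-1)=2-\varepsilon^j-\varepsilon^{-j}$. This is not yet of the form $\zeta^j-1$, so the pairing has to be done more cleverly. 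Instead, I will use $\varepsilon^{-j}=-\varepsilon^{k-j}$, which holds because $\varepsilon^k=-1$. Then $\varepsilon^{-j}-1=-(\varepsilon^{k-j}+1)$, and reindexing $j\mapsto k-j$ gives
\[
\prod_{j=1}^{k-1}(\varepsilon^{-j}-1)=(-1)^{k-1}\prod_{j=1}^{k-1}(\varepsilon^{j}+1).
\]
Hence
\[
\prod_{j=1}^{k-1}(\varepsilon^j-1)\prod_{j=1}^{k-1}(\varepsilon^{-j}-1)=(-1)^{k-1}\prod_{j=1}^{k-1}(\varepsilon^{2j}-1)=(-1)^{k-1}\prod_{j=1}^{k-1}(\zeta^j-1).
\]

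The key evaluation is $\prod_{j=1}^{k-1}(1-\zeta^j)=k$. To prove it, note that $x^k-1=\prod_{j=0}^{k-1}(x-\zeta^j)$ in $\bF_q[x]$, since $\zeta$ has order exactly $k$ and so the $\zeta^j$ are $k$ distinct roots. Dividing by $x-1$ gives $1+x+\dots+x^{k-1}=\prod_{j=1}^{k-1}(x-\zeta^j)$, and setting $x=1$ yields $k$. It follows that $\prod_{j=1}^{k-1}(\zeta^j-1)=(-1)^{k-1}k$.

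Assembling, the total sign is $(-1)^{k-1}\cdot(-1)^{k-1}\cdot(-1)^{k-1}=(-1)^{k-1}$, and the product equals
\[
(-1)^{k-1}k\,\varepsilon^{k(k-1)/2}.
\]
Because $\varepsilon$ has order $2k$, the exponent only matters modulo $2k$, which gives exactly $l\equiv k(k-1)/2 \pmod{2k}$. The main obstacle is purely bookkeeping: the three sign factors $(-1)^{k-1}$ and the reindexing via $\varepsilon^k=-1$ must be tracked correctly. As a sanity check, for $k=2$ we have $\varepsilon^2=-1$, so the left side is $(\varepsilon-1)^2=\varepsilon^2-2\varepsilon+1=-2\varepsilon$. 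The right side is $(-1)\cdot2\cdot\varepsilon^{1}=-2\varepsilon$, so the two agree.
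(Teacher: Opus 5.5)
Your proof is correct, and it evaluates the product by a genuinely different route from the paper's.

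Both arguments begin the same way: writing $\varepsilon^j-1=\varepsilon^j(1-\varepsilon^{-j})$ produces the factor $\varepsilon^{k(k-1)/2}$.

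The paper then notes that $\{\varepsilon^{-j}:1\le j\le k-1\}=\{\varepsilon^{j}:k+1\le j\le 2k-1\}$. This lets it merge the two half-products into $\prod_{j=1}^{2k-1}(1-\varepsilon^j)=2k$ with the middle factor $1-\varepsilon^k=2$ divided out. That step requires $2$ to be invertible, which is where the oddness of $q$ is used.

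You instead use $\varepsilon^k=-1$ to rewrite the reflected factors as $-(\varepsilon^{k-j}+1)$. You then pair them with $\varepsilon^j-1$ by difference of squares, which reduces everything to the $k$-th cyclotomic identity $\prod_{j=1}^{k-1}(1-\zeta^j)=k$ for $\zeta=\varepsilon^2$.

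What each route buys:
\begin{itemize}
\item Your route never divides by $2$ and works only with $k$-th roots of unity. The cost is tracking three sign factors $(-1)^{k-1}$, which you handle correctly.
\item The paper's route is a single application of the $2k$-th cyclotomic identity, with slightly less sign bookkeeping.
\end{itemize}

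Your rewriting $\varepsilon^{-j}-1=-(\varepsilon^{k-j}+1)$ is close in spirit to the factorization $\varepsilon^{k-i}-1=\varepsilon^{k-i}(\varepsilon^i+1)$ that the paper uses later in Lemma~\ref{lem_e} and in the corollaries for $k=4,\ldots,7$.
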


\begin{IEEEproof}
Since the roots of $x^{2k}-1=0$ are
$1,\varepsilon_{2k},\ldots,\varepsilon_{2k}^{2k-1}$,
we have
\[
x^{2k}-1
=
(x-1)\prod_{j=1}^{2k-1}(x-\varepsilon_{2k}^j).
\]
Comparing this with
\[
x^{2k}-1
=
(x-1)(x^{2k-1}+x^{2k-2}+\cdots+x+1)
\]
gives
\[
\prod_{j=1}^{2k-1}(x-\varepsilon_{2k}^j)
=
x^{2k-1}+x^{2k-2}+\cdots+x+1.
\]
Substituting $x=1$, we obtain
\begin{equation}
\label{eq_kawaguchi4.1}
\prod_{j=1}^{2k-1}(1-\varepsilon_{2k}^j)=2k.
\end{equation}
Since $\varepsilon_{2k}^k=-1$ and, for $1\le j\le k-1$,
\[
\varepsilon_{2k}^j-1
=
\varepsilon_{2k}^j(1-\varepsilon_{2k}^{2k-j}),
\]
we obtain
\begin{equation}
\label{eq_kawaguchi4.3}
\begin{split}
\prod_{j=1}^{k-1}(\varepsilon_{2k}^j-1)
&=
\prod_{j=1}^{k-1}
\varepsilon_{2k}^j(1-\varepsilon_{2k}^{2k-j})  \\
&=
\varepsilon_{2k}^{\frac{k(k-1)}{2}}
\prod_{j=k+1}^{2k-1}(1-\varepsilon_{2k}^j).
\end{split}
\end{equation}

Combining~\eqref{eq_kawaguchi4.1}
and~\eqref{eq_kawaguchi4.3}, we obtain
\begin{align*}
\prod_{j=1}^{k-1}(\varepsilon_{2k}^j-1)^2
&=
(-1)^{k-1}
\varepsilon_{2k}^{\frac{k(k-1)}{2}}
(1-\varepsilon_{2k}^k)^{-1}
\prod_{j=1}^{2k-1}(1-\varepsilon_{2k}^j)
\\
&=
(-1)^{k-1}
\varepsilon_{2k}^{\frac{k(k-1)}{2}}
\cdot \frac12 \cdot 2k
\\
&=
(-1)^{k-1}
k\varepsilon_{2k}^{\frac{k(k-1)}{2}},
\end{align*}
as desired.
\end{IEEEproof}

\begin{lemma}
\label{lem_kawaguchi4.3}
Let $q=2ef+1$ be a prime power such that $ef$ is even, and let
$\varepsilon_e$ be a primitive $e$-th root of unity in $\bF_q$.
Then, for each integer $1\le j<e/2$, the elements
$\varepsilon_e^j-1$ and $\varepsilon_e^{e-j}-1$ belong to the
same quadratic residue class in $\bF_q^*$.
\end{lemma}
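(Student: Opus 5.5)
Write $\varepsilon=\varepsilon_e$. Since $\varepsilon^{e}=1$, for $1\le j<e/2$ we have
\[
\varepsilon^{e-j}-1=\varepsilon^{-j}-1=-\varepsilon^{-j}(\varepsilon^{j}-1).
\]
Note that $\varepsilon^j\neq 1$ because $1\le j<e$, so both elements are nonzero. Hence $\varepsilon^{e-j}-1$ equals $\varepsilon^j-1$ multiplied by $-\varepsilon^{-j}$. Since the quadratic residues $C_0^2$ form a subgroup of index $2$ in $\bF_q^*$, it suffices to show that $-\varepsilon^{-j}\in C_0^2$. For this we show separately that $-1\in C_0^2$ and $\varepsilon\in C_0^2$.

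\textbf{The element $-1$.} We have $q-1=2ef$ with $ef$ even, so $4\mid q-1$. Writing $-1=\gamma^{(q-1)/2}=\gamma^{ef}$ for a primitive element $\gamma$, the exponent $ef$ is even. Hence $-1\in C_0^2$.

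\textbf{The element $\varepsilon$.} Any primitive $e$-th root of unity has the form $\varepsilon=\gamma^{h(q-1)/e}=\gamma^{2fh}$ with $\gcd(h,e)=1$. Its exponent is even, so $\varepsilon\in C_0^2$. The same reasoning shows more generally that every $e$-th root of unity lies in $C_0^2$, because $e$ divides $(q-1)/2=ef$. Consequently $\varepsilon^{-j}\in C_0^2$.

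Combining the two facts, $-\varepsilon^{-j}\in C_0^2$. Therefore $\varepsilon^{e-j}-1$ and $\varepsilon^j-1$ lie in the same coset of $C_0^2$, that is, in the same quadratic residue class, which proves the claim.

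The only step that requires care is checking that $-1$ is a square. This is exactly where the hypothesis that $ef$ is even is used. The rest is an exponent-parity check.
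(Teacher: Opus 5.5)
Your proof is correct and follows essentially the same route as the paper. Both arguments rest on the identity $\varepsilon_e^{e-j}-1=-\varepsilon_e^{-j}(\varepsilon_e^{j}-1)$ together with $\varepsilon_e=\gamma^{2fh}\in C_0^2$ and $-1=\gamma^{ef}\in C_0^2$; you obtain the identity directly from $\varepsilon_e^{e-j}=\varepsilon_e^{-j}$, whereas the paper derives it through a geometric-sum manipulation.
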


\begin{IEEEproof}
Since $\varepsilon_e^e=1$ and $\varepsilon_e\ne1$, we have
\[
\sum_{i=0}^{e-1}\varepsilon_e^i
=
\frac{\varepsilon_e^e-1}{\varepsilon_e-1}
=0.
\]
Hence, for $1\le j<e/2$,
\begin{align*}
\varepsilon_e^{e-j}-1
&=(\varepsilon_e-1)\sum_{i=0}^{e-j-1}\varepsilon_e^i \\
&=-(\varepsilon_e-1)\sum_{i=e-j}^{e-1}\varepsilon_e^i \\
&=-(\varepsilon_e-1)\varepsilon_e^{e-j}
\sum_{i=0}^{j-1}\varepsilon_e^i \\
&=-\varepsilon_e^{e-j}(\varepsilon_e^j-1).
\end{align*}
Let $\gamma$ be a primitive element of $\bF_q$.
Then
$\varepsilon_e=\gamma^{(q-1)h/e}
=\gamma^{2fh}\in C_0^2$
for some $1\le h\le e-1$ with $\gcd(h,e)=1$.
Moreover,
$-1=\gamma^{ef}\in C_0^2$
because $ef$ is even.
Hence
$-\varepsilon_e^{e-j}\in C_0^2$.
The final identity above therefore shows that
$\varepsilon_e^{e-j}-1$ and $\varepsilon_e^j-1$
differ by multiplication by a quadratic residue.
Hence they belong to the same quadratic residue class
in $\bF_q^*$.
\end{IEEEproof}

\medskip
We now present a cyclotomic construction related to a known
construction of difference systems of sets (DSSs)
due to Mutoh and Tonchev~\cite[Theorem~2]{mutoh2008difference}.

\begin{theorem}
\label{thm_kurimoto4.4}
Let $p=4km+1$ be a prime with $m\ge2$, and let
$\gamma$ be a fixed primitive element of $\bF_p$.
Define
\[
S_{2k}
=
\{\varepsilon_{2k}^j-1:1\le j\le k-1\},
\]
where
$\varepsilon_{2k}=\gamma^{2m}$
is a primitive $2k$-th root of unity in $\bF_p$.
For $0\le i\le m-1$, set
$A_i:=C_{2i}^{2m}$,
and let
$\A=\{A_i:0\le i\le m-1\}$.
Then $\A$ is a
$(p,(2k)^m,k(m-2)+2\alpha+d,km-(2\alpha+d))$-EPDF,
where $\alpha=|S_{2k}\cap C_1^2|$,
and
\[
d=
\begin{cases}
0 & \text{if } p\equiv1\pmod8
\ \ (\text{i.e., } km \text{ is even}), \\
1 & \text{if } p\equiv5\pmod8
\ \ (\text{i.e., } km \text{ is odd}).
\end{cases}
\]
\end{theorem}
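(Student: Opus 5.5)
The plan is to obtain the external difference list as the difference between the internal difference list of the union $H$ and the internal lists of the individual blocks. Then each piece is evaluated using Paley's theorem and the cyclotomic structure.

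First, fix the setup. Since $p=4km+1$, each class $C_{2i}^{2m}$ has size $(p-1)/(2m)=2k$. The classes $A_0,\dots,A_{m-1}$ are pairwise disjoint subsets of $\bF_p^*$, and
\[
H=\bigcup_{i=0}^{m-1}C_{2i}^{2m}=C_0^2
\]
is the set of quadratic residues; write $N=C_1^2$ for the nonresidues. Because the $A_i$ partition $H$, we have the multiset identity
\[
D(H)=\Bigl(\bigcup_{i\ne j}D(A_i,A_j)\Bigr)\cup\Bigl(\bigcup_{i}D(A_i)\Bigr).
\]
Since $p\equiv1\pmod4$, Theorem~\ref{thm_payley} gives
\[
D(H)=((km-1)\cdot H)\cup(km\cdot N).
\]
It therefore suffices to compute the multiset $\bigcup_i D(A_i)$ and subtract it.

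Next, compute the internal differences. The block $A_0=C_0^{2m}$ is the cyclic subgroup $\{\varepsilon_{2k}^j:0\le j\le 2k-1\}$. Hence
\[
D(A_0)=\bigcup_{j=1}^{2k-1}(\varepsilon_{2k}^j-1)A_0 .
\]
Since $A_i=\gamma^{2i}A_0$, we get $D(A_i)=\gamma^{2i}D(A_0)$, and therefore
\[
\bigcup_i D(A_i)=\bigcup_{j=1}^{2k-1}(\varepsilon_{2k}^j-1)\,C_0^2 .
\]
Each term $(\varepsilon_{2k}^j-1)C_0^2$ equals $H$ exactly once if $\varepsilon_{2k}^j-1\in C_0^2$, and equals $N$ exactly once otherwise. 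So the whole computation reduces to counting how many $j\in\{1,\dots,2k-1\}$ make $\varepsilon_{2k}^j-1$ a nonresidue.

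This count is the key step, and I split $j$ into three ranges.

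For $1\le j\le k-1$, the number of nonresidues is $\alpha$ by definition.

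For $k+1\le j\le 2k-1$, apply Lemma~\ref{lem_kawaguchi4.3} with $e=2k$ and $f=m$, so that $q=2ef+1=p$ and $ef=2km$ is even. The lemma shows that $\varepsilon_{2k}^{2k-j}-1$ lies in the same quadratic class as $\varepsilon_{2k}^{j}-1$. This range therefore contributes another $\alpha$.

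For $j=k$, we have $\varepsilon_{2k}^k-1=-2$. Since $p\equiv1\pmod 4$, $-1$ is a residue. By the second supplement to quadratic reciprocity, $2$ is a residue iff $p\equiv\pm1\pmod8$. Thus $-2\in N$ exactly when $p\equiv5\pmod8$, contributing $d$. Since $p-1=4km$, the condition $p\equiv5\pmod 8$ is equivalent to $km$ being odd.

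Altogether,
\[
\bigcup_i D(A_i)=((2k-1-2\alpha-d)\cdot H)\cup((2\alpha+d)\cdot N).
\]
Subtracting from $D(H)$, the multiplicity on $H$ is
\[
km-1-(2k-1-2\alpha-d)=k(m-2)+2\alpha+d,
\]
and the multiplicity on $N=\bF_p^*\setminus H$ is $km-(2\alpha+d)$. This is exactly Definition~\ref{def_EPDF} with the claimed parameters.

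I expect the main obstacle to be the multiset bookkeeping, specifically verifying that the union over $i$ of $\gamma^{2i}(\varepsilon_{2k}^j-1)A_0$ covers each element of the appropriate coset exactly once. This holds because $\{\gamma^{2i}A_0\}$ is a partition of $C_0^2$, and multiplying a partition of $C_0^2$ by a fixed nonzero element gives a partition of a single quadratic class. The pairing argument from Lemma~\ref{lem_kawaguchi4.3} also needs care at the middle index $j=k$, which is why that index is treated separately.
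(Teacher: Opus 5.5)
Your proposal is correct and follows essentially the same route as the paper. Both arguments subtract the internal list $\bigcup_i D(C_{2i}^{2m}) = C_0^2\{\varepsilon_{2k}^j-1:1\le j\le 2k-1\}$ from the Paley PDS list of $C_0^2$, then use Lemma~\ref{lem_kawaguchi4.3} to pair $j$ with $2k-j$ and the supplements to quadratic reciprocity to classify $\varepsilon_{2k}^k-1=-2$, giving exactly $2\alpha+d$ nonresidues; your explicit multiset identity $D(H)=\bigl(\bigcup_{i\ne j}D(A_i,A_j)\bigr)\cup\bigl(\bigcup_i D(A_i)\bigr)$ simply makes the paper's subtraction step explicit.
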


\begin{IEEEproof}
We first compute the internal difference list of
$C_{2i}^{2m}$:
\begin{align*}
D(C_{2i}^{2m})
&= \gamma^{2i}D(C_0^{2m}) \\
&= \gamma^{2i}
\{\varepsilon_{2k}^s-\varepsilon_{2k}^t:
0\le s\ne t\le 2k-1\} \\
&= \gamma^{2i}
\{\varepsilon_{2k}^t(\varepsilon_{2k}^{s-t}-1):
0\le s\ne t\le 2k-1\} \\
&= C_{2i}^{2m}
\{\varepsilon_{2k}^j-1:
1\le j\le 2k-1\}.
\end{align*}
Hence,
\begin{equation}
\label{eq_St}
\bigcup_{i=0}^{m-1} D(C_{2i}^{2m})
=
C_0^2
\{\varepsilon_{2k}^j-1:
1\le j\le 2k-1\}.
\end{equation}

By Theorem~\ref{thm_payley} and~\eqref{eq_St},
\begin{align*}
\bigcup_{0\le i\ne j\le m-1}
D(C_{2i}^{2m},C_{2j}^{2m})
&=
\left(
\left(
\frac{p-5}{4}
-
|\tilde S_{2k}\cap C_0^2|
\right)
\cdot C_0^2
\right)
\\
&\quad \cup
\left(
\left(
\frac{p-1}{4}
-
|\tilde S_{2k}\cap C_1^2|
\right)
\cdot C_1^2
\right),
\end{align*}
where
$\tilde S_{2k}
=
\{\varepsilon_{2k}^j-1:
1\le j\le 2k-1\}$.
Using
\[
|\tilde S_{2k}\cap C_0^2|
+
|\tilde S_{2k}\cap C_1^2|
=
2k-1,
\]
this becomes
\begin{align*}
\bigcup_{0\le i\ne j\le m-1}
D(C_{2i}^{2m},C_{2j}^{2m})
&=
\left(
(k(m-2)+|\tilde S_{2k}\cap C_1^2|)
\cdot C_0^2
\right)
\\
&\quad \cup
\left(
(km-|\tilde S_{2k}\cap C_1^2|)
\cdot C_1^2
\right).
\end{align*}

Now $\varepsilon_{2k}^k=-1$, so
$\varepsilon_{2k}^k-1=-2$.
Define
\[
d=
\begin{cases}
1 & \text{if } -2\in C_1^2,\\
0 & \text{if } -2\in C_0^2.
\end{cases}
\]
By Lemma~\ref{lem_kawaguchi4.3},
\[
|\tilde S_{2k}\cap C_1^2|
=
2|S_{2k}\cap C_1^2|+d
=
2\alpha+d.
\]
We have
$p=4km+1\equiv1\pmod4$,
so the first supplement to quadratic reciprocity gives
\[
\left(\frac{-1}{p}\right)=1.
\]
Moreover, by the second supplement to quadratic reciprocity, 
\[
\left(\frac{2}{p}\right)=
\begin{cases}
1 & \text{if } p\equiv1\pmod8
\ \ (\text{i.e., } km \text{ is even}), \\
-1 & \text{if } p\equiv5\pmod8
\ \ (\text{i.e., } km \text{ is odd}).
\end{cases}
\]
Thus,
\[
\left(\frac{-2}{p}\right)
=
\left(\frac{-1}{p}\right)
\left(\frac{2}{p}\right)
=
\left(\frac{2}{p}\right),
\]
which determines $d$ as stated.
\end{IEEEproof}

\medskip
Based on Theorem~\ref{lem_weak_AMD}
and Theorem~\ref{thm_kurimoto4.4},
we obtain the following family of weak AMD codes.

\begin{corollary}
Let $p=4km+1$ be a prime with $m\ge2$.
Then there exists a weak AMD code with parameters
\[
\left(
m,
p,
(2k)^m,
\frac{\max\{k(m-2)+2\alpha+d,\ km-(2\alpha+d)\}}
{2km}
\right),
\]
where $\alpha$ and $d$ are as in Theorem~\ref{thm_kurimoto4.4}.
\end{corollary}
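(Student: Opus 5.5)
The corollary follows by feeding the family of Theorem~\ref{thm_kurimoto4.4} into Construction~\ref{con_weak_AMD} and reading off the security from Theorem~\ref{lem_weak_AMD}. Fix a primitive element $\gamma$ of $\bF_p$ and put $A_i=C_{2i}^{2m}$ for $0\le i\le m-1$. These are distinct cyclotomic classes of index $2m$, so they are pairwise disjoint subsets of $\bF_p^*$. Each has size $e=(p-1)/(2m)=2k$. Thus the hypotheses of Construction~\ref{con_weak_AMD} hold with $G=(\bF_p,+)$, $n=p$, and $m$ blocks of size $2k$.

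By Theorem~\ref{thm_kurimoto4.4}, $\A$ is a $(p,(2k)^m,\lambda,\mu)$-EPDF with
\[
\lambda=k(m-2)+2\alpha+d,\qquad \mu=km-(2\alpha+d).
\]
Here $H=\bigcup_i A_i=C_0^2$ is the set of nonzero quadratic residues. This matches Definition~\ref{def_EPDF}: every element of $H$ has multiplicity $\lambda$ in the external difference list, and every element of $G^*\setminus H=C_1^2$ has multiplicity $\mu$.

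Next, apply Theorem~\ref{lem_weak_AMD} with block size $2k$ in place of $k$. It gives a $2k$-regular weak AMD code with parameters $(m,p,(2k)^m,\rhoEwk)$, where
\[
\rhoEwk=\max\left\{\frac{\lambda}{m\cdot 2k},\frac{\mu}{m\cdot 2k}\right\}=\frac{\max\{k(m-2)+2\alpha+d,\ km-(2\alpha+d)\}}{2km}.
\]
This is exactly the stated value.

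The only step needing any care is that the definition of a weak AMD code requires $\rhoEwk<1$. Since $0\le 2\alpha+d\le 2k-1$, we have $\lambda\le km+k-3$ and $\mu\le km$. For $m\ge2$ both bounds are below $2km$, so the code is a valid weak AMD code.
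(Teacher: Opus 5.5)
Your proof is correct and follows the paper's route exactly: the corollary is obtained by feeding the EPDF of Theorem~\ref{thm_kurimoto4.4} into Construction~\ref{con_weak_AMD} and reading off the security from Theorem~\ref{lem_weak_AMD} with block size $2k$. One small slip in your extra check that $\rhoEwk<1$: from $2\alpha+d\le 2k-1$ you get $\lambda\le km-1$, not $km+k-3$. Your bound is actually false for $k=1$, e.g.\ $p=13$, $m=3$, where $\lambda=2>1$. The conclusion is unaffected, and it also follows directly from $\lambda+\mu=2k(m-1)$ with $\lambda,\mu\ge 0$.
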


Note that the EPDF in Theorem~\ref{thm_kurimoto4.4}
becomes an EDF if and only if
\[
k(m-2)+2\alpha+d
=
km-(2\alpha+d),
\]
which is equivalent to
\begin{equation}
\label{eq_gamma}
\alpha
=
|S_{2k}\cap C_1^2|
=
\frac{k-d}{2}.
\end{equation}
Hence, the EPDF in Theorem~\ref{thm_kurimoto4.4}
is an EDF precisely in the following cases.

\begin{theorem}
\label{thm_EDF}
Let $p=4km+1$ be a prime with $m\ge2$.
The EPDF in Theorem~\ref{thm_kurimoto4.4}
is a $(p,(2k)^m,k(m-1))$-EDF if and only if one of the following holds:
\begin{enumerate}
\item[(1)]
$km$ is odd and
$\alpha=|S_{2k}\cap C_1^2|
=|S_{2k}\cap C_0^2|
=(k-1)/2$;

\item[(2)]
$k$ is even and
$\alpha=|S_{2k}\cap C_1^2|=k/2$
(equivalently,
$|S_{2k}\cap C_0^2|=k/2-1$).
\end{enumerate}
\end{theorem}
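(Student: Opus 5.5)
The plan is to start from the criterion \eqref{eq_gamma}: by Theorem~\ref{thm_kurimoto4.4}, the EPDF has $\lambda=k(m-2)+2\alpha+d$ and $\mu=km-(2\alpha+d)$, so it is an EDF exactly when $\lambda=\mu$, i.e.\ $\alpha=(k-d)/2$. In that case the common value is $\lambda=k(m-2)+(k-d)+d=k(m-1)$. This agrees with the counting identity $\lambda(p-1)=(2k)^2m(m-1)$, since $p-1=4km$. So the remaining task is to translate $\alpha=(k-d)/2$ into the two listed cases by splitting on the parity of $km$, which determines $d$.

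\emph{Case $km$ odd} ($p\equiv5\pmod 8$, $d=1$). Here $k$ is odd, so $(k-1)/2$ is an integer, and the condition reads $\alpha=(k-1)/2$. The set $S_{2k}$ consists of $k-1$ nonzero elements $\varepsilon_{2k}^j-1$ with $1\le j\le k-1$. These are distinct and nonzero because $\varepsilon_{2k}$ has order $2k$. Each lies in $C_0^2$ or $C_1^2$, so $|S_{2k}\cap C_0^2|=k-1-\alpha=(k-1)/2$ as well, which is statement (1). Conversely, (1) gives $2\alpha+d=k$, hence $\lambda=\mu$.

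\emph{Case $km$ even} ($d=0$). The condition becomes $\alpha=k/2$, which forces $k$ even because $\alpha$ is an integer. Thus if $k$ is odd (and $m$ even), no EDF arises. If $k$ is even, the condition is exactly $\alpha=k/2$, and then $|S_{2k}\cap C_0^2|=k-1-k/2=k/2-1$. This is statement (2). Conversely, if $k$ is even then $km$ is even, so $d=0$, and $\alpha=k/2$ gives $\lambda=\mu$.

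It remains to check that the case list is exhaustive. If $km$ is odd, only (1) can apply. If $km$ is even with $k$ odd, neither (1) nor (2) holds, and indeed no EDF exists by the integrality argument above. The only genuinely delicate points are this integrality argument and confirming that the elements of $S_{2k}$ are nonzero and distinct. Both are routine, so I expect no serious obstacle; the real content has already been established in Theorem~\ref{thm_kurimoto4.4}.
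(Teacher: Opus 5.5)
Your proposal is correct and follows essentially the same route as the paper. You equate $\lambda=k(m-2)+2\alpha+d$ and $\mu=km-(2\alpha+d)$ from Theorem~\ref{thm_kurimoto4.4} to get $\alpha=(k-d)/2$, split on the parity of $km$ (which fixes $d$), and add the complementary count $|S_{2k}\cap C_0^2|=k-1-\alpha$ and the integrality argument ruling out odd $k$ with even $m$, both of which the paper leaves implicit.
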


Theorem~\ref{thm_EDF} implicitly states that when $m$ is even and $k$ is odd,
any EPDF from Theorem~\ref{thm_kurimoto4.4} can never be an EDF.

Although Theorem~\ref{thm_EDF} gives a complete characterization,
computing the exact value of
$\alpha=|S_{2k}\cap C_1^2|$
quickly becomes expensive as $k$ grows.
By considering only the parity of $\alpha$,
we obtain the following simpler necessary condition.

\begin{corollary}
\label{cor_EDF}
Let $p=4km+1$ be a prime with $m\ge2$.
If the EPDF in Theorem~\ref{thm_kurimoto4.4}
is an EDF, then
\[
k^{\frac{p-1}{4}}
\equiv
\begin{cases}
1 \pmod p
& \text{if $m$ is odd or $k\equiv0\pmod4$,} \\
-1 \pmod p
& \text{if $m$ is even and $k\equiv2\pmod4$.}
\end{cases}
\]
\end{corollary}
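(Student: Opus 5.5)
The plan is to compute the quadratic character of the product
\[
P:=\prod_{j=1}^{k-1}(\varepsilon_{2k}^j-1)
\]
in two ways, and then compare the two results.

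\emph{First computation (via $\alpha$).} By the definition of $\alpha=|S_{2k}\cap C_1^2|$, exactly $\alpha$ of the factors of $P$ are nonresidues. Hence the Legendre symbol is $\left(\frac{P}{p}\right)=(-1)^{\alpha}$. By Euler's criterion this gives
\[
P^{\frac{p-1}{2}}\equiv(-1)^{\alpha}\pmod p .
\]

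\emph{Second computation (via Lemma~\ref{lem_kawaguchi4.2}).} Since $p\equiv1\pmod 4$, the exponent $(p-1)/4=km$ is an integer, and $P^{(p-1)/2}=(P^2)^{km}$. Lemma~\ref{lem_kawaguchi4.2} gives $P^2=(-1)^{k-1}k\,\varepsilon_{2k}^{l}$ with $l\equiv k(k-1)/2\pmod{2k}$. Because $\varepsilon_{2k}^k=-1$, we have $\varepsilon_{2k}^{lkm}=(-1)^{lm}$. Note that $l$ is determined modulo the even number $2k$, so its parity is well defined and equals that of $k(k-1)/2$. Therefore
\[
(-1)^{\alpha}\equiv(-1)^{(k-1)km+lm}\,k^{\frac{p-1}{4}}\pmod p ,
\]
that is, $k^{(p-1)/4}\equiv(-1)^{\alpha+(k-1)km+lm}$.

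\emph{Case analysis.} Now assume the EPDF is an EDF and substitute the value of $\alpha$ from Theorem~\ref{thm_EDF}. In both cases $(k-1)km$ is even, because either $k-1$ or $k$ is even.
\begin{itemize}
\item[(1)] $k,m$ odd, with $\alpha=(k-1)/2$. Since $m$ is odd, $lm\equiv k(k-1)/2\equiv(k-1)/2\pmod 2$. The exponent is therefore $\equiv 2\cdot\frac{k-1}{2}\equiv0$, which gives $k^{(p-1)/4}\equiv1$. This is consistent with the claim, since $m$ is odd here.
\item[(2)] $k$ even, with $\alpha=k/2$. Here $l\equiv k(k-1)/2\equiv k/2\pmod 2$, so the exponent is $\equiv\frac k2(1+m)\pmod 2$.
\begin{itemize}
\item[$\bullet$] If $m$ is odd, this is even, so $k^{(p-1)/4}\equiv1$.
\item[$\bullet$] If $m$ is even, the parity of the exponent is that of $k/2$. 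This gives $+1$ when $k\equiv0\pmod4$ and $-1$ when $k\equiv2\pmod 4$.
\end{itemize}
\end{itemize}
These are exactly the cases in the statement. The case $m$ even with $k$ odd cannot occur, since by Theorem~\ref{thm_EDF} no EDF arises there.

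The only delicate point is the parity bookkeeping for $l$ and for $\varepsilon_{2k}^{lkm}$. In particular, one must check that reducing $l$ modulo $2k$ does not change its parity, which holds because $2k$ is even. Everything else follows directly from Lemma~\ref{lem_kawaguchi4.2} and Euler's criterion.
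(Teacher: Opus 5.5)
Your proposal is correct and follows essentially the same route as the paper. Both arguments compare the quadratic character $(-1)^{\alpha}$ of $\prod_{x\in S_{2k}}x$ with the identity of Lemma~\ref{lem_kawaguchi4.2}, and then substitute the value of $\alpha$ from Theorem~\ref{thm_EDF}. The only difference is bookkeeping: you raise to the power $(p-1)/4$ and track exponent parities directly, whereas the paper tracks membership in the classes $C_{2i}^4$ and collects the cases in Table~\ref{tab_simple-cond}. The two are equivalent, and your version is somewhat more compact.
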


\begin{IEEEproof}
By Lemma~\ref{lem_kawaguchi4.2}, we have
\[
\prod_{x \in S_{2k}} x^2 =
(-1)^{k-1} k\varepsilon_{2k}^l \in C_{2i}^4,
\]
where $l \equiv \frac{k(k-1)}{2} \pmod{2k}$, and $i = 0$ or $1$
depending on whether
$\alpha = |S_{2k} \cap C_1^2|$
is even or odd.
Since
$(-1)^{k-1}
=\gamma^{2k(k-1)m}$
and $2k(k-1)m\equiv0\pmod4$,
we have
$(-1)^{k-1}\in C_0^4$.
Hence
\begin{equation}
\label{eq_simple-cond}
k\varepsilon_{2k}^l \in C_{2i}^4.
\end{equation}
It follows from \eqref{eq_gamma} that for $\alpha = \frac{k - d}{2}$ to be an integer, $k$ and $d$ must have the same parity.
The parity of $\alpha$ then depends on whether $d = 0$ or $d = 1$, i.e., whether $km$ is even or odd. 
If $d = 1$, then $\alpha = \frac{k-1}{2}$, 
which is even when $k \equiv 1 \pmod{4}$ and odd when $k \equiv 3 \pmod{4}$. 
If $d = 0$, then $\alpha = \frac{k}{2}$, 
which is even when $k \equiv 0 \pmod{4}$ and odd when $k \equiv 2 \pmod{4}$.
Theorem~\ref{thm_EDF} implies that in \eqref{eq_simple-cond}, we have
\[
  i = \begin{cases}
  0, & \mbox{if $k \equiv 0 \pmod{4}$, or $m$ is odd and $k \equiv 1 \pmod{4}$,} \\
  1, & \mbox{if $k \equiv 2 \pmod{4}$, or $m$ is odd and $k \equiv 3 \pmod{4}$,}
  \end{cases}
\]
under the assumption that the EPDF from Theorem~\ref{thm_kurimoto4.4} is an EDF.

Since $\varepsilon_{2k} = \gamma^{2m}$, we have
\begin{equation}
  \label{eq_simple-cond2}
  \varepsilon_{2k}^l = \gamma^{mk(k-1)} \in C_{2i}^4,
\end{equation}
with
\[
  i = \begin{cases}
    0, & \mbox{if $m$ is even or $k \equiv 0, 1 \pmod{4}$,} \\
    1, & \mbox{if $m$ is odd and $k \equiv 2,3 \pmod{4}$.}
  \end{cases}
\]
The values of $i$ in \eqref{eq_simple-cond} and~\eqref{eq_simple-cond2} are summarized in Table~\ref{tab_simple-cond}.
\begin{table}[htbp]
 \begin{center}
   \caption{Values of $i$ in (\ref{eq_simple-cond})
   and (\ref{eq_simple-cond2});
   entries marked N/A correspond to cases excluded by Theorem~\ref{thm_EDF}. 
    \label{tab_simple-cond}}
\begin{tabular}{cc||cc|cc|cc|cc}\hline
   & $k \pmod{4}$ & \multicolumn{2}{c|}{$0$} & \multicolumn{2}{c|}{$1$} & \multicolumn{2}{c|}{$2$} & \multicolumn{2}{c}{$3$} \\
   & & $k\varepsilon_{2k}^l$ & $\varepsilon_{2k}^l$ & $k\varepsilon_{2k}^l$ & $\varepsilon_{2k}^l$ & $k\varepsilon_{2k}^l$ & $\varepsilon_{2k}^l$ & $k\varepsilon_{2k}^l$ & $\varepsilon_{2k}^l$ \\ 
 \hline
   \multirow{2}{*}{$m$} & odd & 0 & 0 & 0 & 0 & 1 & 1 & 1 & 1 \\
   & even & 0 & 0 & {\footnotesize N/A} & 0 & 1 & 0 & {\footnotesize N/A} & 0 \\ \hline
\end{tabular}
\end{center}
\end{table}
From Table~\ref{tab_simple-cond}, we conclude that $k \in C_{2i}^4$,
and hence $k^{(p-1)/4} \equiv (-1)^i \pmod{p}$, where
\[
  i = \begin{cases}
  0 & \text{if $m$ is odd or $k \equiv 0 \pmod{4}$,}\\
  1 & \text{if $m$ is even and $k \equiv 2 \pmod{4}$,}
  \end{cases}
\]
which completes the proof.
\end{IEEEproof}

\medskip
By \cite[Note~2 and Theorem~12]{mutoh2008difference},
the necessary condition in Corollary~\ref{cor_EDF}
is also sufficient when $k=2$ or $3$
in the setting of perfect regular DSSs.
In the remainder of this section, we examine the cases
$k=4,5,6,$ and $7$ individually and derive explicit
criteria for the resulting EPDFs to become EDFs.
For larger values of $k$, the corresponding residue conditions
become increasingly complicated.
To this end, we first prove the following lemma.

\begin{lemma}
  \label{lem_e}
For $m \ge 2$, let $p=4km+1$ be a prime and $\varepsilon_{2k}$ be a primitive $2k$-th root of unity in $\bF_p$.
Then, for $1 \le i < k$, we have
\[
\left(\frac{\varepsilon_{2k}^{k-i}-1}{p}\right) = \left(\frac{\varepsilon_{2k}^i+1}{p}\right).
\]
\end{lemma}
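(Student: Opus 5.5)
We want to show that $\varepsilon_{2k}^{k-i}-1$ and $\varepsilon_{2k}^i+1$ differ by a factor that is a quadratic residue modulo $p$. The plan is to use $\varepsilon_{2k}^k=-1$ to rewrite one element as a multiple of the other. Put $\varepsilon:=\varepsilon_{2k}$. Then $\varepsilon^{k-i}=\varepsilon^k\varepsilon^{-i}=-\varepsilon^{-i}$, so
\[
\varepsilon^{k-i}-1=-\varepsilon^{-i}-1=-\varepsilon^{-i}(1+\varepsilon^{i})=-\varepsilon^{-i}(\varepsilon^i+1).
\]
Since $1\le i<k$, we have $\varepsilon^i\ne -1$, so both sides are nonzero and the Legendre symbols are $\pm1$. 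By multiplicativity of the Legendre symbol, it then suffices to show that $\left(\frac{-\varepsilon^{-i}}{p}\right)=1$.

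Next we check that $-1$ and $\varepsilon$ are quadratic residues. Because $p=4km+1\equiv1\pmod4$, the first supplement gives $\left(\frac{-1}{p}\right)=1$; this was already used in the proof of Theorem~\ref{thm_kurimoto4.4}. Every primitive $2k$-th root of unity has the form $\gamma^{h(p-1)/(2k)}=\gamma^{2mh}$ with $\gcd(h,2k)=1$, so it lies in $C_0^2$. This is the same observation as in the proof of Lemma~\ref{lem_kawaguchi4.3}. Hence $\varepsilon^{-i}\in C_0^2$.

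Combining these, $-\varepsilon^{-i}\in C_0^2$, and the identity above yields the claimed equality of Legendre symbols. There is no real obstacle here; the only points needing care are the nonvanishing of $\varepsilon^i+1$ and the fact that any primitive $2k$-th root of unity, not just $\gamma^{2m}$, is a square. The latter follows because $2k$ divides $(p-1)/2$.
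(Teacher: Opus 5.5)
Your proof is correct and is essentially the paper's argument: the paper writes $\varepsilon_{2k}^{k-i}-1=\varepsilon_{2k}^{k-i}(\varepsilon_{2k}^i+1)$ and uses only that $\varepsilon_{2k}\in C_0^2$. Your factor $-\varepsilon^{-i}$ is the same element $\varepsilon^{k-i}$, so your separate appeal to $\left(\frac{-1}{p}\right)=1$ is harmless but not needed.
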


\begin{IEEEproof}
Let $\gamma$ be a primitive element of $\bF_p$.
Since
$\varepsilon_{2k}=\gamma^{2m}\in C_0^2$
and
$-1=\varepsilon_{2k}^k$,
the assertion follows from
\[
\varepsilon_{2k}^{k-i}-1
=
\varepsilon_{2k}^{k-i}
+
\varepsilon_{2k}^k
=
\varepsilon_{2k}^{k-i}
(\varepsilon_{2k}^i+1).
\]
\end{IEEEproof}

\begin{corollary}
  \label{cor_k4}
  Let $p = 16m + 1$ be a prime with $m \ge 2$. Then $\D = \{C_{2i}^{2m}: \ 0 \le i \le m-1\}$ forms a $(p,8^m,4(m-1))$-EDF if and only if
  at least one of $\left(\tfrac{\varepsilon_8 - 1}{p}\right)$ and $\left(\tfrac{\varepsilon_8 + 1}{p}\right)$ equals $-1$,
  where $\varepsilon_8$ denotes a primitive $8$-th root of unity in $\bF_p$.
\end{corollary}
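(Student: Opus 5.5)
By Theorem~\ref{thm_EDF} with $k=4$, the statement reduces to a count of quadratic characters. Here $p=16m+1$, so $km=4m$ is even, $d=0$, and we are in case (2). Hence $\D$ is a $(p,8^m,4(m-1))$-EDF if and only if
\[
\alpha=|S_8\cap C_1^2|=2,
\qquad\text{where } S_8=\{\varepsilon_8-1,\ \varepsilon_8^2-1,\ \varepsilon_8^3-1\}.
\]
Note that $\varepsilon_8=\gamma^{2m}$ is the primitive $8$-th root of unity used in Theorem~\ref{thm_kurimoto4.4}. The criterion in the statement must hold for any primitive $8$-th root of unity. The other choices are $\varepsilon_8^3,\varepsilon_8^5,\varepsilon_8^7$, and they permute the set $\{\pm\varepsilon_8^{\pm1}\}$ up to the relevant symmetries. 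I would check this at the end; it follows from the symmetric form of the final criterion together with Lemma~\ref{lem_kawaguchi4.3}.

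Write $a=\left(\frac{\varepsilon_8-1}{p}\right)$ and $b=\left(\frac{\varepsilon_8+1}{p}\right)$. The plan is to express the Legendre symbols of the three elements of $S_8$ in terms of $a$ and $b$.

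\begin{itemize}
\item The first element contributes $a$ directly.
\item Lemma~\ref{lem_e} with $k=4$, $i=1$ gives $\left(\frac{\varepsilon_8^3-1}{p}\right)=\left(\frac{\varepsilon_8+1}{p}\right)=b$.
\item The factorization $\varepsilon_8^2-1=(\varepsilon_8-1)(\varepsilon_8+1)$ and multiplicativity of the Legendre symbol give $\left(\frac{\varepsilon_8^2-1}{p}\right)=ab$.
\end{itemize}
All three elements are nonzero, since $\varepsilon_8^j\neq1$ for $1\le j\le3$ and $\varepsilon_8\neq-1$. So the symbols are well defined and take values in $\{\pm1\}$.

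The characters of the elements of $S_8$ are therefore $(a,ab,b)$, and $\alpha$ is the number of entries equal to $-1$. Checking the four cases:

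\begin{itemize}
\item $a=b=1$ gives $(1,1,1)$, so $\alpha=0$.
\item $a=1$, $b=-1$ gives $(1,-1,-1)$, so $\alpha=2$.
\item $a=-1$, $b=1$ gives $(-1,-1,1)$, so $\alpha=2$.
\item $a=b=-1$ gives $(-1,1,-1)$, so $\alpha=2$.
\end{itemize}

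Thus $\alpha=2$ exactly when at least one of $a,b$ equals $-1$, which is the claimed criterion.

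The only potentially delicate point is that the lemmas are stated for $\varepsilon_{2k}=\gamma^{2m}$, while the corollary refers to an arbitrary primitive $8$-th root of unity. I expect this to be routine. Replacing $\varepsilon_8$ by another primitive $8$-th root only relabels the elements of $S_8$ via Lemma~\ref{lem_kawaguchi4.3} and Lemma~\ref{lem_e}. Moreover, the condition ``not both $a,b$ equal $1$'' is invariant under $\varepsilon_8\mapsto\varepsilon_8^{-1}$: indeed $\varepsilon_8^{-1}\pm1=\pm\varepsilon_8^{-1}(\varepsilon_8\pm1)$, and both $-1$ and $\varepsilon_8^{-1}$ are squares. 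It is likewise invariant under $\varepsilon_8\mapsto-\varepsilon_8$, which swaps the roles of $a$ and $b$.
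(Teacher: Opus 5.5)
Your proposal is correct and is essentially the paper's proof. You reduce via Theorem~\ref{thm_EDF} (case (2), $d=0$) to $|S_8\cap C_1^2|=2$, then use $\varepsilon_8^2-1=(\varepsilon_8-1)(\varepsilon_8+1)$ and $\varepsilon_8^3-1=\varepsilon_8^3(\varepsilon_8+1)$ to get the character pattern $(a,ab,b)$, so $\alpha\in\{0,2\}$ with $\alpha=2$ exactly when $(a,b)\ne(1,1)$. Your closing check that the criterion is invariant under $\varepsilon_8\mapsto\varepsilon_8^{-1}$ and $\varepsilon_8\mapsto-\varepsilon_8$, and hence independent of which primitive $8$-th root is used, is correct and makes explicit a point the paper leaves implicit.
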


\begin{IEEEproof}
By Theorem~\ref{thm_EDF} with $k = 4$, the set system $\D$ is a $(p, 8^m, 4(m - 1))$-EDF
if and only if Condition~(2) of Theorem~\ref{thm_EDF} holds, that is, $|S_8 \cap C_1^2| = 2$ (equivalently, $|S_8 \cap C_0^2| = 1$).

Note that
\begin{align*}
  \varepsilon_8^3 - 1 &= \varepsilon_8^3 + \varepsilon_8^4 = \varepsilon_8^3(\varepsilon_8 + 1), \\
  \varepsilon_8^2 - 1 &= (\varepsilon_8 - 1)(\varepsilon_8 + 1),
\end{align*}
and since $\left(\tfrac{\varepsilon_8}{p}\right) = \left(\tfrac{\gamma^{2m}}{p}\right) = 1$, we have $\varepsilon_8 \in C_0^2$.
Therefore, $|S_8 \cap C_1^2| = 2$ if and only if at least one of $\varepsilon_8 - 1$ and $\varepsilon_8 + 1$ lies in $C_1^2$,
which is equivalent to
\[
\left(\frac{\varepsilon_8 - 1}{p}\right) = -1 \quad \text{or} \quad \left(\frac{\varepsilon_8 + 1}{p}\right) = -1.
\]
\end{IEEEproof}

\medskip
By Theorem~\ref{thm_weak_AMD}
and Corollary~\ref{cor_k4},
we obtain the following family of optimal weak AMD codes.

\begin{corollary}
Let $p=16m+1$ be a prime with $m\ge2$.
If at least one of
$\left(\tfrac{\varepsilon_8-1}{p}\right)$
and
$\left(\tfrac{\varepsilon_8+1}{p}\right)$
equals $-1$,
where $\varepsilon_8$ denotes a primitive
$8$-th root of unity in $\bF_p$,
then there exists an R-optimal
$8$-regular weak
$(m,p,8^m,\frac{m-1}{2m})$-AMD code.
\end{corollary}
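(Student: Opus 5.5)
This corollary follows directly by chaining Corollary~\ref{cor_k4} with Theorem~\ref{thm_weak_AMD}, so the plan is a short parameter check. Set $k=4$ in Theorem~\ref{thm_kurimoto4.4}, so that $p=4km+1=16m+1$. By Corollary~\ref{cor_k4}, the hypothesis that $\left(\tfrac{\varepsilon_8-1}{p}\right)=-1$ or $\left(\tfrac{\varepsilon_8+1}{p}\right)=-1$ guarantees that $\D=\{C_{2i}^{2m}:0\le i\le m-1\}$ is a $(p,8^m,4(m-1))$-EDF in the additive group $G=\bF_p$.

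One might worry that this depends on which primitive $8$-th root of unity is chosen. It does not: Corollary~\ref{cor_k4} is stated for an arbitrary primitive $8$-th root $\varepsilon_8$. Moreover, the EDF property of $\D$ is independent of that choice, because $\D$ is defined via $\gamma$. So no extra argument is needed here.

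Next apply Theorem~\ref{thm_weak_AMD} (equivalently, Construction~\ref{con_weak_AMD} together with Theorem~\ref{lem_weak_AMD} with $\lambda=\mu$). Take $n=p$, block size $8$, and $m$ blocks. This yields an $R$-optimal $8$-regular weak $(m,p,8^m,\rhoEwk)$-AMD code with
\[
\rhoEwk=\frac{8(m-1)}{p-1}=\frac{8(m-1)}{16m}=\frac{m-1}{2m}.
\]
As a consistency check, Theorem~\ref{lem_weak_AMD} gives the same value: $\lambda/(mk)=4(m-1)/(8m)=(m-1)/(2m)$. This matches the weak $R$-bound $k(m-1)/(n-1)$ of Theorem~\ref{thm_weak_R_bound} with equality.

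The only step requiring any care is verifying that the EDF parameters $\lambda=k(m-1)$ satisfy the counting identity $\lambda(n-1)=k'^2m(m-1)$, where the block size is $k'=2k=8$. Indeed, $4(m-1)\cdot 16m=64m(m-1)$, so the identity holds. Beyond this arithmetic there is no real obstacle.
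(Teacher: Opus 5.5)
Your proposal is correct and matches the paper's argument exactly: the paper obtains this corollary in one line by feeding the $(p,8^m,4(m-1))$-EDF of Corollary~\ref{cor_k4} into Theorem~\ref{thm_weak_AMD}, and your computation $8(m-1)/(16m)=(m-1)/(2m)$ is right. Your side remark on the choice of $\varepsilon_8$ is harmless, but the reason you give is off. The criterion is choice-free because $\D$ does \emph{not} depend on $\gamma$ (it is the set of all cosets of the order-$8$ subgroup inside $C_0^2$), or, directly, because passing to another primitive $8$-th root at most swaps the two Legendre symbols (e.g.\ $\varepsilon_8^3-1=\varepsilon_8^3(\varepsilon_8+1)$ and $\varepsilon_8^3+1=-\varepsilon_8^3(\varepsilon_8-1)$).
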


\begin{example}\rm
  \label{ex_k4}
  The following table lists the values of
  $\left(\tfrac{\varepsilon_8-1}{p}\right)$
  and
  $\left(\tfrac{\varepsilon_8+1}{p}\right)$
  for primes of the form $p=16m+1$ with $p<700$.
  For each prime $p$, a primitive element $\gamma$ of $\bF_p$ was chosen and
  $\varepsilon_8=\gamma^{2m}$ was computed accordingly.
  Rows shaded in gray indicate primes that do not satisfy the condition given in Corollary~\ref{cor_k4}.
  {\small
  \begin{center}
  \begin{tabular}{r|r||r|r||r|r}\hline
    \multicolumn{1}{c|}{$p$} & $m$ & \multicolumn{1}{c|}{$\gamma$} & \multicolumn{1}{c||}{$\varepsilon_8$} & $\left(\tfrac{\varepsilon_8-1}{p}\right)$ & $\left(\tfrac{\varepsilon_8+1}{p}\right)$ \\ \hline
    $97$ & $6$ & $5$ & $64$ & $-1$ & $1$ \\
    $113$ & $7$ & $3$ & $18$ & $-1$ & $-1$ \\
    $193$ & $12$ & $5$ & $43$ & $1$ & $-1$ \\
    $241$ & $15$ & $7$ & $30$ & $1$ & $-1$ \\
    $257$ & $16$ & $3$ & $64$ & $-1$ & $-1$ \\
    \rowcolor{black!10} $337$ & $21$ & $10$ & $85$ & $1$ & $1$ \\
    $353$ & $22$ & $3$ & $283$ & $-1$ & $-1$ \\
    $401$ & $25$ & $3$ & $45$ & $1$ & $-1$ \\
    $433$ & $27$ & $5$ & $354$ & $-1$ & $1$ \\
    $449$ & $28$ & $3$ & $122$ & $1$ & $-1$ \\
    \rowcolor{black!10} $577$ & $36$ & $5$ & $391$ & $1$ & $1$ \\
    \rowcolor{black!10} $593$ & $37$ & $3$ & $392$ & $1$ & $1$ \\
    $641$ & $40$ & $3$ & $318$ & $1$ & $-1$ \\
    $673$ & $42$ & $5$ & $609$ & $-1$ & $1$ \\\hline
  \end{tabular}
  \end{center}
  }
\end{example}

\begin{corollary}
\label{cor_k5}
Let $p=20m+1$ be a prime with $m\ge2$. Then
$\D=\{C_{2i}^{2m}:0\le i\le m-1\}$ forms a
$(p,10^m,5(m-1))$-EDF if and only if $m$ is odd and
\begin{equation}
\label{eq_k5}
\left(\frac{\varepsilon_{10}-1}{p}\right)=1
\quad\text{and}\quad
\left(\frac{\varepsilon_{10}+1}{p}\right)=-1,
\end{equation}
where $\varepsilon_{10}$ denotes a primitive $10$-th root of unity in
$\bF_p$.
\end{corollary}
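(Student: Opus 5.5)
The plan is to reduce everything to Theorem~\ref{thm_EDF} with $k=5$, and then compute $\alpha=|S_{10}\cap C_1^2|$ explicitly in terms of the two Legendre symbols $a:=\left(\frac{\varepsilon_{10}-1}{p}\right)$ and $b:=\left(\frac{\varepsilon_{10}+1}{p}\right)$. Since $k=5$ is odd, Condition~(2) of Theorem~\ref{thm_EDF} cannot occur. So $\D$ is a $(p,10^m,5(m-1))$-EDF if and only if Condition~(1) holds, namely $km=5m$ is odd and $\alpha=(5-1)/2=2$. This already forces $m$ to be odd. It remains to show that, for $m$ odd, $\alpha=2$ is equivalent to $a=1$ and $b=-1$.

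Write $\varepsilon=\varepsilon_{10}$ and $\chi(x)=\left(\frac{x}{p}\right)$, so that $S_{10}=\{\varepsilon-1,\varepsilon^2-1,\varepsilon^3-1,\varepsilon^4-1\}$. I will express the quadratic character of each element of $S_{10}$ through $a$ and $b$.
\begin{itemize}
\item $\chi(\varepsilon-1)=a$ by definition.
\item Since $\varepsilon^2-1=(\varepsilon-1)(\varepsilon+1)$, we get $\chi(\varepsilon^2-1)=ab$.
\item Lemma~\ref{lem_e} with $k=5$, $i=1$ gives $\chi(\varepsilon^4-1)=\chi(\varepsilon+1)=b$.
\item Lemma~\ref{lem_e} with $i=2$ gives $\chi(\varepsilon^3-1)=\chi(\varepsilon^2+1)$. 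To evaluate this, use the factorization $\varepsilon^4-1=(\varepsilon^2-1)(\varepsilon^2+1)$. It yields $\chi(\varepsilon^2+1)=\chi(\varepsilon^4-1)\,\chi(\varepsilon^2-1)=b\cdot ab=a$.
\end{itemize}
Hence the characters of the four elements of $S_{10}$ are $(a,\,ab,\,a,\,b)$.

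Now I enumerate the four sign patterns.
\begin{itemize}
\item $a=b=1$: there are no $-1$'s, so $\alpha=0$.
\item $a=1$, $b=-1$: the pattern is $(1,-1,1,-1)$, so $\alpha=2$.
\item $a=-1$, $b=1$: the pattern is $(-1,-1,-1,1)$, so $\alpha=3$.
\item $a=b=-1$: the pattern is $(-1,1,-1,-1)$, so $\alpha=3$.
\end{itemize}
Thus $\alpha=2$ exactly when $a=1$ and $b=-1$, which is \eqref{eq_k5}. Combined with the requirement that $m$ be odd, this gives the corollary.

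The only step requiring care is finding the character of $\varepsilon^3-1$, which is not directly a product of $\varepsilon\pm1$. Lemma~\ref{lem_e} handles it via $\varepsilon^2+1$, and the factorization of $\varepsilon^4-1$ then closes the loop. I should also check that $\varepsilon_{10}$ can be taken as $\gamma^{2m}$, as in Theorem~\ref{thm_kurimoto4.4}. The criterion is independent of which primitive $10$-th root is chosen: $C_1^2$ is closed under multiplication by squares, and the Galois-type relabeling $\varepsilon\mapsto\varepsilon^h$ with $\gcd(h,10)=1$ permutes $S_{10}$. I would note this briefly, mirroring the remark implicit in Corollary~\ref{cor_k4}.
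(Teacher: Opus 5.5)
Your proof is correct and follows the paper's route. You reduce via Theorem~\ref{thm_EDF} to $m$ odd and $|S_{10}\cap C_1^2|=2$, then get the characters of $S_{10}$ from $\varepsilon_{10}^2-1=(\varepsilon_{10}-1)(\varepsilon_{10}+1)$, $\varepsilon_{10}^4-1=(\varepsilon_{10}^2-1)(\varepsilon_{10}^2+1)$ and Lemma~\ref{lem_e}. The only difference is that you enumerate all four sign patterns of $(a,b)$ directly. The paper instead first uses the parity of $\prod_{x\in S_{10}}x$ to force $\varepsilon_{10}-1\in C_0^2$, and then tabulates the two remaining cases.

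One small correction to your closing remark: $\varepsilon\mapsto\varepsilon^h$ does not literally permute $S_{10}$ (for $h=3$ it produces $\varepsilon^6-1$ and $\varepsilon^9-1$). It does permute $S_{10}$ up to square factors by Lemma~\ref{lem_kawaguchi4.3}, and that is all your independence argument needs.
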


\begin{IEEEproof}
By Theorem~\ref{thm_EDF} with $k=5$, the set system $\D$ is a
$(p,10^m,5(m-1))$-EDF if and only if Condition~(1) of
Theorem~\ref{thm_EDF} holds, namely, $m$ is odd and
$|S_{10}\cap C_1^2|=|S_{10}\cap C_0^2|=2$.

As in the proof of Lemma~\ref{lem_e}, we use
\begin{align*}
\varepsilon_{10}^4-1
&=(\varepsilon_{10}^2+1)(\varepsilon_{10}^2-1),\\
\varepsilon_{10}^3-1
&=\varepsilon_{10}^3(\varepsilon_{10}^2+1).
\end{align*}
Together with
$\varepsilon_{10}^2-1=(\varepsilon_{10}-1)(\varepsilon_{10}+1)$,
this gives
\[
\prod_{x\in S_{10}}x
=
\prod_{i=1}^4(\varepsilon_{10}^i-1)
=
\varepsilon_{10}^3
(\varepsilon_{10}-1)(\varepsilon_{10}^2-1)^2
(\varepsilon_{10}^2+1)^2.
\]
Since $\varepsilon_{10}\in C_0^2$, the condition
$|S_{10}\cap C_1^2|=2$ implies
$\varepsilon_{10}-1\in C_0^2$.
Under this assumption, the remaining possible combinations
of Legendre symbols are summarized in the following table:
\begin{center}
{\small
\begin{tabular}{c||c|c|c}\hline
\multirow{2}{*}{$\varepsilon_{10}-1$}
&
$\varepsilon_{10}+1$
&
$\varepsilon_{10}^2+1$
&
\multirow{2}{*}{$\varepsilon_{10}^2-1$}
\\
&
$(\varepsilon_{10}^4-1)$
&
$(\varepsilon_{10}^3-1)$
&
\\ \hline\hline
\multirow{2}{*}{\textbf{1}}
& $1$ & $1$ & $1$ \\
& $-1$ & $1$ & $-1$ \\ \hline
\end{tabular}
}
\end{center}
Here we also used Lemma~\ref{lem_e}.
The second row of the table is therefore equivalent to \eqref{eq_k5},
which proves the corollary.
\end{IEEEproof}

\medskip
By Theorem~\ref{thm_weak_AMD}
and Corollary~\ref{cor_k5},
we obtain the following family of optimal weak AMD codes.

\begin{corollary}
Let $p=20m+1$ be a prime with $m\ge2$.
If $m$ is odd and \eqref{eq_k5} holds,
then there exists an R-optimal
$10$-regular weak
$(m,p,10^m,\frac{m-1}{2m})$-AMD code.
\end{corollary}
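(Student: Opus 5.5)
This corollary follows by combining the two earlier results. I would apply Theorem~\ref{thm_weak_AMD} to the EDF supplied by Corollary~\ref{cor_k5}.

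First, assume $m$ is odd and \eqref{eq_k5} holds. Corollary~\ref{cor_k5}, which specializes Theorem~\ref{thm_kurimoto4.4} and Theorem~\ref{thm_EDF} to $k=5$, then says that $\D=\{C_{2i}^{2m}:0\le i\le m-1\}$ is a $(p,10^m,5(m-1))$-EDF in $\bF_p$. I would check that the parameters are consistent with the EDF counting identity $\lambda(n-1)=k^2m(m-1)$: here $5(m-1)\cdot 20m=100\,m(m-1)$, as required.

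Next, I would feed this EDF into Construction~\ref{con_weak_AMD}. This uses the blocks $A_i=C_{2i}^{2m}\subseteq\bF_p^*$, with equiprobable sources and encodings. The blocks avoid $0$, as Definition~\ref{def_EPDF} requires, and they are pairwise disjoint because they are distinct cyclotomic classes. Theorem~\ref{thm_weak_AMD} (equivalently, Theorem~\ref{lem_weak_AMD} with $\lambda=\mu$) then gives an $R$-optimal $10$-regular weak AMD code. Its tampering probability is
\[
\frac{k(m-1)}{n-1}=\frac{10(m-1)}{20m}=\frac{m-1}{2m},
\]
which is also $\lambda/(mk)=5(m-1)/(10m)$. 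This meets the weak $R$-bound of Theorem~\ref{thm_weak_R_bound} with equality.

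There is no genuine obstacle. The only points to check are the arithmetic $k(m-1)/(n-1)=(m-1)/(2m)$ and that the setting satisfies the hypotheses of Theorem~\ref{thm_kurimoto4.4}, namely $p=4km+1$ with $k=5$ and $m\ge2$.
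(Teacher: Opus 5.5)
Your proposal is correct and follows the paper's argument: the corollary is obtained by applying Theorem~\ref{thm_weak_AMD} to the $(p,10^m,5(m-1))$-EDF supplied by Corollary~\ref{cor_k5}. Your parameter checks, $\lambda(n-1)=100\,m(m-1)$ and $10(m-1)/(20m)=(m-1)/(2m)$, are the only computations needed.
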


\begin{example}\rm
  \label{ex_k5}
  The following table lists the values of
  $\left(\tfrac{\varepsilon_{10}-1}{p}\right)$
  and
  $\left(\tfrac{\varepsilon_{10}+1}{p}\right)$
  for primes of the form
  $p=20m+1$ with odd $m$ and $p<1500$.
  For each prime $p$, a primitive element $\gamma$ of $\bF_p$
  was chosen and
  $\varepsilon_{10}=\gamma^{2m}$ was computed accordingly.
  Rows shaded in gray indicate primes that do not satisfy the condition in Corollary~\ref{cor_k5}.
  \begin{center}
  {\small
  \begin{tabular}{r|r||r|r||r|r}\hline
    \multicolumn{1}{c|}{$p$} & $m$ & \multicolumn{1}{c|}{$\gamma$} & \multicolumn{1}{c||}{$\varepsilon_{10}$} & $\left(\tfrac{\varepsilon_{10}-1}{p}\right)$ & $\left(\tfrac{\varepsilon_{10}+1}{p}\right)$ \\ \hline
    \rowcolor{black!10} $61$ & $3$ & $2$ & $3$ & $-1$ & $1$ \\
    $101$ & $5$ & $2$ & $14$ & $1$ & $-1$ \\
    $181$ & $9$ & $2$ & $56$ & $1$ & $-1$ \\
    \rowcolor{black!10} $421$ & $21$ & $2$ & $67$ & $-1$ & $1$ \\
    \rowcolor{black!10} $461$ & $23$ & $2$ & $347$ & $1$ & $1$ \\
    \rowcolor{black!10} $541$ & $27$ & $2$ & $313$ & $1$ & $1$ \\
    \rowcolor{black!10} $661$ & $33$ & $2$ & $255$ & $-1$ & $1$ \\
    \rowcolor{black!10} $701$ & $35$ & $2$ & $612$ & $-1$ & $1$ \\
    \rowcolor{black!10} $821$ & $41$ & $2$ & $660$ & $-1$ & $-1$ \\
    $941$ & $47$ & $2$ & $592$ & $1$ & $-1$ \\
    \rowcolor{black!10} $1021$ & $51$ & $10$ & $432$ & $1$ & $1$ \\
    \rowcolor{black!10} $1061$ & $53$ & $2$ & $406$ & $1$ & $1$ \\
    \rowcolor{black!10} $1181$ & $59$ & $7$ & $1100$ & $-1$ & $1$ \\
    \rowcolor{black!10} $1301$ & $65$ & $2$ & $282$ & $-1$ & $1$ \\
    \rowcolor{black!10} $1381$ & $69$ & $2$ & $1306$ & $-1$ & $-1$ \\ \hline
  \end{tabular}
  }
  \end{center}
  The first ten primes satisfying
  Corollary~\ref{cor_k5} are
  \[
    101,\ 181,\ 941,\ 1861,\ 3301,\ 3581,\ 4021,\ 4621,\ 4861,\ 5101.
  \]
\end{example}

\begin{corollary}
  \label{cor_k6}
  Let $p = 24m +1$ be a prime with $m \ge 2$.
  Then $\D=\{C_{2i}^{2m}: \ 0\le i \le m-1\}$ forms a $(p,12^m,6(m-1))$-EDF if and only if
  \begin{equation}
  \label{eq_k6}
  \left(\frac{(\varepsilon_{12}^2+1)(\varepsilon_{12}^3-1)}{p}\right) = \left(\frac{\varepsilon_{12}+1}{p}\right) = \left(\frac{\varepsilon_{12}-1}{p}\right) = -1,
  \end{equation}
  where $\varepsilon_{12}$ denotes a primitive $12$-th root of unity in $\bF_p$.
\end{corollary}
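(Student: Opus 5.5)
By Theorem~\ref{thm_EDF} with $k=6$ (which is even), $\D$ is a $(p,12^m,6(m-1))$-EDF if and only if Condition~(2) holds, namely $|S_{12}\cap C_1^2|=3$. Here
\[
S_{12}=\{\varepsilon_{12}^j-1:1\le j\le 5\}.
\]
Throughout, write $\varepsilon=\varepsilon_{12}$ and $\chi(x)=\left(\frac{x}{p}\right)$, and recall from Lemma~\ref{lem_kawaguchi4.3} that $\varepsilon\in C_0^2$. The plan is to express the Legendre symbols of all five elements of $S_{12}$ in terms of the four quantities
\[
a=\chi(\varepsilon-1),\quad b=\chi(\varepsilon+1),\quad c=\chi(\varepsilon^2+1),\quad t=\chi(\varepsilon^3-1),
\]
and then to count the number of $-1$'s, in the same way as in the proofs of Corollaries~\ref{cor_k4} and~\ref{cor_k5}.

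First, I factor the elements of $S_{12}$ as follows:
\begin{align*}
\varepsilon^5-1&=\varepsilon^5(\varepsilon+1),\\
\varepsilon^4-1&=(\varepsilon^2-1)(\varepsilon^2+1),\\
\varepsilon^2-1&=(\varepsilon-1)(\varepsilon+1).
\end{align*}
The first identity follows from $\varepsilon^6=-1$, or equivalently from Lemma~\ref{lem_e} with $i=1$. Since $\varepsilon^5$ is a square, the symbols of $\varepsilon-1,\varepsilon^2-1,\varepsilon^3-1,\varepsilon^4-1,\varepsilon^5-1$ are therefore
\[
a,\quad ab,\quad t,\quad abc,\quad b,
\]
respectively.

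The key extra relation comes from Lemma~\ref{lem_e} with $i=2$, which gives $\chi(\varepsilon^4-1)=\chi(\varepsilon^2+1)$. Combined with the factorization of $\varepsilon^4-1$ above, this yields $\chi(\varepsilon^2-1)=1$, that is, $ab=1$, so $a=b$. The list of symbols then collapses to
\[
a,\quad 1,\quad t,\quad c,\quad a.
\]
The number of $-1$'s in this list equals $2\cdot[a=-1]+[t=-1]+[c=-1]$. This count is $3$ exactly when $a=b=-1$ and exactly one of $c,t$ equals $-1$, i.e.\ when $ct=-1$. This is precisely condition~\eqref{eq_k6}, since
\[
\left(\frac{(\varepsilon^2+1)(\varepsilon^3-1)}{p}\right)=ct.
\]
For the converse direction, I will check that \eqref{eq_k6} forces exactly three $-1$'s in the list; this is immediate from the displayed list. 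One could also record the remaining possible combinations in a small table, as in the proof of Corollary~\ref{cor_k5}.

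The only step requiring care is the use of Lemma~\ref{lem_e} to obtain $ab=1$. Without it, the five symbols would appear to involve independent quantities, and the equivalence would not reduce cleanly to \eqref{eq_k6}. I will therefore double-check that identity, including that $\varepsilon^{2}+1\ne 0$: this holds because $\varepsilon^2$ has order $6$, while a root of $x^2+1$ would have order $4$.
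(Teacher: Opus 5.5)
Your proof is correct and follows essentially the same route as the paper. Both arguments reduce to Condition~(2) of Theorem~\ref{thm_EDF} and use the same factorizations together with Lemma~\ref{lem_e} to obtain $\chi(\varepsilon^2-1)=1$ and $\chi(\varepsilon-1)=\chi(\varepsilon+1)$, then count nonresidues. The only cosmetic difference is that the paper first extracts $ct=-1$ from the parity of $\prod_{x\in S_{12}}x$ and tabulates the remaining cases, whereas you count directly from the reduced list $a,1,t,c,a$, which settles both directions at once.
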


\begin{IEEEproof}
By Theorem~\ref{thm_EDF} with $k = 6$, the set $\D$ is a $(p,12^m,6(m-1))$-EDF
if and only if Condition~(2) of Theorem~\ref{thm_EDF} holds; that is, $|S_{12} \cap C_1^2| = 3$ (equivalently, $|S_{12} \cap C_0^2| = 2$).

We use the factorizations
\begin{align*}
  & \varepsilon_{12}^5 - 1 = \varepsilon_{12}^5 + \varepsilon_{12}^6 = \varepsilon_{12}^5(\varepsilon_{12}+1), \\
  & \varepsilon_{12}^4 - 1 = \varepsilon_{12}^4 + \varepsilon_{12}^6 = \varepsilon_{12}^4(\varepsilon_{12}^2+1)
  = (\varepsilon_{12}^2 + 1)(\varepsilon_{12}^2 - 1), \\
  & \varepsilon_{12}^2 - 1 = (\varepsilon_{12}+1)(\varepsilon_{12}-1).
\end{align*}
Together with Lemma~\ref{lem_e}, these identities give
\[
\prod_{x \in S_{12}} x
=
\prod_{i=1}^5 (\varepsilon_{12}^i - 1)
=
\varepsilon_{12}^9
(\varepsilon_{12}-1)^2
(\varepsilon_{12}+1)^2
(\varepsilon_{12}^2+1)
(\varepsilon_{12}^3-1).
\]
Hence
$\prod_{x\in S_{12}}x\in C_1^2$,
since $|S_{12}\cap C_1^2|=3$ is odd. 
Since $\varepsilon_{12}\in C_0^2$,
this is equivalent to
$(\varepsilon_{12}^2+1)(\varepsilon_{12}^3-1)\in C_1^2$.
Under this assumption, the remaining possible combinations
of Legendre symbols are summarized in the following table:
\begin{center}
{\small
\begin{tabular}{c|c||c|c|c}\hline
$\varepsilon_{12}^2+1$
& \multirow{2}{*}{$\varepsilon_{12}^3-1$}
& \multirow{2}{*}{$\varepsilon_{12}^2-1$}
& $\varepsilon_{12}+1$
& \multirow{2}{*}{$\varepsilon_{12}-1$} \\
$(\varepsilon_{12}^4-1)$
&
&
& $(\varepsilon_{12}^5-1)$
& \\ \hline\hline
\multirow{2}{*}{$\phantom{-}\bm{1}$}
& \multirow{2}{*}{$\bm{-1}$}
& \multirow{4}{*}{$1$}
& $\phantom{-}1$ & $\phantom{-}1$ \\
& & & $-1$ & $-1$ \\ \cline{1-2}
\multirow{2}{*}{$\bm{-1}$}
& \multirow{2}{*}{$\phantom{-}\bm{1}$}
&
& $\phantom{-}1$ & $\phantom{-}1$ \\
& & & $-1$ & $-1$ \\ \hline
\end{tabular}
}
\end{center}
The row in which both
$\varepsilon_{12}+1$ and $\varepsilon_{12}-1$
are quadratic nonresidues is therefore equivalent to
\eqref{eq_k6},
which proves the corollary.
\end{IEEEproof}

\medskip
By Theorem~\ref{thm_weak_AMD}
and Corollary~\ref{cor_k6},
we obtain the following family of optimal weak AMD codes.

\begin{corollary}
Let $p=24m+1$ be a prime with $m\ge2$.
If \eqref{eq_k6} holds, then there exists an
R-optimal $12$-regular weak
$(m,p,12^m,\frac{m-1}{2m})$-AMD code.
\end{corollary}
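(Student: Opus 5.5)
This corollary is a direct combination of two earlier results. By Corollary~\ref{cor_k6}, if \eqref{eq_k6} holds, then $\D=\{C_{2i}^{2m}:0\le i\le m-1\}$ is a $(p,12^m,6(m-1))$-EDF over $\bF_p$ with $p=24m+1$, $m\ge 2$. We then feed this EDF into Construction~\ref{con_weak_AMD}, and Theorem~\ref{thm_weak_AMD} guarantees that the resulting code is an $R$-optimal $k$-regular weak AMD code. The only remaining task is to identify the parameters.

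Before applying Construction~\ref{con_weak_AMD}, we check its hypothesis that the blocks lie in $G^*$. Each block $C_{2i}^{2m}$ is a cyclotomic class of index $2m$ in $\bF_p^*$, so it avoids $0$. The blocks are pairwise disjoint because distinct cyclotomic classes partition $\bF_p^*$. Each block has size $e=(p-1)/(2m)=12$, so $k=12$.

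With $n=p$, $k=12$ and $m$ blocks, Theorem~\ref{thm_weak_AMD} gives
\[
\rhoEwk=\frac{k(m-1)}{n-1}=\frac{12(m-1)}{24m}=\frac{m-1}{2m}.
\]
As a cross-check, Theorem~\ref{lem_weak_AMD} with $\lambda=\mu=6(m-1)$ gives $\lambda/(mk)=6(m-1)/(12m)=(m-1)/(2m)$, which agrees.

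There is no real obstacle here. The only care needed is to confirm that the EDF hypotheses of Theorem~\ref{thm_weak_AMD} are met by Corollary~\ref{cor_k6}: regularity, equiprobable sources and encodings, and blocks in $G^*$.
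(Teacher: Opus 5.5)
Your proof is correct and follows the same route as the paper, which likewise obtains this corollary by applying Theorem~\ref{thm_weak_AMD} to the $(p,12^m,6(m-1))$-EDF from Corollary~\ref{cor_k6}, giving $12(m-1)/(24m)=(m-1)/(2m)$. Your additional checks (blocks in $\bF_p^*$ of size $12$, and the cross-check via Theorem~\ref{lem_weak_AMD} with $\lambda=\mu=6(m-1)$) are accurate and consistent with this.
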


\begin{example}\rm
\label{ex_k6}
The following table lists the values of
$\left(\tfrac{(\varepsilon_{12}^2+1)(\varepsilon_{12}^3-1)}{p}\right)$,
$\left(\tfrac{\varepsilon_{12}+1}{p}\right)$,
and
$\left(\tfrac{\varepsilon_{12}-1}{p}\right)$
for primes of the form
$p=24m+1$
with $p<1000$.
For each prime $p$, a primitive element $\gamma$ of $\bF_p$
was chosen and
$\varepsilon_{12}=\gamma^{2m}$ was computed accordingly.
Rows shaded in gray indicate primes that do not satisfy
the condition in Corollary~\ref{cor_k6}.
\begin{center}
{\small
\begin{tabular}{r|r||r|r||r|r|r}\hline
\multicolumn{1}{c|}{$p$}
& $m$
& \multicolumn{1}{c|}{$\gamma$}
& \multicolumn{1}{c||}{$\varepsilon_{12}$}
&
$\left(\tfrac{(\varepsilon_{12}^2+1)(\varepsilon_{12}^3-1)}{p}\right)$
&
$\left(\tfrac{\varepsilon_{12}+1}{p}\right)$
&
$\left(\tfrac{\varepsilon_{12}-1}{p}\right)$
\\ \hline
\rowcolor{black!10} $73$ & $3$ & $5$ & $3$ & $1$ & $1$ & $1$ \\
\rowcolor{black!10} $97$ & $4$ & $5$ & $6$ & $1$ & $-1$ & $-1$ \\
\rowcolor{black!10} $193$ & $8$ & $5$ & $49$ & $-1$ & $1$ & $1$ \\
\rowcolor{black!10} $241$ & $10$ & $7$ & $237$ & $1$ & $1$ & $1$ \\
\rowcolor{black!10} $313$ & $13$ & $10$ & $284$ & $1$ & $-1$ & $-1$ \\
$337$ & $14$ & $10$ & $265$ & $-1$ & $-1$ & $-1$ \\
\rowcolor{black!10} $409$ & $17$ & $21$ & $360$ & $-1$ & $1$ & $1$ \\
$433$ & $18$ & $5$ & $64$ & $-1$ & $-1$ & $-1$ \\
\rowcolor{black!10} $457$ & $19$ & $13$ & $18$ & $-1$ & $1$ & $1$ \\
\rowcolor{black!10} $577$ & $24$ & $5$ & $57$ & $1$ & $-1$ & $-1$ \\
\rowcolor{black!10} $601$ & $25$ & $7$ & $481$ & $-1$ & $1$ & $1$ \\
$673$ & $28$ & $5$ & $16$ & $-1$ & $-1$ & $-1$ \\
\rowcolor{black!10} $769$ & $32$ & $11$ & $750$ & $-1$ & $1$ & $1$ \\
$937$ & $39$ & $5$ & $408$ & $-1$ & $-1$ & $-1$ \\\hline
\end{tabular}
}
\end{center}
The first ten primes $p=24m+1$
satisfying Corollary~\ref{cor_k6} are
\[
337,\ 433,\ 673,\ 937,\ 1129,\ 1249,\ 2113,\ 2617,\ 3001,\ 3313.
\]
\end{example}

As shown in Corollaries~\ref{cor_k4}--\ref{cor_k6},
the necessary condition in Corollary~\ref{cor_EDF}
is generally not sufficient for an EPDF to become an EDF.
The case $k=7$, however, is exceptional:
the condition in Theorem~\ref{thm_EDF}
is both necessary and sufficient.

\begin{corollary}
  \label{cor_k7}
  Let $p = 28m + 1$ be a prime with odd $m\ge3$. 
  Then $\D = \{C_{2i}^{2m}: \ 0 \leq i \leq m - 1\}$ forms a $(p,14^m,7(m-1))$-EDF if and only if $7$ is a biquadratic residue modulo $p$.
\end{corollary}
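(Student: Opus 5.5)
The plan is to reduce the statement to the parity of $\alpha$ and then show that, for $k=7$, the parity of $\alpha$ already pins down its value. Since $k=7$ and $m$ is odd, $km$ is odd and $d=1$. By Theorem~\ref{thm_EDF}, Condition~(1), $\D$ is a $(p,14^m,7(m-1))$-EDF if and only if $\alpha=|S_{14}\cap C_1^2|=3$. Necessity of the biquadratic condition then follows from Corollary~\ref{cor_EDF} with $m$ odd, which gives $7^{(p-1)/4}\equiv1\pmod p$. The real work is sufficiency: I will show that $\alpha$ can only take the values $0,3,4$, so that $\alpha=3$ is the same as $\alpha$ being odd.

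For the constraint on $\alpha$, write $a_j=\left(\frac{\varepsilon_{14}^j-1}{p}\right)$ and $b_j=\left(\frac{\varepsilon_{14}^j+1}{p}\right)$. Recall that $\varepsilon_{14}\in C_0^2$ and $-1\in C_0^2$. Three sources of relations are available:
\begin{itemize}
\item Lemma~\ref{lem_kawaguchi4.3} gives $a_{14-j}=a_j$.
\item Lemma~\ref{lem_e} gives $b_i=a_{7-i}$.
\item The factorization $\varepsilon_{14}^{2i}-1=(\varepsilon_{14}^i-1)(\varepsilon_{14}^i+1)$ gives $a_{2i}=a_ib_i$ for $1\le i\le 6$, with indices reduced via $a_{14-j}=a_j$.
\end{itemize}
Setting $x=a_1$, $z=a_3$, $u=a_4$, these relations should force
\[
a_2=xzu,\quad a_5=xz,\quad a_6=zu .
\]
I will check that the remaining relations coming from $i=4,5,6$ are consistent with these. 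A direct enumeration over the $8$ sign patterns $(x,z,u)$ then shows that the number of $-1$'s among $a_1,\dots,a_6$ is $0$, $3$ or $4$. Moreover it equals $3$ exactly when $xu=-1$, which is exactly when $\prod_{j}a_j=-1$, i.e., when $\alpha$ is odd.

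For the parity step, I follow the proof of Corollary~\ref{cor_EDF}. By Lemma~\ref{lem_kawaguchi4.2},
\[
\Bigl(\prod_{x\in S_{14}}x\Bigr)^2=(-1)^{6}\,7\,\varepsilon_{14}^{l}\in C_{2i}^4,
\]
where $i\equiv\alpha\pmod 2$ and $l\equiv 21\pmod{14}$. Because $\varepsilon_{14}=\gamma^{2m}$, we have $\varepsilon_{14}^{l}=\gamma^{42m}$. Since $m$ is odd, $42m\equiv2\pmod4$, so $\varepsilon_{14}^l\in C_2^4$; this is the $m$ odd, $k\equiv3\pmod 4$ entry of Table~\ref{tab_simple-cond}. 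Hence $7\in C_0^4$ if and only if $i=1$, that is, if and only if $\alpha$ is odd, which by the previous paragraph is equivalent to $\alpha=3$. Combined with Theorem~\ref{thm_EDF}, this proves both directions.

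The main obstacle is the sign bookkeeping in the second step: deriving the relations among the $a_j$ correctly under reduction of exponents mod $14$, and verifying that no additional constraints or inconsistencies arise. Once that table is correct, the rest is routine.
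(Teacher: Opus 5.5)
Your proposal is correct and follows essentially the same route as the paper. Both arguments reduce, via Condition~(1) of Theorem~\ref{thm_EDF}, to showing $\alpha=3$. Both then use Lemma~\ref{lem_e} and the factorizations $\varepsilon_{14}^{2i}-1=(\varepsilon_{14}^i-1)(\varepsilon_{14}^i+1)$ to show that an odd $\alpha$ forces $\alpha=3$; your relations $a_2=xzu$, $a_5=xz$, $a_6=zu$ and the resulting $\alpha\in\{0,3,4\}$ agree with the paper's table, and $xu=-1$ is exactly its condition $(\varepsilon_{14}+1)(\varepsilon_{14}^2+1)\in C_1^2$. Finally, both turn the parity of $\alpha$ into the quartic class of $7$ via Lemma~\ref{lem_kawaguchi4.2}. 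Your explicit redo of that last parity computation is a slight improvement: the paper only cites Corollary~\ref{cor_EDF}, which is stated as a necessary condition, whereas the equivalence actually comes from its proof.
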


\begin{IEEEproof}
By Theorem~\ref{thm_EDF} with $k = 7$, $\D$ forms a $(p,14^m,7(m-1))$-EDF if and only if Condition~(1) of Theorem~\ref{thm_EDF} holds; that is, 
$|S_{14} \cap C_1^2| = |S_{14} \cap C_0^2| = 3$.

We use the factorizations
\begin{align*}
  \varepsilon_{14}^6 - 1 &= \varepsilon_{14}^6 + \varepsilon_{14}^7 = \varepsilon_{14}^6(\varepsilon_{14} + 1) = (\varepsilon_{14}^3 + 1)(\varepsilon_{14}^3 - 1), \\
  \varepsilon_{14}^5 - 1 &= \varepsilon_{14}^5 + \varepsilon_{14}^7 = \varepsilon_{14}^5(\varepsilon_{14}^2 + 1), \\
  \varepsilon_{14}^4 - 1 &= \varepsilon_{14}^4 + \varepsilon_{14}^7 = \varepsilon_{14}^4(\varepsilon_{14}^3 + 1) = (\varepsilon_{14}^2 + 1)(\varepsilon_{14}^2 - 1), \\
  \varepsilon_{14}^2 - 1 &= (\varepsilon_{14} + 1)(\varepsilon_{14} - 1).
\end{align*}
Together with Lemma~\ref{lem_e}, these identities give
\[
\prod_{x \in S_{14}} x
=
\prod_{i=1}^6 (\varepsilon_{14}^i - 1)
=
\varepsilon_{14}^9
(\varepsilon_{14}-1)^2
(\varepsilon_{14}+1)
(\varepsilon_{14}^2+1)
(\varepsilon_{14}^3-1)^2
(\varepsilon_{14}^3+1)^2.
\]
Hence
$\prod_{x\in S_{14}}x\in C_1^2$,
since $|S_{14}\cap C_1^2|=3$ is odd.
As $\varepsilon_{14}\in C_0^2$,
this is equivalent to
$(\varepsilon_{14}+1)(\varepsilon_{14}^2+1)\in C_1^2$.

Under this assumption, the remaining possible combinations
of Legendre symbols are summarized in the following table:
\begin{center}
{\small
\begin{tabular}{c|c||c|c|c|c}\hline
$\varepsilon_{14}+1$
& $\varepsilon_{14}^2+1$
& $\varepsilon_{14}^3+1$
& \multirow{2}{*}{$\varepsilon_{14}^3-1$}
& \multirow{2}{*}{$\varepsilon_{14}^2-1$}
& \multirow{2}{*}{$\varepsilon_{14}-1$} \\
$(\varepsilon_{14}^6-1)$
& $(\varepsilon_{14}^5-1)$
& $(\varepsilon_{14}^4-1)$
&
&
& \\ \hline\hline
\multirow{2}{*}{$\phantom{-}\bm{1}$}
& \multirow{2}{*}{$\bm{-1}$}
& $\phantom{-}1$
& $\phantom{-}1$
& $-1$
& $-1$ \\
&
&
$-1$
& $-1$
& $\phantom{-}1$
& $\phantom{-}1$ \\ \hline
\multirow{2}{*}{$\bm{-1}$}
& \multirow{2}{*}{$\phantom{-}\bm{1}$}
& $\phantom{-}1$
& $-1$
& $\phantom{-}1$
& $-1$ \\
&
&
$-1$
& $\phantom{-}1$
& $-1$
& $\phantom{-}1$ \\ \hline
\end{tabular}
}
\end{center}
From the table, we see that
$(\varepsilon_{14}+1)(\varepsilon_{14}^2+1)\in C_1^2$
implies
$|S_{14}\cap C_1^2|=3$.
Hence, for $k=7$, the parity condition
$|S_{14}\cap C_1^2|\equiv1\pmod2$
is equivalent to Condition~(1) of
Theorem~\ref{thm_EDF}.
By Corollary~\ref{cor_EDF}, this is equivalent to
$7^{(p-1)/4}\equiv1\pmod p$,
that is, to $7\in C_0^4$.
\end{IEEEproof}

\medskip
By Theorem~\ref{thm_weak_AMD}
and Corollary~\ref{cor_k7},
we obtain the following family of optimal weak AMD codes.

\begin{corollary}
  Let $p = 28m + 1$ be a prime with odd $m \geq 3$.
  Then there exists an R-optimal $14$-regular weak $(m,p,14^m,\frac{m-1}{2m})$-AMD code 
  if $7$ is a biquadratic residue modulo $p$.
\end{corollary}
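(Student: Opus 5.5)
This final corollary follows by combining Theorem~\ref{thm_weak_AMD} with Corollary~\ref{cor_k7}; no new computation is needed.

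\emph{Step 1: the EDF.} Let $p=28m+1$ be prime with $m\ge3$ odd, and suppose $7\in C_0^4$. Corollary~\ref{cor_k7} then shows that $\D=\{C_{2i}^{2m}:0\le i\le m-1\}$ is a $(p,14^m,7(m-1))$-EDF over the additive group of $\bF_p$.

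\emph{Step 2: the code.} Feed this EDF into Construction~\ref{con_weak_AMD} with equiprobable sources and equiprobable encodings. By Theorem~\ref{lem_weak_AMD} with $\lambda=\mu=7(m-1)$, the result is a $14$-regular weak AMD code with
\[
\rhoEwk=\frac{7(m-1)}{14m}=\frac{m-1}{2m}.
\]

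\emph{Step 3: $R$-optimality.} The $k$-uniform weak $R$-bound of Theorem~\ref{thm_weak_R_bound} is
\[
\frac{k(m-1)}{n-1}=\frac{14(m-1)}{28m}=\frac{m-1}{2m},
\]
so the code attains it with equality. Theorem~\ref{thm_weak_AMD} states this equivalence directly.

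\emph{Checks.} Two small facts are needed. First, the blocks lie in $\bF_p^*$, as Definition~\ref{def_EPDF} requires. Second, the counting identity $\lambda(n-1)=k^2m(m-1)$ holds, since $7(m-1)\cdot 28m=196\,m(m-1)$. The only substantive content is Corollary~\ref{cor_k7} itself, which we are allowed to assume, so there is no real obstacle.
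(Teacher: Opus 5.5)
Your proposal is correct and follows the paper's argument. The paper obtains this corollary by combining Corollary~\ref{cor_k7}, which gives the $(p,14^m,7(m-1))$-EDF when $7\in C_0^4$, with the EDF--AMD equivalence of Theorem~\ref{thm_weak_AMD}; your explicit computation $\rhoEwk=7(m-1)/(14m)=(m-1)/(2m)$ via Theorem~\ref{lem_weak_AMD}, checked against the $R$-bound, simply unpacks that equivalence.
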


\begin{example}\rm
\label{ex_k7}
The following table lists the values of
$\left(\tfrac{(\varepsilon_{14}+1)(\varepsilon_{14}^2+1)}{p}\right)$
for primes of the form
$p=28m+1$
with $p<2500$.
For each prime $p$, a primitive element $\gamma$ of $\bF_p$
was chosen and
$\varepsilon_{14}=\gamma^{2m}$ was computed accordingly.
Rows shaded in gray indicate primes for which the corresponding value is $1$, and hence the condition in Corollary~\ref{cor_k7} fails.
\begin{center}
{\small
\begin{tabular}{r|r||r|r||r}\hline
\multicolumn{1}{c|}{$p$}
& $m$
& \multicolumn{1}{c|}{$\gamma$}
& \multicolumn{1}{c||}{$\varepsilon_{14}$}
&
$\left(\tfrac{(\varepsilon_{14}+1)(\varepsilon_{14}^2+1)}{p}\right)$
\\ \hline
\rowcolor{black!10} $197$ & $7$ & $2$ & $33$ & $1$ \\
\rowcolor{black!10} $421$ & $15$ & $2$ & $269$ & $1$ \\
$701$ & $25$ & $2$ & $65$ & $-1$ \\
$757$ & $27$ & $2$ & $127$ & $-1$ \\
$1093$ & $39$ & $5$ & $1090$ & $-1$ \\
$1373$ & $49$ & $2$ & $745$ & $-1$ \\
$1429$ & $51$ & $6$ & $1227$ & $-1$ \\
\rowcolor{black!10} $1597$ & $57$ & $11$ & $1278$ & $1$ \\
\rowcolor{black!10} $1709$ & $61$ & $3$ & $1541$ & $1$ \\
\rowcolor{black!10} $1877$ & $67$ & $2$ & $1212$ & $1$ \\
\rowcolor{black!10} $1933$ & $69$ & $5$ & $1154$ & $1$ \\
$2213$ & $79$ & $2$ & $1873$ & $-1$ \\
$2269$ & $81$ & $2$ & $2020$ & $-1$ \\
\rowcolor{black!10} $2381$ & $85$ & $3$ & $1360$ & $1$ \\
\rowcolor{black!10} $2437$ & $87$ & $2$ & $1945$ & $1$ \\ \hline
\end{tabular}
}
\end{center}
The first ten primes $p=28m+1$
satisfying Corollary~\ref{cor_k7} are
\[
701,\ 757,\ 1093,\ 1373,\ 1429,\ 2213,\ 2269,\ 3109,\ 3389,\ 4229.
\]
\end{example}

\section{Systematic strong AMD codes via bounded generalized strong external difference families}\label{sec_cons_cosets}

So far, we have focused on weak AMD codes constructed from
EDF-type structures.
We now turn to systematic strong AMD codes, which provide stronger protection against algebraic manipulation attacks.

In general, systematic AMD codes offer the additional advantage that 
the source value appears explicitly as a component of the encoded word. 
This feature is particularly useful in scenarios where the privacy of $s\in S$ is not a concern  
and only its authenticity must be protected, 
since $s$ can be recovered
without applying the decoding function.

By Definition \ref{def_BGSEDF}, an $(n,\{k_1,\ldots,k_m\},1)$-BGSEDF is simply a collection of disjoint subsets of an additive Abelian group $G$ of order $n$, 
each of size $k_i$, $1 \le i \le m$, such that each element of $G^*$ occurs at most once 
as a difference between elements in different subsets.
Such structures are closely related to G-optimal strong AMD codes.
In particular, it was shown in \cite{paterson2016combinatorial}
that BGSEDFs characterize G-optimal strong AMD codes as follows.

\begin{theorem}[\cite{paterson2016combinatorial}]
An $(n,\{k_1,\ldots,k_m\},1)$-BGSEDF
is equivalent to a G-optimal strong AMD code
with parameters
$(m,n,\{k_1,\ldots,k_m\},\rho)$
under equiprobable encoding,
where
$\rho=\max_{1\le i\le m}\frac1{k_i}$.
\end{theorem}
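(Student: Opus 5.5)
We prove the two directions separately, with the AMD code and the set system linked through $A_i=B_i$, equiprobable encoding on each $A_i$, and the decoder $\Dec(g)=s_i$ for $g\in A_i$ and $\Dec(g)=\bot$ otherwise. The starting point is the same computation as in the proof of Theorem~\ref{lem_weak_AMD}, but with the source fixed (the strong model). For a fixed source $s_i$ and a fixed $\Delta\in G^*$, the success probability is
\[
\rhoEE{s_i,\Delta}=\frac{1}{k_i}\sum_{g\in A_i}\sum_{j\ne i}\bigl|\{g+\Delta\}\cap A_j\bigr| =\frac{1}{k_i}\,\#\Bigl\{\Delta \text{ in } \bigcup_{j\ne i}D(A_j,A_i)\Bigr\},
\]
where $\#\{\Delta \text{ in } \cdot\}$ counts the multiplicity of $\Delta$ in the multiset. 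A randomized strategy $\sigma$ is a distribution on $G^*$, so $\rhoEE{s_i}=\max_{\Delta\in G^*}\rhoEE{s_i,\Delta}$. We also note that $D(A_j,A_i)=-D(A_i,A_j)$. Hence condition \eqref{eqn_BGSEDF} with $\lambda_i=1$ holds for $\D$ exactly when the multiplicity of every $\Delta\in G^*$ in $\bigcup_{j\ne i}D(A_j,A_i)$ is at most $1$.

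For the forward direction, take an $(n,\{k_1,\ldots,k_m\},1)$-BGSEDF and build the code as above. By the displayed formula, $\rhoEE{s_i}\le 1/k_i$ for every $i$. Theorem~\ref{thm_strong_G_bound} gives the matching lower bound $\rhoEstrS\ge 1/|A_s|$, so equality holds for every source. The code is therefore G-optimal with $\rhoEstr=\max_i 1/k_i$. One technical point needs checking: for each $i$ we need some $\Delta$ that actually attains multiplicity $1$, i.e.\ not all differences vanish. This holds because $m\ge2$ and the blocks are nonempty, and in any case Theorem~\ref{thm_strong_G_bound} forces the value $1/k_i$.

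For the converse, start from a G-optimal strong AMD code with equiprobable encoding, so that $\rhoEE{s_i}=1/|A_{s_i}|$ for all $i$. Applying the formula with the worst $\Delta$ shows that every $\Delta\in G^*$ occurs at most once in $\bigcup_{j\ne i}D(A_j,A_i)$. This is exactly the BGSEDF condition with $\lambda_i=1$, after the sign flip $\Delta\mapsto-\Delta$. The only subtle step is the equality "tampering succeeds iff $g+\Delta$ lands in some other block." It requires that the decoder map each $A_j$ to $s_j$ and nothing else to a source, which holds by the standing convention that $A_s$ is the set of valid encodings of $s$. I expect no real obstacle. The main care is in handling multiset multiplicities correctly and in matching the G-optimality definition (equality for every source) to the per-block bound $\lambda_i=1$.
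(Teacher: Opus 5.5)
Your proof is correct. It is essentially the argument the paper gives for its systematic analogue, Proposition~\ref{prop_sysAMD_sysBGSEDF}; the theorem itself is only cited from Paterson--Stinson. In both directions, a difference of multiplicity at least $2$ would force success probability at least $2/|A_s|$, contradicting G-optimality, while multiplicity at most $1$ gives $\rho_{E,s}\le 1/|A_s|$, which Theorem~\ref{thm_strong_G_bound} upgrades to equality. Your explicit multiplicity formula with $D(A_j,A_i)=-D(A_i,A_j)$ also handles the sign more carefully than the paper, which writes $g+\Delta$ where $g-\Delta$ is meant.
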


Throughout this section,
let
$G=S\times G_1\times G_2$
be a finite additive Abelian group,
where
$S$, $G_1$, and $G_2$
are written additively.

The goal of this section is to investigate systematic G-optimal strong AMD codes via BGSEDFs.
We begin with a simple observation,
which follows from the definitions of G-optimal systematic AMD codes and BGSEDFs.


\begin{definition}[Systematic BGSEDF]
\label{def_systematic}
Let $\A=\{A_s:s\in S\}$ be an $(n,|G_1|^m,1)$-BGSEDF over an additive Abelian group $G=S\times G_1\times G_2$. The family $\A$ is said to be systematic if, for some function $f:S\times G_1\to G_2$, $A_s=\{(s,x,f(s,x)):x\in G_1\}$ for every $s\in S$.
\end{definition}

\begin{proposition}
\label{prop_sysAMD_sysBGSEDF}
A systematic G-optimal strong
$(m,n,|G_1|,1/|G_1|)$-AMD code
$(S,S\times G_1\times G_2,\A,\Enc)$
is equivalent to a systematic
$(n,|G_1|^m,1)$-BGSEDF
$\A=\{A_s:s\in S\}$.
\end{proposition}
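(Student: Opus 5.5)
The plan is to prove both directions by a direct computation of the strong tampering probability in terms of the external difference multiplicities, then compare with the systematic bound of Theorem~\ref{thm_sys_bound}.

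\textbf{The tampering probability.} Fix a systematic code with $\Enc(s)=(s,x,f(s,x))$, $x\in_R G_1$, and set $A_s=\{(s,x,f(s,x)):x\in G_1\}$. Each $A_s$ has exactly $|G_1|$ elements, and the sets are pairwise disjoint because their first coordinates differ. Take $s\in S$ and $\Delta\in G^*$. A tampering succeeds exactly when $g+\Delta\in A_{s'}$ for some $s'\ne s$, where $g\in A_s$. Under equiprobable encoding this gives
\[
\rhoEE{s,\Delta}=\frac{1}{|G_1|}\,N_s(\Delta),
\]
where $N_s(\Delta)$ is the multiplicity of $\Delta$ in $\bigcup_{s'\ne s}D(A_{s'},A_s)$. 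This multiset is the negative of $\bigcup_{s'\ne s}D(A_s,A_{s'})$. Since $\Delta\mapsto-\Delta$ is a bijection of $G^*$, bounding the multiplicities of one list by $\lambda$ is the same as bounding those of the other.

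\textbf{BGSEDF implies G-optimal.} Suppose $\A$ is a systematic $(n,|G_1|^m,1)$-BGSEDF. Then $N_s(\Delta)\le 1$ for every $s$ and every $\Delta\in G^*$, so $\rhoEE{s}\le 1/|G_1|$ for each $s$. Randomized adversary strategies are convex combinations of fixed choices of $\Delta$, so they do no better. Theorem~\ref{thm_sys_bound} gives the reverse inequality $\rhoEstr\ge 1/|G_1|$. We need equality for every source $s$, not only for the maximum. For that, note that any pair $g\in A_s$, $h\in A_{s'}$ with $s'\ne s$ gives $\Delta=h-g\ne 0$ with $N_s(\Delta)\ge1$. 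This requires $m\ge2$, which is implicit, since otherwise $\rhoEstr<1$ is vacuous. Hence $\rhoEE{s}=1/|G_1|$ for each $s$, and the code is a systematic G-optimal strong $(m,n,|G_1|,1/|G_1|)$-AMD code.

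\textbf{G-optimal implies BGSEDF.} Suppose the code is G-optimal with $\rhoEstr=1/|G_1|$. Then for every $s$ and $\Delta$ we have $N_s(\Delta)/|G_1|\le 1/|G_1|$, that is, $N_s(\Delta)\le1$. This is exactly condition~\eqref{eqn_BGSEDF} with every $\lambda_s=1$, and the family has the systematic form of Definition~\ref{def_systematic}.

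\textbf{Expected obstacle.} The only delicate step is the bookkeeping. One must apply the formula above to the sign convention of $D(B_i,B_j)$. One must also check that adversarial strategies cannot exceed the maximum over deterministic choices of $\Delta$. Finally, the equivalence should match the paper's per-source definition of G-optimality, which is why the argument shows equality for every $s$ separately. Everything else is direct unpacking of definitions.
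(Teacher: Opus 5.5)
Your proof is correct and follows essentially the same route as the paper. In both directions you express the strong tampering probability for source $s$ and offset $\Delta$ as the multiplicity of the corresponding external difference divided by $|G_1|$, and you get the matching lower bound from Theorem~\ref{thm_sys_bound}. Your explicit handling of the sign convention in $D(A_s,A_{s'})$ (the paper writes $g+\Delta\in A_{s'}$ where $-\Delta$ is meant), of randomized strategies, and of per-source equality are refinements of the same argument rather than a different approach.
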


\begin{IEEEproof}
Suppose first that
$(S,S\times G_1\times G_2,\A,\Enc)$
is a systematic G-optimal strong
$(m,n,|G_1|,1/|G_1|)$-AMD code, 
where
\[
\Enc(s)=(s,x,f(s,x)),
\quad x\in_R G_1.
\]
Let
\[
\A=\{A_s:s\in S\},
\qquad
A_s=\{(s,x,f(s,x)):x\in G_1\}.
\]
It follows that
$A_s\subseteq \{s\}\times G_1\times G_2$
and
$|A_s|=|G_1|$
for every $s\in S$.
We claim that $\A$ is an
$(n,|G_1|^m,1)$-BGSEDF.
Otherwise,  by Definition~\ref{def_BGSEDF}, 
there would exist
$s\in S$, a nonzero difference
$\Delta\in S\times G_1\times G_2$ 
that occurs at least twice in
\[
\bigcup_{s'\in S\setminus\{s\}} D(A_s,A_{s'}).
\]
Equivalently, there would exist distinct elements
$g,g'\in A_s$
such that
$g+\Delta\in A_{s'}$
and
$g'+\Delta\in A_{s''}$
for some
$s',s''\in S\setminus\{s\}$.
Since the encoding is equiprobable over $A_s$,
this would give a successful tampering probability at least
$2/|A_s|=2/|G_1|$ for the source $s$,
contradicting G-optimality.
Thus $\A$ is an
$(n,|G_1|^m,1)$-BGSEDF.

Conversely, suppose that
$\A=\{A_s:s\in S\}$ 
is a systematic $(n,|G_1|^m,1)$-BGSEDF. 
By Definition~\ref{def_systematic},
there exists a function
$f:S\times G_1\to G_2$
such that
$A_s=\{(s,x,f(s,x)):x\in G_1\}$
for every $s\in S$.
Define the systematic encoding by choosing
$x\in_R G_1$
and setting
\[
\Enc(s)=(s,x,f(s,x)),
\]
and define the decoding function by
\[
\Dec(s,x,t)=
\begin{cases}
s & \text{if }  t=f(s,x),\\
\bot & \text{otherwise}.
\end{cases}
\]
Since $\A$ is a BGSEDF with $\lambda=1$,
for every nonzero $\Delta\in S\times G_1\times G_2$
and every source $s\in S$, there is at most one element
$g\in A_s$ such that
$g+\Delta\in A_{s'}$ for some $s'\ne s$.
Thus the success probability is at most
$1/|A_s|=1/|G_1|$.
On the other hand, by Theorem~\ref{thm_sys_bound},
every systematic strong AMD code satisfies
$\rho_E^{\mathrm{str}}\ge 1/|G_1|$.
Hence the resulting systematic strong AMD code attains this bound
and is therefore G-optimal.
\end{IEEEproof}

\medskip
By Proposition~\ref{prop_sysAMD_sysBGSEDF}, the construction of systematic
G-optimal strong AMD codes is equivalent to the construction
of systematic BGSEDFs. 
In the remainder of this section,
we present three constructions of systematic BGSEDFs:
the first is based on cyclotomy over finite fields,
the second on direct products of finite fields,
and the third on generalized cyclotomy over residue class rings.
Each construction yields an infinite family of systematic
G-optimal strong AMD codes.

\subsection{Systematic BGSEDFs via Cyclotomic Classes over Finite Fields}

Cyclotomic classes have long been used in the construction of
difference families and external difference families (see,
for example, \cite{wilson1972cyclotomy}).
However, their connection with strong external difference families
appears to be much less explored.

In \cite{paterson2024circular}, cyclotomic classes over $\bF_q$
were used to construct AMD codes by taking the cyclotomic classes
themselves as image sets of the encoding function.
The following theorem shows that cyclotomic classes can also be
used to construct systematic BGSEDFs, which in turn yield
systematic G-optimal strong AMD codes via Proposition~\ref{prop_sysAMD_sysBGSEDF}.

\begin{theorem}
\label{thm_sysBGSEDF}
Let $q=eu+1$ be a prime power,
and let $\gamma$ be a primitive element of $\bF_q$.
For each $s\in\bZ_u$, define
\[
A_s=
\{(s,t,\gamma^{tu+s}):t\in\bZ_e\}.
\]
Then $\A=\{A_s:s\in\bZ_u\}$ forms a systematic
$(q(q-1),e^u,1)$-BGSEDF in $\bZ_u\times \bZ_e\times\bF_q$.
\end{theorem}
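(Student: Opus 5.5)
The plan is to verify the definition directly. By construction each $A_s$ has the systematic form $\{(s,t,f(s,t)):t\in\bZ_e\}$ with $f(s,t)=\gamma^{tu+s}$. The blocks are pairwise disjoint because their first coordinates differ, and each has size $e$. So it remains to show that for every fixed $s\in\bZ_u$, each nonzero $\Delta\in\bZ_u\times\bZ_e\times\bF_q$ occurs at most once in $\bigcup_{s'\ne s}D(A_{s'},A_s)$; the sign convention does not matter by symmetry.

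Fix $s$ and a target difference $\Delta=(\delta_1,\delta_2,\delta_3)$. Suppose it arises as $(s',t',\gamma^{t'u+s'})-(s,t,\gamma^{tu+s})$ with $s'\ne s$. The first coordinate forces $s'=s+\delta_1$ in $\bZ_u$, so $\delta_1\ne0$ and $s'$ is uniquely determined. The second coordinate gives $t'=t+\delta_2$ in $\bZ_e$, so $t'$ is determined by $t$. The only freedom left is the choice of $t\in\bZ_e$, and it suffices to show that at most one $t$ satisfies
\[
\gamma^{(t+\delta_2)u+s+\delta_1}-\gamma^{tu+s}=\delta_3 .
\]

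The key step is to handle this exponent carefully. The quantity $(t+\delta_2)u+s+\delta_1$ must be interpreted via representatives: write $\sigma\in\{0,\dots,u-1\}$ for the representative of $s'$ and $\tau'$ for that of $t'$. Then $\gamma^{\tau' u+\sigma}$ is well defined because $\gamma^{eu}=1$. Writing $c=\gamma^{\delta_2 u+\sigma-s}$, which is independent of $t$ (again since $\gamma^{eu}=1$), the left side becomes $\gamma^{tu+s}(c-1)$. Here $c\neq1$: otherwise $\delta_2u+\sigma-s\equiv0\pmod{eu}$, which forces $\sigma\equiv s\pmod u$, contradicting $s'\ne s$. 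Hence the equation reads $\gamma^{tu+s}=\delta_3/(c-1)$. The map $t\mapsto\gamma^{tu+s}$ is injective on $\bZ_e$ because these are distinct elements of the coset $C_s^u$. So at most one $t$ works, and then $t'$ is determined as well.

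This gives $\lambda_s\le1$ for every $s$, so $\A$ is a $(q(q-1),e^u,1)$-BGSEDF in a group of order $u\cdot e\cdot q=q(q-1)$, and it is systematic by Definition~\ref{def_systematic}. The main obstacle is the bookkeeping in the exponent: one must check that $c$ does not depend on $t$ once representatives are fixed, and that the wrap-around in $\bZ_e$ causes no trouble. Both points reduce to $\gamma^{eu}=1$, which I will state explicitly.
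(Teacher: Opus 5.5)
Your proof is correct and follows essentially the same route as the paper: the first coordinate separates the different target blocks, and for a fixed pair of blocks the third coordinate factors as $\gamma^{tu+s}(c-1)$, where $c\neq 1$ does not depend on $t$, so cancelling and using injectivity of $t\mapsto\gamma^{tu+s}$ on $\bZ_e$ gives uniqueness. You merge the paper's two steps (no repeats within one $D(A_{s_1},A_{s_2})$, disjointness across different $s_2$) into a single count for each fixed $\Delta$, and you fix integer representatives of $s\in\bZ_u$ in the exponent, which the paper leaves implicit.
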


\begin{IEEEproof}
It is clear from the definition of the blocks that
the systematic condition in Definition~\ref{def_systematic}
is satisfied.
By Definition~\ref{def_BGSEDF},
it suffices to show that every difference between distinct blocks
occurs at most once.

Let $s_1,s_2\in\bZ_u$ with $s_1\ne s_2$.
Then
\[
D(A_{s_1},A_{s_2})
=
\{
(s_1-s_2,t_1-t_2,
\gamma^{t_1u+s_1}-\gamma^{t_2u+s_2})
:
t_1,t_2\in\bZ_e
\}.
\]
We first show that all elements of
$D(A_{s_1},A_{s_2})$
are distinct.
Suppose
\[
(t_1-t_2,\gamma^{t_1u+s_1}-\gamma^{t_2u+s_2})
=
(t'_1-t'_2,\gamma^{t'_1u+s_1}-\gamma^{t'_2u+s_2}).
\]
Then
\[
\gamma^{t_1u+s_1}
\bigl(1-\gamma^{(t_2-t_1)u+(s_2-s_1)}\bigr)
=
\gamma^{t'_1u+s_1}
\bigl(1-\gamma^{(t_2-t_1)u+(s_2-s_1)}\bigr).
\]
Since $s_1\ne s_2$ in $\mathbb Z_u$, we have
$(t_2-t_1)u+(s_2-s_1)\not\equiv0\pmod{eu}$, and hence
$1-\gamma^{(t_2-t_1)u+(s_2-s_1)}\ne0$.
Cancelling this common factor gives
$\gamma^{t_1u+s_1}=\gamma^{t'_1u+s_1}$, so
$\gamma^{(t_1-t'_1)u}=1$.
Since $\gamma$ is a primitive element of $\mathbb F_q$, it has order
$q-1=eu$. Thus $t_1=t'_1$ in $\mathbb Z_e$, and the equality
$t_1-t_2=t'_1-t'_2$ then yields $t_2=t'_2$.
Therefore, all elements of $D(A_{s_1},A_{s_2})$ are distinct.

It remains to show that
\[
D(A_{s_1},A_{s_2})
\cap
D(A_{s_1},A_{s_3})
=
\emptyset
\]
for distinct
$s_2,s_3\in\bZ_u\setminus\{s_1\}$.
This follows immediately because the first coordinates
$s_1-s_2$ and $s_1-s_3$ are different.
Therefore every difference between distinct blocks
occurs at most once.
This completes the proof.
\end{IEEEproof}

\medskip
By Proposition~\ref{prop_sysAMD_sysBGSEDF},
the systematic $(q(q-1),e^u,1)$-BGSEDF
constructed in Theorem~\ref{thm_sysBGSEDF} 
gives rise to the following family of
systematic G-optimal strong AMD codes.

\begin{corollary}\label{cor_cyclotomic}
Let $q=eu+1$ be a prime power.
Then there exists a systematic G-optimal strong
$(u,q(q-1),e^u,1/e)$-AMD code.
\end{corollary}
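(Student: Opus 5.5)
The corollary follows by applying Proposition~\ref{prop_sysAMD_sysBGSEDF} to the BGSEDF of Theorem~\ref{thm_sysBGSEDF}. I will take the same setup: $q=eu+1$, a primitive element $\gamma$ of $\bF_q$, and the family $\A=\{A_s:s\in\bZ_u\}$ with $A_s=\{(s,t,\gamma^{tu+s}):t\in\bZ_e\}$ inside $G=\bZ_u\times\bZ_e\times\bF_q$. Theorem~\ref{thm_sysBGSEDF} says this is a systematic $(q(q-1),e^u,1)$-BGSEDF. To apply the proposition, I must show it matches the form in Definition~\ref{def_systematic}.

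\textbf{Step 1: identify the decomposition $G=S\times G_1\times G_2$.} I take $S=\bZ_u$, $G_1=\bZ_e$ and $G_2=\bF_q$ (its additive group). The encoding function is $f:S\times G_1\to G_2$ given by $f(s,t)=\gamma^{tu+s}$. This is well defined on $\bZ_u\times\bZ_e$: replacing $s$ by $s+u$ gives $\gamma^{(t+1)u+s}$, but since $s\in\bZ_u$ we fix representatives $0\le s\le u-1$. The exponent is then well defined modulo $eu=q-1$, because $t$ is taken modulo $e$. With this choice $|G|=u\cdot e\cdot q=q(q-1)$, so $n=q(q-1)$, $m=|S|=u$ and $|G_1|=e$.

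\textbf{Step 2: invoke the proposition.} Every block has the form $A_s=\{(s,x,f(s,x)):x\in G_1\}$, so $\A$ is a systematic $(n,|G_1|^m,1)$-BGSEDF. The converse direction of Proposition~\ref{prop_sysAMD_sysBGSEDF} then gives a systematic G-optimal strong AMD code. Its encoding is $\Enc(s)=(s,x,f(s,x))$ with $x\in_R\bZ_e$, and its parameters are $(u,q(q-1),e^u,1/e)$, with tampering probability $1/|G_1|=1/e$. This attains the bound of Theorem~\ref{thm_sys_bound}.

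\textbf{Main obstacle.} The only step needing care is the bookkeeping in Step 1: the representatives of $s$ must be fixed so that $f$ is a genuine function on $S\times G_1$. Once that is settled, everything else is a direct citation of the two earlier results.
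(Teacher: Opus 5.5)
Your proposal is correct and follows the paper's route: the paper's proof is the one-line combination of Theorem~\ref{thm_sysBGSEDF} with the converse direction of Proposition~\ref{prop_sysAMD_sysBGSEDF}, using $S=\bZ_u$, $G_1=\bZ_e$, $G_2=\bF_q$ and $n=ue\cdot q=q(q-1)$. Your remark that one must fix representatives $0\le s\le u-1$ so that $f(s,t)=\gamma^{tu+s}$ is a genuine function on $\bZ_u\times\bZ_e$ is a detail the paper leaves implicit, as is the tacit requirement $e\ge 2$, which is needed for $\rho=1/e<1$.
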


\begin{IEEEproof}
The result follows immediately from
Proposition~\ref{prop_sysAMD_sysBGSEDF}
and Theorem~\ref{thm_sysBGSEDF}.
\end{IEEEproof}

\begin{example}
Let $q=25$ and $u=6$.
Construct $\bF_{25}$ using the primitive polynomial
$x^2+4x+2\in\bF_5[x]$.
Then the cyclotomic classes of index $6$ are
\begin{equation*}
\begin{array}{ll}
C_0^6=\{1,2,4,3\},&
C_1^6=\{x,2x,4x,3x\}, \\
C_2^6=\{x+3,2x+1,4x+2,3x+4\},&
C_3^6=\{4x+3,3x+1,x+2,2x+4\},\\
C_4^6=\{2x+2,4x+4,3x+3,x+1\},&
C_5^6=\{4x+1,3x+2,x+4,2x+3\}.
\end{array}
\end{equation*}
Applying Theorem~\ref{thm_sysBGSEDF},
we obtain a systematic
$(600,4^6,1)$-BGSEDF.
By Corollary~\ref{cor_cyclotomic},
this yields a systematic G-optimal strong $(6,600,4^6,\frac{1}{4})$-AMD code.
\end{example}

The finite-field construction presented above can be extended to direct products of finite fields, which will be considered in the next subsection.

\subsection{Systematic BGSEDFs via Cyclotomy over Direct Products of Finite Fields}

Throughout this subsection, let
$q_1,q_2,\ldots,q_\ell$
be distinct prime powers and let
$e>1$
be an integer satisfying
$e\mid\gcd(q_1-1,q_2-1,\ldots,q_\ell-1)$.
Set
\[
X=
\bigl(
\bF_{q_1}\times\cdots\times\bF_{q_\ell}
\bigr)
\setminus\{(0,\ldots,0)\}.
\]
For
$\bm a=(a_1,\ldots,a_\ell)$
and
$\bm b=(b_1,\ldots,b_\ell)$
in $X$, define
$\bm a\bm b=(a_1b_1,\ldots,a_\ell b_\ell)$.
For
$\bm a\in X$
and a subset
$B\subseteq X$,
write
$\bm aB=\{\bm a\bm b:\bm b\in B\}$.
Choose elements
$\delta_i\in\bF_{q_i}^*$
of multiplicative order $e$
for $1\le i\le \ell$, and put
$\bm{\delta}=(\delta_1,\delta_2,\ldots,\delta_\ell)$.
For $k\ge0$, define
\[
\bm{\delta}^k
=
(\delta_1^k,\delta_2^k,\ldots,\delta_\ell^k).
\]
Let
$H=\langle\bm{\delta}\rangle
=\{\bm{\delta}^k:0\le k\le e-1\}$.
Define a relation on $X$ by
$\bm a\sim\bm b$
if and only if
$\bm a,\bm b\in \bm dH$
for some
$\bm d\in X$.
Then $\sim$ is an equivalence relation.

\begin{theorem}\label{thm:Fq1qtBGSEDF}
Let
$v=\prod_{i=1}^{\ell}q_i$,
and let
$L=\{\alpha_0,\alpha_1,\ldots,\alpha_{(v-1)/e-1}\}$ 
be a set of representatives for the equivalence classes of $\sim$.
For each $s\in\bZ_{(v-1)/e}$, define
\[
A_s=
\{
(s,\tau,\alpha_s\bm{\delta}^{\tau})
:
\tau\in\bZ_e
\},
\]
where $\alpha_s$ denotes the $s$th element of $L$.
Then
$\A=\{A_s:s\in\bZ_{(v-1)/e}\}$
forms a systematic
$(v(v-1),e^{(v-1)/e},1)$-BGSEDF in
$\bZ_{(v-1)/e}\times\bZ_e
\times
(\bF_{q_1}\times\cdots\times\bF_{q_\ell})$.
\end{theorem}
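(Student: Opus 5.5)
The plan is to follow the proof of Theorem~\ref{thm_sysBGSEDF}. The extra work comes from the fact that $\bF_{q_1}\times\cdots\times\bF_{q_\ell}$ is not a field, so we cannot simply cancel a common factor.

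\textbf{Step 1: the blocks are well defined.} First I check that $\sim$ partitions $X$ into classes of size exactly $e$. The cosets $\bm dH$ are orbits of the group $H$ acting on $X$ by componentwise multiplication. This action is free. Indeed, if $\bm\delta^k\bm a=\bm a$ with $\bm a\in X$, pick a coordinate with $a_i\neq0$. Then $\delta_i^k=1$, so $e\mid k$, because $\delta_i$ has order $e$. Hence every class has exactly $e$ elements, $e\mid v-1$, and $|L|=(v-1)/e$. Moreover, $\alpha_s\bm\delta^{\tau}$, $\tau\in\bZ_e$, runs over the class of $\alpha_s$ without repetition.

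\textbf{Step 2: systematicity and disjointness.} Put $f(s,\tau)=\alpha_s\bm\delta^\tau$. Then $A_s=\{(s,\tau,f(s,\tau)):\tau\in\bZ_e\}$, so Definition~\ref{def_systematic} holds with $|G_1|=e$. Distinct blocks are disjoint because their first coordinates differ. The ambient group $\bZ_{(v-1)/e}\times\bZ_e\times(\bF_{q_1}\times\cdots\times\bF_{q_\ell})$ has order $v(v-1)$.

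\textbf{Step 3: each difference occurs at most once.} As in Theorem~\ref{thm_sysBGSEDF}, differences coming from $D(A_{s_1},A_{s_2})$ and $D(A_{s_1},A_{s_3})$ with $s_2\neq s_3$ have different first coordinates. So it suffices to show that the elements of $D(A_{s_1},A_{s_2})$ are pairwise distinct for $s_1\ne s_2$.

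Suppose $(\tau_1-\tau_2,\ \alpha_{s_1}\bm\delta^{\tau_1}-\alpha_{s_2}\bm\delta^{\tau_2})$ equals the same expression with primes. Set $c=\tau_1'-\tau_1=\tau_2'-\tau_2$ and $\bm w=\alpha_{s_1}\bm\delta^{\tau_1}-\alpha_{s_2}\bm\delta^{\tau_2}$. The second coordinates then give $(\bm\delta^c-\bm 1)\bm w=\bm 0$, computed componentwise.

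This is the main obstacle: $\bm\delta^c-\bm1$ could a priori be a zero divisor. The hypothesis that every $\delta_i$ has the same order $e$ resolves it. If $c\not\equiv0\pmod e$, then $\delta_i^c\ne1$ for every $i$, so each coordinate of $\bm\delta^c-\bm1$ is invertible and $\bm w=\bm0$. That means $\alpha_{s_1}\bm\delta^{\tau_1}=\alpha_{s_2}\bm\delta^{\tau_2}$, which puts $\alpha_{s_1}$ and $\alpha_{s_2}$ in the same class and contradicts $s_1\ne s_2$, since $L$ is a set of distinct representatives. Hence $c\equiv0\pmod e$, and so $\tau_1=\tau_1'$ and $\tau_2=\tau_2'$.

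Therefore every nonzero difference between distinct blocks occurs at most once with $A_{s_1}$ as the first block. This is exactly condition~\eqref{eqn_BGSEDF} with $\lambda_i=1$, and the theorem follows.
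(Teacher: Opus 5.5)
Your proof is correct and follows essentially the same route as the paper. Both reduce the claim to showing that $D(A_{s_1},A_{s_2})$ has no repeated elements, and both use the fact that all $\delta_i$ have the same order $e$ to make the relevant factor invertible in $\bF_{q_1}\times\cdots\times\bF_{q_\ell}$. That factor is $\bm\delta^c-\bm 1$ for you and $\bm\delta^{\tau_1}-\bm\delta^{\tau'_1}$ in the paper, and the two differ by a unit. Cancelling it forces $\alpha_{s_1}\in\alpha_{s_2}H$. Your Step~1 check that every class has exactly $e$ elements, so that $|L|=(v-1)/e$, is a detail the paper leaves implicit.
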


\begin{IEEEproof}
By Definition~\ref{def_systematic}, the family $\A$ is systematic.
We therefore focus on verifying the BGSEDF condition with $\lambda=1$.

Let $s_1\in\bZ_{(v-1)/e}$.
For $s_2\in\bZ_{(v-1)/e}\setminus\{s_1\}$, we have
\[
D(A_{s_1},A_{s_2})
=
\left\{
\left(
s_1-s_2,
\tau_1-\tau_2,
\alpha_{s_1}\bm{\delta}^{\tau_1}
-
\alpha_{s_2}\bm{\delta}^{\tau_2}
\right)
:
\tau_1,\tau_2\in\bZ_e
\right\}.
\]
We first show that all elements of
$D(A_{s_1},A_{s_2})$
are distinct.
Suppose that two elements corresponding to
$(\tau_1,\tau_2)$
and
$(\tau'_1,\tau'_2)$
coincide, where
$(\tau_1,\tau_2)\ne(\tau'_1,\tau'_2)$.
If
$\tau_1-\tau_2\ne\tau'_1-\tau'_2$,
then the two elements are distinct.
Hence we may assume that
$\tau_1-\tau_2=\tau'_1-\tau'_2$.
Then
$\tau_2-\tau_1=\tau'_2-\tau'_1$
and
$\tau_1\ne\tau'_1$.
Comparing the third coordinates gives
\[
\alpha_{s_1}\bm{\delta}^{\tau_1}
-
\alpha_{s_2}\bm{\delta}^{\tau_2}
=
\alpha_{s_1}\bm{\delta}^{\tau'_1}
-
\alpha_{s_2}\bm{\delta}^{\tau'_2}.
\]
Using
$\tau_2-\tau_1=\tau'_2-\tau'_1$,
this equality becomes
\[
\alpha_{s_1}
\left(
\bm{\delta}^{\tau_1}
-
\bm{\delta}^{\tau'_1}
\right)
=
\alpha_{s_2}\bm{\delta}^{\tau_2-\tau_1}
\left(
\bm{\delta}^{\tau_1}
-
\bm{\delta}^{\tau'_1}
\right).
\]
Because $\delta_i$ has order $e$ and
$0\le \tau_1\ne\tau'_1\le e-1$,
we have
$\delta_i^{\tau_1}\ne\delta_i^{\tau'_1}$
for every $i$.
Therefore
\[
\bm{\delta}^{\tau_1}
-
\bm{\delta}^{\tau'_1}
\in
\bF_{q_1}^*\times\cdots\times\bF_{q_\ell}^*.
\]
This element is invertible in
$\bF_{q_1}\times\cdots\times\bF_{q_\ell}$,
so cancellation yields
\[
\alpha_{s_1}
=
\alpha_{s_2}\bm{\delta}^{\tau_2-\tau_1}.
\]
Thus
$\alpha_{s_1}$ and $\alpha_{s_2}$
belong to the same equivalence class, contrary to the choice of $L$
and the assumption $s_1\ne s_2$.
Therefore
$D(A_{s_1},A_{s_2})$
has no repeated elements.

We next show that
$D(A_{s_1},A_{s_2})$
and
$D(A_{s_1},A_{s_2'})$
are disjoint for distinct
$s_2,s_2'\in\bZ_{(v-1)/e}\setminus\{s_1\}$.
The elements of
$D(A_{s_1},A_{s_2})$
have first coordinate $s_1-s_2$, whereas those of
$D(A_{s_1},A_{s_2'})$
have first coordinate $s_1-s_2'$.
Hence
\[
D(A_{s_1},A_{s_2})
\cap
D(A_{s_1},A_{s_2'})
=
\emptyset.
\]
Since $s_1$ was arbitrary, the BGSEDF condition holds with $\lambda=1$.
Hence $\A$ is a systematic
$(v(v-1),e^{(v-1)/e},1)$-BGSEDF.
\end{IEEEproof}

\begin{corollary}\label{cor:Fq1qtAMD}
Let
$v=\prod_{i=1}^{\ell}q_i$.
Then there exists a systematic G-optimal strong AMD code
with parameters
\[
\left(
\frac{v-1}{e},
v(v-1),
e^{(v-1)/e},
\frac1e
\right).
\]
\end{corollary}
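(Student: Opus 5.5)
This corollary follows by combining Theorem~\ref{thm:Fq1qtBGSEDF} with Proposition~\ref{prop_sysAMD_sysBGSEDF}, in the same way Corollary~\ref{cor_cyclotomic} follows from Theorem~\ref{thm_sysBGSEDF}. The plan is to take the systematic $(v(v-1),e^{(v-1)/e},1)$-BGSEDF $\A=\{A_s:s\in\bZ_{(v-1)/e}\}$ from Theorem~\ref{thm:Fq1qtBGSEDF}. It lives in $G=S\times G_1\times G_2$ with
\[
S=\bZ_{(v-1)/e},\qquad G_1=\bZ_e,\qquad G_2=\bF_{q_1}\times\cdots\times\bF_{q_\ell}.
\]
We then read off the parameters of the associated code.

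First I will check that the setup is well defined. The group $H=\langle\bm\delta\rangle$ has order exactly $e$, since each $\delta_i$ has order $e$. Each $\delta_i$ lies in $\bF_{q_i}^*$, so multiplying by a power of $\bm\delta$ does not change which coordinates of an element of $X$ are zero. It follows that $\bm aH\subseteq X$ for every $\bm a\in X$.

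Next I will show that the equivalence classes of $\sim$ are exactly the cosets $\bm aH$ and that each has size $e$. This amounts to showing that $\bm a\bm\delta^k=\bm a$ forces $k\equiv 0\pmod e$. Any $\bm a\in X$ has some coordinate $a_i\neq0$, and then $a_i\delta_i^k=a_i$ gives $\delta_i^k=1$, hence $e\mid k$. Consequently $e\mid |X|=v-1$, the number of classes is $(v-1)/e$, and $|L|=m=(v-1)/e$, which justifies the indexing by $\bZ_{(v-1)/e}$.

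Finally, the parameters follow directly. The order is $n=|S|\,|G_1|\,|G_2|=\frac{v-1}{e}\cdot e\cdot v=v(v-1)$. Each block has size $|G_1|=e$, and $A_s$ has the systematic form $\{(s,\tau,f(s,\tau))\}$ with $f(s,\tau)=\alpha_s\bm\delta^\tau$. Proposition~\ref{prop_sysAMD_sysBGSEDF} then gives a systematic G-optimal strong AMD code with parameters $\bigl(\frac{v-1}{e},v(v-1),e^{(v-1)/e},\frac1e\bigr)$, and its tampering probability attains the bound $1/|G_1|$ of Theorem~\ref{thm_sys_bound}. There is no real obstacle here. The only point needing care is the divisibility and class-size check in the second step, which the theorem statement takes for granted.
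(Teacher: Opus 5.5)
Your proposal is correct and matches the paper's proof: the corollary follows by applying Proposition~\ref{prop_sysAMD_sysBGSEDF} to the systematic BGSEDF of Theorem~\ref{thm:Fq1qtBGSEDF}, giving $n=\frac{v-1}{e}\cdot e\cdot v=v(v-1)$, block size $e$, and $\rho=1/e$. Your extra check is sound and is something the paper leaves implicit. You show that $H$ acts on $X$ with trivial stabilizers, so every class of $\sim$ has size $e$ and $|L|=(v-1)/e$.
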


\begin{IEEEproof}
The result follows immediately from
Theorem~\ref{thm:Fq1qtBGSEDF}
and Proposition~\ref{prop_sysAMD_sysBGSEDF}.
\end{IEEEproof}

\subsection{Systematic BGSEDFs via Generalized Cyclotomic Classes over Integer Residue Rings}

In the previous subsections, we constructed systematic BGSEDFs from cyclotomic classes over finite fields and their direct products. We next consider a related construction based on the generalized cyclotomy of Zeng et al.~\cite{zeng2013optimal} over residue class rings.

We briefly recall the required notation and results from \cite{zeng2013optimal}. These yield a partition of suitable subsets of $\bZ_v$, which will be used to construct systematic BGSEDFs.

For an odd positive integer $v$, let $v=p_1^{m_1}p_2^{m_2}\cdots p_\ell^{m_\ell}$, where
$p_1, p_2, \dots, p_\ell$ are $\ell$ primes with $2<p_1<p_2<\cdots<p_\ell$ and
$m_1, m_2, \dots, m_\ell$ are positive integers.
Let $\bZ_v^\times =
\{x\in\bZ_v:\gcd(x,v)=1\}$
denote the group of units of $\bZ_v$.
If $C_0$ is a multiplicative subgroup of $\bZ^\times_{v}$ and
$C_i=a_iC_0$ for some elements $a_i\in\bZ^\times_{v}$, $1\leq i\leq d-1$, 
then $C_0,C_1,\dots,C_{d-1}$ are called \emph{generalized cyclotomic classes} of index $d$ 
when $v$ is composite,
and \emph{classical cyclotomic classes} of index $d$ when $v$ is prime \cite{ding1998new}.

For any odd prime $p$, there exists a primitive root $\delta$ modulo $p$
that is also a primitive root modulo $p^i$ for all $i\ge 1$
\cite{apostol1998introduction}. 
For each $1\le i\le\ell$, fix such a primitive root
$\delta_i$ modulo $p_i^j$ for all $j\ge 1$.
Let $e>1$ be a common factor of
$p_1-1,\ldots,p_\ell-1$,
and write
$p_i-1=ef_i$
for $1\le i\le\ell$.
Consider a divisor
$d=p_{i_1}^{r_1}p_{i_2}^{r_2}\cdots p_{i_h}^{r_h}$
of $v$, where
$1\le i_1<i_2<\cdots<i_h\le\ell$
and
$0<r_j\le m_{i_j}$ for $1\le j\le h$.
Applying the Chinese Remainder Theorem,
there exists a unique $\alpha_d\in \bZ_d^\times$ such that
\begin{equation}\label{eqn_alpha_v}
\alpha_d
\equiv
\delta_{i_j}^{f_{i_j}p_{i_j}^{r_j-1}}
\pmod{p_{i_j}^{r_j}}
\qquad
\text{for }1\le j\le h.
\end{equation}
It is straightforward to check that the multiplicative order of
$\alpha_d$ modulo $d$ is $e$.
Hence
\[
C^{(d)}
=
\{\alpha_d^i:0\le i\le e-1\}
\]
is a cyclic subgroup of $\bZ_d^\times$ of order $e$.
For any
\[
I=(g_1,g_2,\ldots,g_h)
\in
\mathcal{I}(d)
\triangleq
\{0,1,\ldots,f_{i_1}p_{i_1}^{r_1-1}-1\}
\times
\bZ_{\varphi(p_{i_2}^{r_2})}
\times\cdots\times
\bZ_{\varphi(p_{i_h}^{r_h})},
\]
define $C_I^{(d)}$ by
\begin{equation}\label{eqn_def_DI_v}
C_{I}^{(d)}
=
\delta^IC^{(d)}
\triangleq
\delta_{i_1}^{g_1}
\delta_{i_2}^{g_2}
\cdots
\delta_{i_h}^{g_h}
C^{(d)}.
\end{equation}

In \cite{zeng2013optimal}, it was proved that
$\{C_{I}^{(d)}:I\in\mathcal{I}(d)\}$
forms a family of generalized cyclotomic classes of index
$\varphi(d)/e$ with respect to $d$.
Moreover, the collection
\begin{equation}\label{eq:psi-partition}
\Psi
\triangleq
\left\{
\frac{v}{d}C_I^{(d)}
:
d\mid v,\ d>1,\ I\in\mathcal{I}(d)
\right\}.
\end{equation}
forms a partition of $\bZ_v\setminus\{0\}$, and $|\Psi|=(v-1)/e$.

Let $\Theta:\bZ_{(v-1)/e}\to\Psi$ be a bijection.
For each
$s\in\bZ_{(v-1)/e}$,
define
\begin{equation}
\label{eq:Zv(v-1)_As}
A_s=
\left\{
\left(
s,\tau,
\frac{v}{d}\delta^I
\alpha_d^\tau
\right)
:
\tau\in\bZ_e
\right\},
\end{equation}
where
$\Theta(s)=\frac{v}{d}C_{I}^{(d)}$.

\begin{theorem}
\label{thm:cycBGSEDF}
Let
$\A=\{A_s:s\in\bZ_{(v-1)/e}\}$,
where the sets $A_s$ are defined by
\eqref{eq:Zv(v-1)_As}.
Then
$\A$
forms a systematic
$\bigl(v(v-1),e^{(v-1)/e},1\bigr)$-BGSEDF in
$\bZ_{(v-1)/e}\times\bZ_e\times\bZ_v$.
\end{theorem}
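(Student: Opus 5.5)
The argument follows the template of Theorems~\ref{thm_sysBGSEDF} and~\ref{thm:Fq1qtBGSEDF}. The family is systematic by Definition~\ref{def_systematic}: take $f(s,\tau)=\frac{v}{d}\delta^I\alpha_d^\tau$ with $\Theta(s)=\frac{v}{d}C_I^{(d)}$. It therefore suffices to check the BGSEDF condition with $\lambda=1$. Fix $s_1$. For $s_2\ne s_2'$ in $\bZ_{(v-1)/e}\setminus\{s_1\}$, the lists $D(A_{s_1},A_{s_2})$ and $D(A_{s_1},A_{s_2'})$ are disjoint, because their first coordinates $s_1-s_2$ and $s_1-s_2'$ differ. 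The whole task thus reduces to showing that $D(A_{s_1},A_{s_2})$ has no repeated element for $s_1\ne s_2$.

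Write $\Theta(s_1)=\frac{v}{d_1}C_{I_1}^{(d_1)}$ and $\Theta(s_2)=\frac{v}{d_2}C_{I_2}^{(d_2)}$, and set $\beta_j=\frac{v}{d_j}\delta^{I_j}\in\bZ_v$. Suppose the pairs $(\tau_1,\tau_2)\ne(\tau_1',\tau_2')$ give the same difference. Equal second coordinates force $\tau_2-\tau_1=\tau_2'-\tau_1'=:\sigma$ and $\tau_1\ne\tau_1'$. Equal third coordinates give
\[
\beta_1\alpha_{d_1}^{\tau_1}-\beta_2\alpha_{d_2}^{\tau_2}=\beta_1\alpha_{d_1}^{\tau_1'}-\beta_2\alpha_{d_2}^{\tau_2'} \pmod v .
\]
If I could simply cancel as in the finite-field case, I would obtain that $\beta_1$ and $\beta_2\alpha^{\sigma}$ lie in the same class, contradicting that $\Theta$ is a bijection onto the partition $\Psi$. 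The difficulty is that $\bZ_v$ is not a product of fields, and $\alpha_{d_1}$, $\alpha_{d_2}$ are defined modulo different divisors.

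To remove the mismatch, I would lift everything to a single element. By the Chinese Remainder Theorem, take $\alpha\in\bZ_v^\times$ with $\alpha\equiv\delta_i^{f_ip_i^{m_i-1}}\pmod{p_i^{m_i}}$ for every $i$, so that $\alpha$ has order $e$ modulo $v$. I then need to check that $\frac{v}{d}x\alpha^\tau=\frac{v}{d}x\alpha_d^\tau$ in $\bZ_v$. This holds because only $x\alpha^\tau \bmod d$ matters, and modulo $p_{i_j}^{r_j}$ we have $\delta^{f p^{m-1}}\equiv\delta^{fp^{r-1}}$. The reason is that $\delta^{fp^{r-1}}$ has order $e$ modulo $p^r$, and $\delta^{f(p^{m-1}-p^{r-1})}$ has order dividing a power of $p$ that is coprime to $e$, while both are of order $e$ in the cyclic group $\bZ_{p^r}^\times$. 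Strictly speaking they are equal only up to the Teichmüller-type identification, so this is the step I expect to be the main obstacle. If exact equality fails, I will instead argue prime-by-prime, or replace the lift with $\alpha^{\,c}$ for a suitable $c$ coprime to $e$. Granting the lift, the equation becomes
\[
\beta_1(\alpha^{\tau_1}-\alpha^{\tau_1'})=\beta_2\alpha^{\sigma}(\alpha^{\tau_1}-\alpha^{\tau_1'}).
\]

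The key lemma is that $\alpha^{\tau_1}-\alpha^{\tau_1'}=\alpha^{\tau_1'}(\alpha^{t}-1)$, with $t=\tau_1-\tau_1'\not\equiv0\pmod e$, is a unit in $\bZ_v$. Modulo $p_i$, $\alpha$ reduces to an element of order exactly $e$ in $\bF_{p_i}^*$. Hence $\alpha^t\not\equiv1\pmod{p_i}$ for each $i$, so $\gcd(\alpha^t-1,v)=1$. Cancelling this unit yields $\beta_1=\beta_2\alpha^{\sigma}$ in $\bZ_v$. Then $\beta_1\in\frac{v}{d_1}C^{(d_1)}_{I_1}$ also lies in $\frac{v}{d_2}C_{I_2}^{(d_2)}$, since multiplying by a power of $\alpha$ preserves that class after reduction modulo $d_2$. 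Because $\Psi$ is a partition, the two classes coincide, so $\Theta(s_1)=\Theta(s_2)$ and hence $s_1=s_2$, a contradiction. This proves the theorem.
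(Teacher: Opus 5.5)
Your proof is correct, and it is organized differently from the paper's. The step you flagged as the main obstacle is not one, and your own argument already settles it. For $1\le r\le m$ you have $\delta^{fp^{m-1}}=(\delta^{fp^{r-1}})^{p^{m-r}}$. Here $\delta^{fp^{r-1}}$ has order $e$ modulo $p^r$, and $p^{m-r}\equiv1\pmod e$ because $e\mid p-1$. Hence $\delta^{fp^{m-1}}\equiv\delta^{fp^{r-1}}\pmod{p^r}$ exactly; equivalently, $\varphi(p^r)=efp^{r-1}$ divides $fp^{r-1}(p^{m-r}-1)$. So your lift is precisely $\alpha=\alpha_v$, and it satisfies $\alpha\equiv\alpha_d\pmod d$ for every divisor $d>1$ of $v$. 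No Teichm\"uller-type correction or fallback is needed, and you should state this as a clean lemma rather than a hedge.

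The paper splits into the cases $d\ne d'$ and $d=d'$.
\begin{itemize}
\item When $d\ne d'$, it chooses a prime $p_i$ with $r_i>r'_i$ and reduces the collision equation modulo $p_i^{m_i}$. After stripping off $p_i^{m_i-r_i}$ and reading the result modulo $p_i$, it gets $\alpha_d^{\Delta}\equiv1\pmod{p_i}$, contradicting that $\alpha_d$ has order $e$ modulo $p_i$.
\item When $d=d'$, it cancels the unit $\alpha_d^{\tau_1}(1-\alpha_d^{\Delta})$ in $\bZ_d$ and concludes $C_I^{(d)}=C_J^{(d)}$.
\end{itemize}
Your compatibility lemma puts all blocks on a common footing in $\bZ_v$. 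One cancellation of the unit $\alpha^{\tau_1'}(\alpha^t-1)$ then handles both cases at once, and the proof becomes a direct analogue of those of Theorems~\ref{thm_sysBGSEDF} and~\ref{thm:Fq1qtBGSEDF}.

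The trade-off is this. For $d_1\ne d_2$ you rely on the disjointness of the classes in $\Psi$, quoted from Zeng et al.\ (in effect, that every element of $\frac{v}{d}\bZ_d^\times$ has $\gcd$ with $v$ equal to $v/d$). The paper's Case~1 instead re-derives that separation by a $p_i$-adic valuation comparison. In return, you never need to cancel $\delta^{I}$ or the cofactors $v/d$ modulo $p_i$; you only cancel powers of $\alpha$ and $\alpha^t-1$.
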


\begin{IEEEproof}
It is clear from the definition of the blocks~\eqref{eq:Zv(v-1)_As} that
$\A$ satisfies the systematic condition in
Definition~\ref{def_systematic}.
To prove that $\A$ is a
$\bigl(v(v-1),e^{(v-1)/e},1\bigr)$-BGSEDF,
it suffices to show that every external difference
occurs at most once.

For
$s_1,s_2\in\bZ_{(v-1)/e}$
with
$s_1\ne s_2$,
let
$\Theta(s_1)=\frac{v}{d}C_I^{(d)}$
and
$\Theta(s_2)=\frac{v}{d'}C_J^{(d')}$. 
Then 
\[
D(A_{s_1},A_{s_2})
=
\left\{
\left(
s_1-s_2,\tau_1-\tau_2,
\frac{v}{d}\delta^I\alpha_d^{\tau_1}
-
\frac{v}{d'}\delta^J\alpha_{d'}^{\tau_2}
\right)
:
\tau_1,\tau_2\in\bZ_e
\right\}.
\]
We next show that
$|D(A_{s_1},A_{s_2})|=e^2$
by considering the cases
$d\ne d'$ and $d=d'$ separately.

\emph{Case 1:}
Suppose that $d\ne d'$.
It suffices to show that all elements of
$D(A_{s_1},A_{s_2})$
are distinct.

Let
$(\tau_1,\tau_2),(\tau'_1,\tau'_2)\in\bZ_e^2$
with
$(\tau_1,\tau_2)\ne(\tau'_1,\tau'_2)$.
If
$\tau_1-\tau_2\ne\tau'_1-\tau'_2$,
then the elements of
$D(A_{s_1},A_{s_2})$
corresponding to
$(\tau_1,\tau_2)$
and
$(\tau'_1,\tau'_2)$
are distinct.
Hence, we may assume that
$\tau_1-\tau_2=\tau'_1-\tau'_2$.
Then there exists
$\Delta\in\bZ_e\setminus\{0\}$
such that
\[
(\tau'_1,\tau'_2)
=
(\tau_1+\Delta,\tau_2+\Delta).
\]
Assume, for a contradiction, that the elements of
$D(A_{s_1},A_{s_2})$
corresponding to
$(\tau_1,\tau_2)$
and
$(\tau'_1,\tau'_2)$
coincide.
Comparing the third coordinates, we obtain
\begin{equation}
\label{eqn_d_d'}
\frac{v}{d}\delta^I\alpha_d^{\tau_1}
-
\frac{v}{d'}\delta^J\alpha_{d'}^{\tau_2}
=
\frac{v}{d}\delta^I\alpha_d^{\tau_1+\Delta}
-
\frac{v}{d'}\delta^J\alpha_{d'}^{\tau_2+\Delta}.
\end{equation}
Write
\[
d=p_1^{r_1}p_2^{r_2}\cdots p_\ell^{r_\ell}
\quad\text{and}\quad
d'=p_1^{r'_1}p_2^{r'_2}\cdots p_\ell^{r'_\ell},
\]
where \(0\le r_j,r'_j\le m_j\).
Since \(d\ne d'\), there exists \(1\le i\le\ell\)
such that \(r_i\ne r'_i\).
Without loss of generality, assume that \(r_i>r'_i\).
Reducing \eqref{eqn_d_d'} modulo \(p_i^{m_i}\), we obtain
\[
\frac{v}{d'}\delta^J
\left(
\alpha_{d'}^{\tau_2+\Delta}
-\alpha_{d'}^{\tau_2}
\right)
\equiv
\frac{v}{d}\delta^I
\left(
\alpha_d^{\tau_1+\Delta}
-\alpha_d^{\tau_1}
\right)
\pmod{p_i^{m_i}}.
\]
Hence
\begin{equation}
\begin{split}
p_i^{r_i-r'_i}
& \prod_{\substack{1\le j\le\ell\\ j\ne i}}
p_j^{m_j-r'_j}\delta^J
\left(
\alpha_{d'}^{\tau_2+\Delta}
-\alpha_{d'}^{\tau_2}
\right)
\\
\equiv
& \prod_{\substack{1\le j\le\ell\\ j\ne i}}
p_j^{m_j-r_j}\delta^I
\left(
\alpha_d^{\tau_1+\Delta}
-\alpha_d^{\tau_1}
\right)
\pmod{p_i^{r_i}}
\\
\overset{(a)}{\Rightarrow}\quad&
0\equiv
\alpha_d^{\tau_1+\Delta}
-\alpha_d^{\tau_1}
\pmod{p_i}
\\
\overset{(b)}{\Rightarrow}\quad&
0\equiv
\delta_i^{(\tau_1+\Delta)f_i}
-\delta_i^{\tau_1f_i}
\pmod{p_i}.
\end{split}
\end{equation}
The implication marked $(a)$ follows from
$r_i>r'_i$ and
$\gcd(p_i,p_j)=1$ for $i\ne j$,
and the implication marked $(b)$ follows from
\eqref{eqn_alpha_v}.
The last congruence implies
$\delta_i^{\Delta f_i}\equiv1\pmod{p_i}$.
Since $\delta_i$ is a primitive root modulo $p_i$
and $p_i-1=ef_i$, we have $e\mid\Delta$,
contrary to
$\Delta\in\bZ_e\setminus\{0\}$.
Thus all elements of
$D(A_{s_1},A_{s_2})$
are distinct in this case.

\emph{Case 2:}
Suppose that $d=d'$.
Write
\[
\Theta(s_1)=\frac{v}{d}C_I^{(d)}
\quad\text{and}\quad
\Theta(s_2)=\frac{v}{d}C_J^{(d)}.
\]
Since $\Theta$ is injective and $s_1\ne s_2$,
we have
$C_I^{(d)}\ne C_J^{(d)}$.

As in Case 1, it suffices to show that all elements of
$D(A_{s_1},A_{s_2})$
are distinct.
Let
$(\tau_1,\tau_2),(\tau'_1,\tau'_2)\in\bZ_e^2$
with
$(\tau_1,\tau_2)\ne(\tau'_1,\tau'_2)$.
If
$\tau_1-\tau_2\ne\tau'_1-\tau'_2$,
then the elements of
$D(A_{s_1},A_{s_2})$ corresponding to 
$(\tau_1,\tau_2)$
and
$(\tau'_1,\tau'_2)$ 
are distinct.
Hence, as in Case 1, we may assume that
$(\tau'_1,\tau'_2)
=
(\tau_1+\Delta,\tau_2+\Delta)$
for some
$\Delta\in\bZ_e\setminus\{0\}$.

Assume, for a contradiction, that the elements of
$D(A_{s_1},A_{s_2})$ corresponding to
$(\tau_1,\tau_2)$
and
$(\tau'_1,\tau'_2)$
coincide.
Then, comparing the third coordinates, we have
\[
\frac{v}{d}\delta^I\alpha_d^{\tau_1}
-
\frac{v}{d}\delta^J\alpha_d^{\tau_2}
=
\frac{v}{d}\delta^I\alpha_d^{\tau_1+\Delta}
-
\frac{v}{d}\delta^J\alpha_d^{\tau_2+\Delta}.
\]
Equivalently, in $\bZ_{d}$,
\[
\alpha_d^{\tau_1}
\left(1-\alpha_d^\Delta\right)
\left(
\delta^I
-
\delta^J\alpha_d^{\tau_2-\tau_1}
\right)
=0.
\]
By \eqref{eqn_alpha_v},
the reduction of $\alpha_d$
modulo each prime divisor of $d$
has order $e$.
Since $\Delta\in\bZ_e\setminus\{0\}$,
we have $\alpha_d^\Delta\not\equiv1\pmod p$
for every prime divisor $p$ of $d$.
Hence
$1-\alpha_d^\Delta$
is a unit in $\bZ_{d}$.
Also, $\alpha_d^{\tau_1}$ is a unit.
It follows that
\[
\delta^I
=
\delta^J\alpha_d^{\tau_2-\tau_1}.
\]
Since $\alpha_d^{\tau_2-\tau_1}\in C^{(d)}$, this implies
$C_I^{(d)}=C_J^{(d)}$,
contrary to $C_I^{(d)}\ne C_J^{(d)}$.
Thus all elements of
$D(A_{s_1},A_{s_2})$
are distinct in this case.
Finally, if $s_2\ne s'_2$, then the first coordinates of the elements of
$D(A_{s_1},A_{s_2})$ and $D(A_{s_1},A_{s'_2})$ are
$s_1-s_2$ and $s_1-s'_2$, respectively.
Hence these two difference lists are disjoint.
Since $s_1$ was arbitrary, every external difference from a fixed block
to the other blocks occurs at most once.
Therefore the BGSEDF condition holds with $\lambda=1$.
\end{IEEEproof}

\begin{corollary}\label{cor:Gcyc}
Let
$v=p_1^{m_1}p_2^{m_2}\cdots p_\ell^{m_\ell}$,
where
$2<p_1<p_2<\cdots<p_\ell$
are primes and
$m_1,m_2,\ldots,m_\ell$
are positive integers.
Let $e>1$ be a common factor of
$p_1-1,p_2-1,\ldots,p_\ell-1$.
Then there exists a systematic G-optimal strong AMD code
with parameters
\[
\left(
\frac{v-1}{e},
\,v(v-1),
\,e^{(v-1)/e},
\,\frac1e
\right).
\]
\end{corollary}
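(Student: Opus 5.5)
The plan is to obtain Corollary~\ref{cor:Gcyc} by combining Theorem~\ref{thm:cycBGSEDF} with Proposition~\ref{prop_sysAMD_sysBGSEDF}, in the same way as Corollaries~\ref{cor_cyclotomic} and~\ref{cor:Fq1qtAMD}.

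First I will check that the hypotheses of the corollary match the setting used to build the blocks~\eqref{eq:Zv(v-1)_As}.
\begin{itemize}
\item Since $v$ is odd with prime factors $2<p_1<\cdots<p_\ell$, we may fix primitive roots $\delta_i$ modulo every power $p_i^j$.
\item Since $e>1$ divides every $p_i-1$, we may write $p_i-1=ef_i$.
\item The elements $\alpha_d$ of~\eqref{eqn_alpha_v} are therefore defined for every divisor $d>1$ of $v$, and each has order $e$ modulo $d$.
\item By the cited result of Zeng et al., the collection $\Psi$ in~\eqref{eq:psi-partition} partitions $\bZ_v\setminus\{0\}$ into exactly $(v-1)/e$ classes.
\end{itemize}
The last point guarantees that a bijection $\Theta:\bZ_{(v-1)/e}\to\Psi$ exists, so the blocks $A_s$ are well defined.

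Next I apply Theorem~\ref{thm:cycBGSEDF}. It gives a systematic $\bigl(v(v-1),e^{(v-1)/e},1\bigr)$-BGSEDF in $G=\bZ_{(v-1)/e}\times\bZ_e\times\bZ_v$. I will identify this with the notation of Definition~\ref{def_systematic} as follows.
\begin{itemize}
\item Take $S=\bZ_{(v-1)/e}$, $G_1=\bZ_e$ and $G_2=\bZ_v$.
\item Define $f(s,\tau)=\frac{v}{d}\delta^I\alpha_d^\tau$, where $\Theta(s)=\frac vd C_I^{(d)}$.
\item Then $|G|=\frac{v-1}{e}\cdot e\cdot v=v(v-1)$ and $|G_1|=e$.
\end{itemize}

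Finally, Proposition~\ref{prop_sysAMD_sysBGSEDF} converts this systematic BGSEDF into a systematic G-optimal strong AMD code. Its parameters are $m=(v-1)/e$, $n=v(v-1)$, block sizes $e^{(v-1)/e}$, and tampering probability $1/|G_1|=1/e$.

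No step here should be a real obstacle. The only point needing care is that $\Theta$ exists, which depends on $|\Psi|=(v-1)/e$. That count follows from the partition property recalled from \cite{zeng2013optimal}: every class in $\Psi$ has size $e$, since $C^{(d)}$ has order $e$ and multiplication by $v/d$ is injective on $\bZ_d^\times$, and the classes cover the $v-1$ nonzero elements of $\bZ_v$.
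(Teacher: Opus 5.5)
Your proposal is correct and matches the paper's proof: the paper also obtains the corollary by taking the systematic $\bigl(v(v-1),e^{(v-1)/e},1\bigr)$-BGSEDF from Theorem~\ref{thm:cycBGSEDF} and converting it with Proposition~\ref{prop_sysAMD_sysBGSEDF}. Your additional checks are sound and make explicit what the paper leaves implicit, namely the existence of $\Theta$ from $|\Psi|=(v-1)/e$ and the identification $S=\bZ_{(v-1)/e}$, $G_1=\bZ_e$, $G_2=\bZ_v$.
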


\begin{IEEEproof}
The result follows immediately from
Proposition~\ref{prop_sysAMD_sysBGSEDF}
and Theorem~\ref{thm:cycBGSEDF}.
\end{IEEEproof}

\begin{example}
\label{ex:v65}
Let $v=65=5\cdot13$ and $e=4$.
Since $(v-1)/e=16$, Corollary~\ref{cor:Gcyc}
yields a systematic G-optimal strong
$(16,4160,4^{16},\tfrac14)$-AMD code.
\end{example}

Corollary~\ref{cor:Gcyc}
provides systematic G-optimal strong AMD codes for composite integers $v$
with at least two distinct prime factors,
thereby considerably extending the range of parameter sets obtainable from the finite-field constructions of the previous subsections.

\section{Concluding Remarks}\label{sec-conclusion}

In this paper, we constructed optimal algebraic manipulation detection
(AMD) codes from several classes of external difference families and
bounded generalized strong external difference families arising from
cyclotomic structures over finite fields, direct products of finite
fields, and residue class rings of integers.

\begin{enumerate}
\item Weak AMD codes: 

\ \ \ We gave a cyclotomic construction of EPDFs, leading to constructions of weak AMD codes.
We derived a necessary and sufficient criterion for these EPDFs to form EDFs
and demonstrated its effectiveness by obtaining explicit characterizations for small block sizes.
Our results yielded new families of EDFs as well as concrete examples verified
through quadratic and biquadratic residue computations.
As an immediate consequence, these results yield new R-optimal weak AMD codes. 

\ \ \ \ Looking forward, several research directions remain open. 
First, while our analysis provided explicit criteria for small values of $k$, 
extending these results to larger parameters ($k \ge 8$) poses significant challenges. 
The necessary and sufficient conditions appear to become increasingly complicated, 
and it is unlikely that simple residue-based characterizations, such as those obtained for $k=7$, 
can be achieved in general. 
Developing systematic approaches for handling these higher cases -- 
possibly through deeper algebraic number theory or computational methods -- 
remains an important task. 
Second, efficient algorithms for determining cyclotomic intersections would enable
the verification of larger examples and broaden the known classes of EDFs. 

\item Systematic strong AMD codes: 

\ \ \ We constructed three classes of bounded generalized strong external difference families
via cyclotomy over finite fields, direct products of finite fields, and residue class rings of integers.
These constructions yield many infinite families of systematic G-optimal strong AMD codes.

\ \ \ However, relatively little is known about bounded generalized strong external difference families
compared with difference families and external difference families.
The development of general construction methods for BGSEDFs remains a challenging open problem.
\end{enumerate}

Compared with error-correcting codes, our understanding of algebraic manipulation detection codes remains relatively limited, especially regarding explicit constructions of optimal AMD codes.
Further advances in explicit constructions of optimal AMD codes are needed.

\section*{Acknowledgments}
The authors thank Prof. Ying Miao (University of Tsukuba) for many valuable discussions. 


 
%
\bibliographystyle{IEEEtran}
\bibliography{MFBib.bib}

\end{document}